\definecolor{TUMBlau}{HTML}{0065bd}
\theoremstyle{plain}
\newtheorem{theorem}{Theorem}
\newtheorem{lemma}[theorem]{Lemma} 
\newtheorem{corollary}[theorem]{Corollary}  
\newtheorem{proposition}[theorem]{Proposition}  
\newtheorem{definition}[theorem]{Definition}
\newtheorem{mainlemma}{Main Lemma}
\newtheorem*{loadlemma}{Load Lemma}
\newcommand{\BN}{\ensuremath{\mathbb{N}}\xspace}
\newcommand{\bE}{\ensuremath{\mathbf{E}}\xspace}
\newcommand{\bP}{\ensuremath{\mathbf{P}}\xspace}
\newcommand{\CJ}{\ensuremath{\mathcal{J}}\xspace}
\newcommand{\OPT}{{\mathrm{OPT}}}
\author[1]{Susanne Albers}
\author[2]{Maximilian Janke}
\affil[1]{Department of Computer Science, Technical University of Munich\\
\href{mailto:albers@in.tum.de}{albers@in.tum.de}}
\affil[2]{Department of Computer Science, Technical University of Munich\\
\href{mailto:maximilian.janke@in.tum.de}{janke@in.tum.de}}
\date{}
\title{Scheduling in the Random-Order Model\footnote{Work supported by the European Research Council, Grant Agreement No. 691672, project APEG.}}
\begin{document}

\maketitle

\thispagestyle{empty}

\begin{abstract}
Makespan minimization on identical machines is a fundamental problem in online scheduling. The goal is to
assign a sequence of jobs to $m$ identical parallel machines so as to minimize the maximum completion time of any job.
Already in the 1960s, Graham showed that {\em Greedy\/} is $(2-1/m)$-competitive~\cite{G}. The best deterministic
online algorithm currently known achieves a competitive ratio of 1.9201~\cite{FW}. No deterministic online
strategy can obtain a competitiveness smaller than 1.88~\cite{R}.

In this paper, we study online makespan minimization in the popular random-order model, where the jobs of
a given input arrive as a random permutation. It is known that {\em Greedy\/} does not attain a
competitive factor asymptotically smaller than~2 in this setting~\cite{OT}. We present the first improved
performance guarantees. Specifically, we develop a deterministic online algorithm that achieves a competitive
ratio of 1.8478. The result relies on a new analysis approach. We identify a set of properties that a random
permutation of the input jobs satisfies with high probability. Then we conduct a worst-case analysis
of our algorithm, for the respective class of permutations. The analysis implies that the stated competitiveness
holds not only in expectation but with high probability. Moreover, it provides mathematical evidence that
job sequences leading to higher performance ratios are extremely rare, pathological inputs. We complement 
the results by lower bounds, for the random-order model. We show that no deterministic online algorithm 
can achieve a competitive ratio smaller than 4/3. Moreover, no deterministic online algorithm can attain 
a competitiveness smaller than 3/2 with high probability.
\end{abstract}

\clearpage
\setcounter{page}{1}
\newpage
\section{Introduction}
We study one of the most basic scheduling problems. Consider a sequence of jobs ${\cal J}  = J_1, \ldots, J_n$ that
has to be assigned to $m$ identical parallel machines. Each job $J_t$ has an individual processing time $p_t$, $1\leq t\leq n$.
Preemption of jobs is not allowed. The goal is to minimize the {\em makespan\/}, i.e.\ the maximum completion time of any
job in the constructed schedule. Both the offline and online variants of this problem have been studied extensively,
see e.g.~\cite{BFKV,EOW,FW,G,HS,PST} and references therein.

We focus on the online setting, where jobs arrive one by one. Whenever a job $J_t$ is presented, its processing time
$p_t$ is revealed. The job has to be scheduled immediately on one of the machines without knowledge of any future jobs $J_s$,
with $s>t$. Given a job sequence ${\cal J}$, let $A({\cal J})$ denote the makespan of an online algorithm $A$ on ${\cal J}$.
Let ${\mathit OPT}({\cal J})$ be the optimum makespan. A deterministic online algorithm $A$ is {\em $c$-competitive\/} if
$A({\cal J}) \leq c\cdot {\mathit OPT\/}({\cal J})$ holds for all ${\cal J}$~\cite{ST}. The best competitive ratio that can
be achieved by deterministic online algorithms is in the range $[1.88,1.9201]$. No randomized online algorithm is known
that beats deterministic ones, for general $m$.

In this paper we investigate online makespan minimization in the random-order model. Here an input instance / job sequence
is chosen by an adversary. Then a random permutation of the input elements / jobs arrives. The random-order model was
considered by Dynkin~\cite{D} and Lindley~\cite{L} for the secretary problem. Over the last years the framework has received
quite some research interest and many further problems have been studied. These include generalized secretary
problems~\cite{BIK1,BIK2,FSZ,K2,L}, the knapsack problem~\cite{BIK1,K2}, bin packing~\cite{K}, facility location~\cite{M},
matching problems~\cite{GM,KMT,MQ}, packing LPs~\cite{KRTV} and convex optimization~\cite{GMM}.

We present an in-depth study of online makespan minimization in the random-order model. As a main contribution we
devise a new deterministic online algorithm that achieves a competitive ratio of~1.8478. After almost 20 years this 
is the first progress for the pure online setting, where an algorithm does not resort to extra resources in handling 
a job sequence.

\smallskip

{\bf Previous work:}
We review the most important results relevant to our work and first address the standard setting where an online algorithm
must schedule an arbitrary, worst-case job sequence. Graham in 1966 showed that the famous {\em Greedy\/} algorithm, which
assigns each job to a least loaded machine, is $(2-{1\over m})$-competitive. Using new deterministic strategies the competitiveness
was improved in a series of papers. Galambos and Woeginger~\cite{GW} gave an algorithm with a competitive ratio of
$(2-{1\over m}-\epsilon_m)$, where $\epsilon_m$ tends to 0 as $m\rightarrow \infty$. Bartal et al.~\cite{BFKV}
devised a 1.986-competitive algorithm. The bound was improved to 1.945~\cite{KPT} and 1.923~\cite{A}.
Fleischer and Wahl~\cite{FW} presented an algorithm that attains a competitive ratio of 1.9201 as $m\rightarrow \infty$.
Chen et al.~\cite{CYZ} gave an algorithm whose competitiveness is at most $1+\varepsilon$ times the best possible factor,
but no explicit bound was provided. Lower bounds on the competitive ratio of deterministic online algorithms
were shown in~\cite{A,BKR,FKT,GRTW,R,RC}. For general $m$, the bound was raised from 1.707~\cite{FKT} to 1.837~\cite{BKR}
and 1.854~\cite{GRTW}. Rudin~\cite{R} showed that no deterministic strategy has a competitiveness smaller than 1.88.

For randomized online algorithms, there is a significant gap between the best known upper and
lower bounds. For $m=2$ machines, Bartal et al.~\cite{BFKV} presented an algorithm that achieves an
optimal competitive ratio of 4/3. To date, there exists no randomized algorithm whose competitiveness is smaller than the
deterministic lower bound, for general $m$. The best known lower bound on the performance of randomized online algorithms
tends to $e/(e-1)\approx 1.581$ as $m\rightarrow \infty$~\cite{CVW,Sg}.

Recent research on makespan minimization has examined settings where an online algorithm is given extra
resources when processing a job sequence. Specifically, an algorithm might have a buffer to reorder the
incoming job sequence~\cite{EOW,KKST} or is allowed to migrate jobs~\cite{SSS}. Alternatively, an
algorithm has information on the job sequence~\cite{CKK,D2,KK,KKST}, e.g.\ it might know the total
processing time of the jobs or even the optimum makespan.

In the random-order model only one result is known for makespan minimization on identical machines. Osborn
and Torng~\cite{OT} showed that {\em Greedy\/} does not achieve a competitive ratio smaller than~2 as
$m\rightarrow \infty$. Recently Molinaro~\cite{M2} studied online load balancing with the objective to
minimize the $l_p$-norm of the machine loads. He considers a general scenario with machine-dependent job
processing times, which are bounded by $1$. For makespan minimization he presents an algorithm that, in the worst case, is
$O(\log m/\varepsilon)$-competitive and, in the random-order model, has an expected makespan of
$(1+\varepsilon) {\mathit OPT}({\cal J}) + O(\log m/\varepsilon)$, for any $\varepsilon \in (0,1]$. G\"obel et al.\ \cite{gobel2015online} consider a scheduling problem on one machine where the goal is to minimize the average weighted completion time of all jobs. Under random-order arrival, their competitive ratio is logarithmic in $n$, the number of jobs, for the general problem and constant if all jobs have processing time $1$.

\smallskip

{\bf Our contribution:} We investigate online makespan minimization in the random-order model, a sensible and
widely adopted input model to study algorithms beyond the worst case. Specifically, we develop a new deterministic
algorithm that achieves a competitive ratio of $1.8478$ as $m\rightarrow \infty$. This is the first improved
performance guarantee in the random-order model. The competitiveness is substantially below the best known ratio
of~1.9201 in the worst-case setting and also below the corresponding lower bound of 1.88 in that framework.

A new feature of our algorithm is that it schedules an incoming job on one of three candidate machines in order
to maintain a certain load profile. The best strategies in the worst-case setting use
two possible machines, and it is not clear how to take advantage of additional machines in that framework.
The choice of our third, extra machine is quite flexible: An incoming job is placed either on a least loaded,
a heavily loaded or -- as a new option -- on an intermediate machine. The latter one is the $(h+1)$-st least loaded
machine, where $h$ may be any integer with $h \in \omega(1)$ and $h\in o(\sqrt{m})$.

When assigning a job to a machine different from the least loaded one, an algorithm has to ensure that the
resulting makespan does not exceed $c$ times the optimum makespan, for the targeted competitive ratio $c$.
All previous strategies in the literature lower bound the optimum makespan by the current average load on the
machines. Our new algorithm works with a refined lower bound that incorporates the processing times of the largest
jobs seen so far. The lower bound is obvious but has not been employed by previous algorithms.

The analysis of our algorithm proceeds in two steps. First we define a class of {\em stable job sequences\/}.
These are sequences that reveal information on the largest jobs as processing volume is scheduled.
More precisely, once a certain fraction of the total processing volume $\sum_{t=1}^n p_t$ has arrived, one
has a good estimate on the $h$-th largest job and has encountered a certain number of the $m+1$ largest jobs
in the input. The exact parameters have to be chosen carefully.

We prove that with high probability, a random permutation of a given input of jobs is stable. We then conduct
a worst-case analysis of our algorithm on stable sequences. Using their properties, we show that if the algorithm generates a
flat schedule, like {\em Greedy\/}, and can be hurt by a huge job, then the input must contain many large jobs so that
the optimum makespan is also high. A new ingredient in the worst-case analysis is the processing time of the $h$-th largest
job in the input. We will relate it to machine load in the schedule and to the processing time of the $(m+1)$-st largest job; 
twice the latter value is a lower bound on the optimum makespan.

The analysis implies that the competitive ratio of 1.8478 holds with high probability. Input sequences leading to
higher performance ratios are extremely rare. We believe that our analysis approach might be fruitful in the
study of other problems in the random-order model: Identify properties that a random permutation of the input
elements satisfies with high probability. Then perform a worst-case analysis.

Finally in this paper we devise lower bounds for the random-order model. We prove that no deterministic online 
algorithm achieves a competitive ratio smaller than $4/3$. Moreover, if a deterministic online algorithm is
$c$-competitive with high probability, then $c\geq 3/2$.

\section{Strong competitiveness in the random-order model}\label{sec.strongcc}

We define competitiveness in the random-order model and introduce a stronger measure of competitiveness
that implies high-probability bounds. Recall that traditionally a deterministic online algorithm $A$ is
$c$-competitive if $A({\cal J})\leq c \cdot OPT({\cal J})$ holds for all job sequences
${\cal J} = J_1, \ldots, J_n$. We will refer to this worst-case model also as the {\em adversarial model}.

In the {\em random-order model} a job sequence ${\cal J} = J_1, \ldots, J_n$ is given, which may be specified
by an adversary. (Alternatively, a set of jobs could be specified.) Then a random permutation of the jobs arrives.
We define the expected cost / makespan of a deterministic online algorithm. Let $S_n$ be the permutation group
of the integers from $1$ to $n$, which we consider a probability space under the uniform distribution, i.e.\ each
permutation in $S_n$ is chosen with probability $1/n!$. Given $\sigma\in S_n$, let ${\cal J}^{\sigma} =
J_{\sigma(1)},\ldots, J_{\sigma(n)}$ be the \emph{job sequence permuted by $\sigma$}. The expected makespan of $A$ on
${\cal J}$ in the random-order model is
$A^\textrm{rom}({\cal J}) = \bE_{\sigma\sim S_n}[A({\cal J}^\sigma)] = {1\over n!} \sum_{\sigma \in S_n} A({\cal J}^\sigma)$.
The algorithm $A$ is {\em $c$-competitive in the random-order model\/} if
$A^\textrm{rom}({\cal J})\leq c \cdot OPT({\cal J})$ holds for all job sequences ${\cal J}$.

We next define the notion of a deterministic online algorithm $A$ being {\em nearly $c$-competitive\/}. The second condition in
the following definition requires that the probability of $A$ not meeting the desired performance ratio must be arbitrarily
small as $m$ grows and a random permutation of a given job sequence arrives. The subsequent Lemma~\ref{lem:comp} states
that a nearly $c$-competitive algorithm is $c$-competitive in the random-order model.

\begin{definition}\label{def:comp}
A deterministic online algorithm $A$ is called {\em nearly $c$-competitive\/} if the following two conditions hold.\\[-10pt]
\begin{itemize}
\item The algorithm $A$ achieves a constant competitive ratio in the adversarial model.\\[-10pt]
\item For every $\varepsilon >0$, there exists an $m(\varepsilon)$ such that for all machine numbers $m \geq m(\varepsilon)$
and all job sequences ${\cal J}$ there holds $\bP_{\sigma\sim S_n} [A({\cal J}^\sigma) \geq (c+\varepsilon) OPT({\cal J})] \leq \varepsilon$.
\end{itemize}
\end{definition}
\begin{lemma}\label{lem:comp}
If a deterministic online algorithm is nearly $c$-competitive, then it is $c$-competitive in the random-order model as $m\rightarrow \infty$.
\end{lemma}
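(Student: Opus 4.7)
The plan is a standard two-regime bound on the expectation: on the large ``good'' event the random-order makespan is close to $c\cdot \OPT(\CJ)$, and on the small ``bad'' event we fall back on the adversarial competitive ratio, which by hypothesis is a constant independent of $m$. The dependence on $m$ enters only through the probability of the bad event and through the slack $\varepsilon$, both of which shrink to zero.

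Fix a constant $C$ such that $A(\CJ)\leq C\cdot\OPT(\CJ)$ for every job sequence $\CJ$; such a $C$ exists by the first clause of \cref{def:comp}. Now fix $\varepsilon>0$ arbitrary and let $m(\varepsilon)$ be provided by the second clause. For every job sequence $\CJ$ and every $m\geq m(\varepsilon)$, let $E$ denote the event $\{A(\CJ^\sigma)\geq (c+\varepsilon)\OPT(\CJ)\}$, so that $\bP_{\sigma\sim S_n}[E]\leq\varepsilon$. Splitting the expectation according to $E$ and its complement, and using the adversarial bound on $E$ and the definition of $E$ on its complement, yields
\[
A^{\rom}(\CJ) \;=\; \bE_{\sigma\sim S_n}\!\bigl[A(\CJ^\sigma)\bigr] \;\leq\; \bP[E]\cdot C\cdot\OPT(\CJ)\,+\,\bP[E^c]\cdot(c+\varepsilon)\,\OPT(\CJ) \;\leq\; \bigl(c+\varepsilon+\varepsilon C\bigr)\,\OPT(\CJ).
\]

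Since this inequality holds for every $\CJ$ as soon as $m\geq m(\varepsilon)$, we obtain $\limsup_{m\to\infty}\sup_{\CJ} A^{\rom}(\CJ)/\OPT(\CJ)\leq c+\varepsilon(1+C)$. As $C$ is a fixed constant and $\varepsilon>0$ was arbitrary, letting $\varepsilon\to 0$ gives the claimed bound $A^{\rom}(\CJ)\leq c\cdot\OPT(\CJ)$ in the limit $m\to\infty$.

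There is no real obstacle here; the only subtlety is making sure the two quantifier orders (``for all $\CJ$, in the random-order expectation'' versus ``for every $\varepsilon$, eventually in $m$'') compose correctly, which is exactly why the first clause of \cref{def:comp} insists on a competitive ratio in the adversarial model that is \emph{constant}, i.e.\ uniform in $m$. Without that uniformity the contribution $\varepsilon\cdot C$ of the bad event could not be absorbed.
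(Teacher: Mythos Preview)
Your proof is correct and follows essentially the same approach as the paper: split the expectation into the good event (bounded by $(c+\varepsilon)\OPT$) and the bad event (bounded by the adversarial constant $C\cdot\OPT$), then let $\varepsilon\to 0$. The only cosmetic difference is that the paper fixes a target slack $\delta$ upfront and solves for the right $\varepsilon=\delta/(C-c+1)$, whereas you take the limit at the end; both arrive at the same conclusion.
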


\begin{proof}
Let $C$ be the constant such that $A$ is $C$-competitive in the adversarial model. We may assume that $C>c$.
Given $0<\delta \leq C-c$, we show that there exists an $m(\delta)$ such that, for all $m\geq m(\delta)$, we have
$A^\textrm{rom}({\cal J})\leq (c+\delta) OPT({\cal J})$  for every job sequences ${\cal J}$. Let
$\varepsilon = \delta/(C-c+1)$. Since $A$ is nearly $c$-competitive, there exists an $m(\varepsilon)$ such
that, for all $m\geq m(\varepsilon)$ and all inputs ${\cal J}$, there holds
$P_{\varepsilon}({\cal J}) =\bP_{\sigma\sim S_n} [A({\cal J}^\sigma) \geq (c+\varepsilon) OPT({\cal J})]
\leq \varepsilon$. Set $m(\delta) = m(\varepsilon)$. We obtain
\begin{align*}
A^\textrm{rom}({\cal J})&\leq (1-P_{\varepsilon}({\cal J}))(c+\varepsilon)OPT({\cal J}) +
P_{\varepsilon}({\cal J})\cdot C \cdot OPT({\cal J})\\
&\leq  ((1-\varepsilon)(c+\varepsilon)+\varepsilon C) OPT({\cal J})\\
&\leq  (c+\varepsilon(C-c+1))OPT({\cal J})\\
&= (c+\delta) OPT({\cal J}). \qedhere
\end{align*}
\end{proof}

\section{Description of the new algorithm}\label{sec:alg}
The deficiency of {\em Greedy\/} is that it tends to generate a flat, balanced schedule in which all the machines
have approximately the same load. An incoming large job can then enforce a high makespan relative to the optimum one. 
It is thus crucial to try to avoid flat schedules and maintain steep schedules that exhibit a certain load imbalance 
among the machines.

However, in general, this is futile. Consider a sequence of $m$ identical jobs with a processing time of, say, $P_{m+1}$ (referring to the size of the $(m+1)$-st largest job
in an input). Any online algorithm that is better than $2$-competitive must schedule these $m$ jobs on separate machines, obtaining the flattest schedule 
possible. An incoming even larger job of processing time $p_\mathrm{max}$ will now enforce a makespan of $P_{m+1}+p_\mathrm{max}$. 
Observe that $\OPT\ge\max\{2P_{m+1},p_\mathrm{max}\}$ since there must be one machine containing two jobs. In particular $P_{m+1}+p_\mathrm{max}\le  1.5\OPT$.
Hence sensible online algorithms do not perform badly on this sequence.

This example summarizes the quintessential strategy of online algorithms that are good on all sequences: Ensure that in order to create a schedule that is very flat, 
i.e.\ such that all machines have high load $\lambda$, the adversary must present $m$ jobs that all are large relative to $\lambda$. In order to exploit this very flat schedule and cause a high makespan the adversary needs to follow up with yet another large job. But with these $m+1$ jobs, the optimum scheduler runs into the same problem as in the example: Of the $m+1$ large jobs, two have to be scheduled on the same machine. Thus the optimum makespan is high, compensating to the high makespan of the algorithm.

Effectively realizing the aforementioned strategy is highly non-trivial. In fact it is the central challenge in previous works on adversarial makespan minimization that improve upon \emph{Greedy} \cite{A,BFKV,FW,GW,KPT}. These works gave us clear notions of how to avoid flat schedules, which form the basis for our approaches. Instead of simply rehashing these ideas, we want to outline next how we profit from random-order arrival in particular. 

\subsection{How random-order arrival helps}
The first idea to profit from random-order arrival addresses the lower bound on $\OPT$ sophisticated online algorithms need. In the literature only the current average load has been considered, but under random-order arrival another bound comes to mind: The largest job seen so far. In order for an algorithm to perform badly, a large job needs to come close to the end of the sequence. Under random-order arrival, it is equally likely for such a job to arrive similarly close to the beginning of the sequence. In this case, the algorithm knows a better lower bound for $\OPT$. The main technical tool will be our \emph{Load Lemma}, which allows us to relate what a job sequence 
should reveal early from an analysis perspective to the actual fraction of jobs scheduled. This idea does not work for worst-case orders since they tend to order 
jobs by increasing processing times.

Recall that the general challenge of our later analysis will be to establish that there had to be $m$ large jobs once the schedule gets very flat. In classical analyses, which consider worst-case orders, these jobs appear with increasing density towards the end of the sequence. In random orders this is unlikely, which can be exploited by the algorithm.

The third idea improves upon the first idea. Suppose, that we were to modify our algorithm such that it could handle one very large job arriving close to the end of the sequence. In fact, assume that it could only perform badly when confronted with $h$ very large jobs. We can then disregard any sequence which contains fewer such jobs. Recall that the first idea requires one very large job to arrive sufficiently close to the beginning. Now, as $h$ grows, the probability of the latter event grows as well and approaches $1$. This will not only improve our competitive ratio tremendously, it also allows us to adhere to the stronger notion of nearly competitiveness introduced in \Cref{sec.strongcc}. Let us discuss how such a modification is possible: The first step is to design our algorithm in a way that it is reluctant to use the $h$ least loaded machines. Intuitively, if the algorithm tries to retain machines of small load it will require very large jobs to fill them. In order to force these filling jobs to actually be large enough, our algorithm needs to use a very high lower bound for $\OPT$. In fact, here it uses another lower bound for the optimum makespan, $2P_{m+1}^t$, twice the $(m+1)$-st largest job seen so far at time $t$. Common analysis techniques can only make predictions about $P_{m+1}^t$ at the very end of the sequence. It requires very subtle use of the random-order model to work around this.

\subsection{Formal definition}

Formally our algorithm ${\mathit ALG\/}$ is nearly $c$-competitive, where $c$ is the unique real root of the polynomial
$Q[x] = 4x^3-14x^2+16x -7$, i.e.
$$\textstyle c= {7 + \sqrt[3]{28 - 3\sqrt{87}} + \sqrt[3]{28 + 3\sqrt{87}} \over 6} < 1.8478.$$
Given ${\cal J}$, ${\mathit ALG\/}$ is presented with a job sequence/permutation ${\cal J}^\sigma =
J_{\sigma(1)}, \ldots, J_{\sigma(n)}$ that must be scheduled in this order.
Throughout the scheduling process ${\mathit ALG\/}$ always maintains a list of the machines sorted in
non-increasing order of current \emph{load}. At any time the load of a machine is the sum of the processing
times of the jobs already assigned to it. After ${\mathit ALG\/}$ has processed the first $t$ jobs
$J_{\sigma(1)},\ldots, J_{\sigma(t)}$, let $M_1^t, \ldots, M_m^t$ be any ordering of the $m$ machines according to
non-increasing load. More specifically, let $l_j^t$ denote the load of machine $M_j^t$. Then
$l_1^t\geq \ldots \geq l_m^t$ and $l_1^t$ is the makespan of the current schedule.

${\mathit ALG\/}$ places each incoming job $J_{\sigma(t)}$, $1\leq t \leq n$,  on one of three candidate machines. The choice of one
machine, having an intermediate load, is flexible. Let $h=h(m)$ be an integer with $h(m)\in \omega(1)$ and
$h(m)\in o(\sqrt{m})$. We could use e.g.\  $h(m) = \lfloor \sqrt[3]{m} \rfloor$ or
$h(m) = \lfloor \log m \rfloor$. Let $$i = \lceil (2c-3)m \rceil +h \approx 0.6956m.$$
${\mathit ALG\/}$ will assign the incoming job to the machine with the smallest load, the $(h+1)$-st smallest
load or the $i$-th largest load.

When scheduling a job on a machine that is different from the least loaded one, an algorithm has to ensure
that the resulting makespan does not exceed $c^*$ times the optimum makespan, where $c^*$ is the desired
competitiveness. All previous algorithms lower bound the optimum makespan by the current average machine load.
Algorithm ${\mathit ALG\/}$ works with a refined lower bound that incorporates the processing time of the
largest job and twice the processing time of the $(m+1)$-st largest job seen so far. These lower bounds on
the optimum makespan are immediate but have not been used in earlier strategies.

Formally, for $j=1,\ldots, m$, let $L_j^t$ be the \emph{average load} of the $m-j+1$ least loaded machines
$M_j^t, \ldots, M_m^t$, i.e.\ $L_j^t = {1\over m-j+1} \sum_{r=j}^m l_r^t$. We let
$L^t = L_1^t = {1\over m } \sum_{s=1}^t p_s$ be the average load of all the machines. For any
$j=1,\ldots, n$, let $P_j^t$ be the processing time of the $j$-th largest job among the first
$t$ jobs $J_{\sigma(1)}, \ldots, J_{\sigma(t)}$ in ${\cal J}^\sigma$. If $t<j$, we set $P_j^t = 0$. We let
$p^t_{\max} = P_1^t$ be the processing time of the largest job among the first $t$ jobs in ${\cal J}^\sigma$. 
Finally, let $L =L^n$, $P_j = P_j^n$ and $p_{\max} = p_{\max}^n$. 

The value $O^t = \max\{L^t, p_{\max}^t, 2P_{m+1}^t\}$ is a common lower bound on the optimum makespan
for the first $t$ jobs and hence $OPT({\cal J})$, see Proposition~\ref{prop:opt} in the next section.
Note that immediately before $J_{\sigma(t)}$ is scheduled, ${\mathit ALG\/}$ can compute $L^t$ and hence
$O^t$ because $L^t$ is $1/m$ times the total processing time of the jobs that have arrived so far.

We next characterize load imbalance. Let $$k = 2i-m \approx (4c-7)m \approx 0.3912m$$ and
$$\alpha = {2(c-1) \over 2c-3} \approx 2.7376.$$ The {\em schedule at time $t$\/} is the one immediately
before $J_{\sigma(t)}$ has to be assigned. The schedule is {\em flat\/} if $l_k^{t-1} < \alpha L_{i+1}^{t-1}$. Otherwise
it is {\em steep\/}. Job $J_{\sigma(t)}$ is scheduled {\em flatly\/} ({\em steeply\/}) if the schedule at
time~$t$ is flat (steep).

${\mathit ALG\/}$ handles each incoming job $J_{\sigma(t)}$, with processing time $p_{\sigma(t)}$, as follows. 
If the schedule at time $t$ is steep, the job is placed on the least loaded machine $M_m^{t-1}$. On the other hand, 
if the schedule is flat, the machines $M_i^{t-1}$, $M_{m-h}^{t-1}$ and $M_m^{t-1}$ are probed in this order. If 
$l_i^{t-1} + p_{\sigma(t)} \leq c\cdot O^t$, then the new machine load on $M_i^{t-1}$ will not violate the desired 
competitiveness. The job is placed on this machine $M_i^{t-1}$. Otherwise, if the latter inequality is violated, 
${\mathit ALG\/}$ checks if a placement on $M_{m-h}^{t-1}$ is safe, i.e.\ if $l_{m-h}^{t-1} + p_{\sigma(t)} \leq c\cdot O^t$. 
If this is the case, the job is put on $M_{m-h}^{t-1}$. Otherwise, $J_{\sigma(t)}$ is finally scheduled on the least loaded
machine~$M_m^{t-1}$. A pseudo-code description of $\mathit{ALG}$ is given below. The job assignment rules are also illustrated
in Figures~\ref{fig:steep} and~\ref{fig:flat}.

\begin{algorithm}[ht]
\caption{The scheduling algorithm ${\mathit ALG}$}\label{alg:1}
\begin{algorithmic}[1]
\State Let $J_{\sigma(t)}$ be the next job to be scheduled.
\If {the schedule at time $t$ is steep}
\State Assign $J_{\sigma(t)}$ to the least loaded machine $M_m^{t-1}$;
 \Else \ // the schedule is flat \If {$l_i^{t-1} + p_{\sigma(t)} \leq c\cdot O^t$} Assign $J_{\sigma(t)}$ to $M_i^{t-1}$;
 \ElsIf {$l_{m-h}^{t-1} + p_{\sigma(t)} \leq c\cdot O^t$} Assign $J_{\sigma(t)}$ to $M_{m-h}^{t-1}$;
 \Else \ Assign $J_{\sigma(t)}$ to $M_m^{t-1}$;
\EndIf \EndIf
\end{algorithmic}
\end{algorithm}

\begin{figure}[H]{
\vspace{-22pt}
\begin{tikzpicture}[scale=0.7]

\def\y{250};
\def\offset{4};
\def\colour{Blue};
\def\colourOrange{Orange}
\foreach \x in {0,...,59} {

\ifnum\x=23
\xdef\colour{Cerulean};
\xdef\offset{16};
\pgfmathparse{\y-14};
  \xdef\y{\pgfmathresult};
\fi
\ifnum\x=41
\xdef\colour{SkyBlue};
\xdef\offset{4};
\pgfmathparse{\y-11};
  \xdef\y{\pgfmathresult};
\fi
\ifnum\x=59
\xdef\colour{Orange};
\fi

\ifx \colour \colourOrange
\draw[preaction={fill, \colour}, pattern=north west lines] (\x/3,0) rectangle (\x/3+1/3,\y/100);
\else
\draw[fill= \colour] (\x/3,0) rectangle (\x/3+1/3,\y/100);
\fi
\pgfmathparse{\y-rand^2*\offset};
  \xdef\y{\pgfmathresult};
}

\draw  (23/3,0) -- (23/3,2.5)node[above] {k};
\draw  (41/3,0) -- (41/3,2.5)node[above] {i};
\end{tikzpicture}
\vspace{-12pt}

\caption{A steep schedule. ${\mathit ALG\/}$ only considers the least loaded machine.} \label{fig:steep}

\begin{tikzpicture}[scale=0.7]

\def\y{250};
\def\colour{Blue};
\def\colourOrange{Orange}
\foreach \x in {0,...,59} {

\ifnum\x=23
\xdef\colour{Cerulean};
\pgfmathparse{\y-5};
  \xdef\y{\pgfmathresult};
\fi
\ifnum\x=40
\xdef\colour{Orange};
\fi
\ifnum\x=41
\xdef\colour{SkyBlue};
\pgfmathparse{\y-5};
  \xdef\y{\pgfmathresult};
\fi
\ifnum\x=57
\xdef\colour{Orange};
\fi
\ifnum\x=58
\xdef\colour{SkyBlue};
\fi
\ifnum\x=59
\xdef\colour{Orange};
\fi

\ifx \colour \colourOrange
\draw[preaction={fill, \colour}, pattern=north west lines] (\x/3,0) rectangle (\x/3+1/3,\y/100);
\else
\draw[fill= \colour] (\x/3,0) rectangle (\x/3+1/3,\y/100);
\fi

\pgfmathparse{\y-rand^2*4};
  \xdef\y{\pgfmathresult};
}

\draw  (23/3,0) -- (23/3,2.5)node[above] {k};
\draw  (41/3,0) -- (41/3,2.5)node[above] {i};
\end{tikzpicture}
}
\vspace{-12pt}
\caption{A flat schedule. The three machines considered by ${\mathit ALG\/}$ are marked for $h=2$.}
\label{fig:flat}
\end{figure}

In the next section we will prove the following theorem, \Cref{te.main}, which uses the notion from \Cref{sec.strongcc}. Lemma~\ref{lem:comp} then immediately gives the main result, \Cref{co.main}.
\begin{theorem}\label{te.main}
${\mathit ALG}$ is nearly $c$-competitive, with $c<1.8478$ defined as above.
\end{theorem}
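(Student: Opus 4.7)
The plan is to verify the two conditions of \Cref{def:comp} separately. The first is almost immediate: whenever $\mathit{ALG}$ places the current job $J_{\sigma(t)}$ on $M_i^{t-1}$ or $M_{m-h}^{t-1}$ the safety checks guarantee a resulting load of at most $c\cdot O^t\leq c\cdot\OPT(\CJ)$, and whenever $J_{\sigma(t)}$ falls through to $M_m^{t-1}$ the new load is bounded by $l_m^{t-1}+p_{\sigma(t)}\leq L^{t-1}+p_{\max}\leq 2\,\OPT(\CJ)$, because $l_m^{t-1}\leq L^{t-1}\leq\OPT(\CJ)$ and $p_{\max}\leq\OPT(\CJ)$. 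Hence $\mathit{ALG}$ is $2$-competitive in the adversarial model.

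For the second condition I would follow the two-stage ``stability'' blueprint announced in the introduction. First I would define a class of \emph{stable sequences}: $\CJ^\sigma$ is stable if, at each checkpoint indexed by the cumulative processing volume $\sum_{s\leq t}p_{\sigma(s)}$, the prefix $J_{\sigma(1)},\dots,J_{\sigma(t)}$ already contains a predictable fraction of the $m{+}1$ globally largest input jobs and of the top $h=h(m)$ input jobs. In particular $p_{\max}^t$ and $P_{m+1}^t$ approach their final values $p_{\max}$ and $P_{m+1}$ at the prescribed pace. Since $h(m)\in\omega(1)\cap o(\sqrt m)$, a standard concentration argument (Chernoff or hypergeometric tail bounds for sums of negatively correlated indicators) shows that for any $\varepsilon>0$ a random $\sigma\in S_n$ is stable with probability at least $1-\varepsilon$ once $m$ exceeds some threshold $m(\varepsilon)$.

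The heart of the proof is a deterministic worst-case analysis of $\mathit{ALG}$ on stable sequences, establishing $\mathit{ALG}(\CJ^\sigma)\leq(c+\varepsilon)\OPT(\CJ)$. Suppose for contradiction that some stable $\CJ^\sigma$ violates this bound and let $t^*$ be the step whose placement creates the offending makespan. By the safety checks, $J_{\sigma(t^*)}$ must have been placed on $M_m^{t^*-1}$: either the schedule at time $t^*$ was steep, or it was flat but all three candidate loads $\ell\in\{l_i^{t^*-1},l_{m-h}^{t^*-1},l_m^{t^*-1}\}$ failed the test $\ell+p_{\sigma(t^*)}\leq c\cdot O^{t^*}$. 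In the flat-failing case I would invoke the Load Lemma to turn these failed inequalities into strong lower bounds on $l_i^{t^*-1}$ and $l_{m-h}^{t^*-1}$, and hence on the total load of the top $i$ machines. Stability then guarantees that many of the globally largest $m+1$ jobs already sit among those machines, forcing $P_{m+1}^{t^*}$ itself to be large enough that $2P_{m+1}^{t^*}\leq O^{t^*}$ contradicts the assumed violation. The arithmetic of balancing $L^{t^*}$, $2P_{m+1}^{t^*}$, $l_{m-h}^{t^*-1}$ and $l_i^{t^*-1}$ against $c\cdot\OPT(\CJ)$ is precisely what produces the cubic $Q[x]=4x^3-14x^2+16x-7$ whose real root is the claimed $c$. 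The steep case either reduces to the flat analysis at the latest preceding flat step, or is handled directly by observing that steep placements simply act greedily and only degrade the makespan when the average load is already close to $\OPT(\CJ)$.

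I expect the main obstacle to be the Load Lemma itself: it must translate the combinatorial stability properties of $\CJ^\sigma$ into hard inequalities about the dynamic load profile $l_1^t,\dots,l_m^t$, uniformly across the adversarial choice of the crisis time $t^*$ inside the stable sequence. A secondary difficulty is to couple the parameters $h(m)$, the checkpoint fractions inside stability, the $\lceil(2c-3)m\rceil$ rounding and the slack $\varepsilon$ so that the several $o(1)$ losses do not accumulate beyond the $\varepsilon$ tolerance afforded by nearly $c$-competitiveness.
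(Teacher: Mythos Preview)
Your high-level plan matches the paper exactly: verify the adversarial $2$-competitiveness, define stable sequences, prove they occur with probability $1-o(1)$, and then run a worst-case analysis on stable sequences. Your argument for the first condition is correct and is essentially Proposition~\ref{prop1}/Corollary~\ref{lem:adv}.

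However, your sketch of the worst-case analysis on stable sequences has real gaps. First, you misstate what the Load Lemma does: it says $L^{\lfloor\varphi n\rfloor}\approx \varphi L$ with high probability, i.e.\ it converts ``fraction of jobs seen'' into ``fraction of total load seen''. It does \emph{not} translate failed safety checks at $M_i$ or $M_{m-h}$ into lower bounds on $l_i^{t^*-1}$ or $l_{m-h}^{t^*-1}$; those lower bounds come for free from the failed checks themselves together with $O^{t^*}\leq\OPT$. Second, the contradiction you aim for is inverted: $2P_{m+1}^{t^*}\leq O^{t^*}$ is always true by definition of $O^t$, so it cannot be the contradiction. The paper instead shows that the assumed violation forces $P_{m+1}<\lambda_{\mathrm{end}}b_0$ (with $b_0=l_m^{n'-1}$), and then derives $P_{m+1}\geq\lambda_{\mathrm{end}}b_0$.

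Most importantly, you are missing the two ingredients that actually make the analysis close at the stated $c$. The first is the \emph{filling jobs} argument: one tracks, for each value $b>(c-1+\varepsilon)\OPT$, the $m-h$ jobs that first push a machine to load $\geq b$, splits them into ``early'' (steeply scheduled) and ``late'' (flatly scheduled, after $i$ machines are full) filling jobs, and lower bounds their sizes separately (Lemmas~\ref{le.latefill} and~\ref{le.earlyfill}). The second is the pivotal role of $P_h$, the $h$-th largest job, not just $p_{\max}$ and $P_{m+1}$: the paper sets up two linear maps $f(\lambda)=2c\lambda-1$ and $g(w)=(c(2c-3)-1)w+4-2c$, proves $P_{m+1}\geq g_{b}(P_h)$ via the late filling jobs and $P_h\geq f_{b}(P_{m+1})$ via the failed $M_{m-h}$ check, and uses the fixed-point inequality $g(f(\lambda))>\lambda$ for $\lambda>\lambda_{\mathrm{start}}$ to bootstrap. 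The cubic $Q[x]=4x^3-14x^2+16x-7$ emerges precisely from requiring $g(1-\varepsilon)>\lambda_{\mathrm{end}}$ (Proposition~\ref{pro.1.2}), not from ``balancing $L^{t^*}$, $2P_{m+1}^{t^*}$, $l_{m-h}^{t^*-1}$, $l_i^{t^*-1}$'' as you suggest. Without the $P_h$ mechanism and the filling-job dichotomy your outline will not reach $c<1.8478$; at best it recovers something closer to the classical adversarial bounds.
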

\begin{corollary}\label{co.main}
${\mathit ALG}$ is $c$-competitive in the random-order model as $m\rightarrow \infty$.
\end{corollary}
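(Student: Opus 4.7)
The plan is to deduce this corollary as an immediate consequence of the two results already assembled in the preceding sections, namely Theorem \ref{te.main} (which asserts the stronger property that $\mathit{ALG}$ is nearly $c$-competitive in the sense of Definition \ref{def:comp}) together with Lemma \ref{lem:comp} (which translates nearly $c$-competitiveness into $c$-competitiveness in the random-order model as $m\to\infty$). The corollary follows by simply chaining these two facts.

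Concretely, I would first invoke Theorem \ref{te.main} to obtain both clauses of Definition \ref{def:comp} for $\mathit{ALG}$ with the constant $c<1.8478$ defined as the real root of $Q[x]=4x^{3}-14x^{2}+16x-7$. The first clause, that $\mathit{ALG}$ is $O(1)$-competitive in the adversarial model, is in any case easy to verify from the algorithm's rules: whenever $\mathit{ALG}$ places a job on $M_i^{t-1}$ or $M_{m-h}^{t-1}$ the resulting load is by construction at most $c\cdot O^t\le c\cdot\OPT(\CJ)$, and otherwise the job is placed on the least loaded machine exactly as \emph{Greedy} would, which is $(2-1/m)$-competitive by Graham's bound. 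The second clause, the high-probability bound, is the substantive content of Theorem \ref{te.main}. Feeding these two facts into Lemma \ref{lem:comp} then yields $\mathit{ALG}^{\rom}(\CJ)\le c\cdot \OPT(\CJ)$ in the limit $m\to\infty$, which is exactly the statement of the corollary.

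The main obstacle is not the corollary itself, whose proof is a one-liner once Theorem \ref{te.main} is in hand, but rather Theorem \ref{te.main}, whose derivation occupies the following sections and goes through stable job sequences, the Load Lemma, and a worst-case analysis of $\mathit{ALG}$ on such sequences. One brief sanity check worth including in the write-up is that Lemma \ref{lem:comp} genuinely uses both clauses of Definition \ref{def:comp}: the adversarial constant $C$ controls the contribution of the $\varepsilon$-probability ``bad'' permutations to $\mathit{ALG}^{\rom}(\CJ)$, while the near-competitiveness controls the contribution of the remaining $1-\varepsilon$ fraction. Because $\mathit{ALG}$ trivially satisfies the first clause, as noted above, no additional work beyond Theorem \ref{te.main} is required, and the corollary is established.
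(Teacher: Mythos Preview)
Your proposal is correct and matches the paper's approach exactly: the corollary is stated as an immediate consequence of Theorem~\ref{te.main} together with Lemma~\ref{lem:comp}, with no additional argument. Your extra remarks about the adversarial constant (which the paper records as Corollary~\ref{lem:adv}) and the role of both clauses in Lemma~\ref{lem:comp} are accurate but not needed for the write-up.
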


\section{Analysis of the algorithm}
\subsection{Analysis basics}
We present some results for the adversarial model so that we can focus on the true random-order analysis of $\mathit{ALG}$ in the next sections. First, recall the three common lower bounds used for online makespan minimization.

\begin{proposition}\label{prop:opt}
For any ${\cal J}$, there holds $OPT({\cal J}) \geq \max\{L,p_{\max}, 2P_{m+1}\}$. Moreover, for any permutation
$J^\sigma $, there holds $O^1\leq O^2\leq \ldots \leq O^n \leq OPT({\cal J})$.
\end{proposition}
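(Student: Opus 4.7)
The plan is to handle the three lower bounds separately for the first claim, then derive the monotonicity from the fact that each constituent quantity of $O^t$ is non-decreasing in $t$, and finally combine to get $O^n \leq \OPT(\CJ)$.

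For the first claim, I would argue as follows. Fix any feasible schedule achieving $\OPT(\CJ)$. The bound $\OPT(\CJ)\geq L$ is immediate: the total processing volume $mL = \sum_{s=1}^n p_s$ is distributed among $m$ machines, so some machine carries load at least $L$. The bound $\OPT(\CJ)\geq p_{\max}$ holds because the machine processing the largest job has load at least $p_{\max}$. For $\OPT(\CJ)\geq 2P_{m+1}$, consider the $m+1$ jobs of largest processing time in $\CJ$ (each of size at least $P_{m+1}$). By the pigeonhole principle, at least two of these are assigned to a common machine in the optimal schedule; that machine carries load at least $2P_{m+1}$.

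For the monotonicity part, I would observe that each of the three quantities defining $O^t$ is non-decreasing in $t$. The sum $L^t=\frac{1}{m}\sum_{s=1}^t p_{\sigma(s)}$ grows as the non-negative term $p_{\sigma(t+1)}/m$ is added at step $t+1$. The maximum $p_{\max}^t$ over a growing set of jobs can only increase. Likewise, the $(m+1)$-st largest processing time $P_{m+1}^t$ over the first $t$ jobs is non-decreasing: adding a job to a multiset can only shift its order statistics upward (and in particular cannot decrease the $(m+1)$-st largest). Taking the maximum of three non-decreasing sequences yields $O^1\leq O^2\leq \ldots \leq O^n$.

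Finally, since $L^n = L$, $p_{\max}^n = p_{\max}$, and $P_{m+1}^n = P_{m+1}$, we get $O^n = \max\{L, p_{\max}, 2P_{m+1}\}$, and the first part of the proposition gives $O^n\leq \OPT(\CJ)$. There is no real obstacle here; the only thing worth being careful about is the pigeonhole step for $2P_{m+1}$, which implicitly relies on $\CJ$ containing at least $m+1$ jobs (otherwise $P_{m+1}=0$ by convention, making the bound trivial).
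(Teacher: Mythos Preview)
Your proof is correct and follows essentially the same approach as the paper: the three lower bounds on $\OPT(\CJ)$ are established by the averaging, largest-job, and pigeonhole arguments respectively, and the chain $O^1\leq\cdots\leq O^n\leq\OPT(\CJ)$ is obtained from the monotonicity of the three constituent quantities. Your treatment is in fact slightly more detailed than the paper's, which only remarks that $O^t$ cannot decrease as jobs arrive, and your observation about the case $n\leq m$ (where $P_{m+1}=0$ by convention) is a nice addition.
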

\begin{proof}
The optimum makespan $OPT({\cal J})$ cannot be smaller than the average machine load $L$ for the input, even if all
the jobs are distributed evenly among the $m$ machines. Moreover, the job with the largest processing time $p_{\max}$ must be
scheduled non-preemptively on one of the machines in an optimal schedule. Thus $OPT({\cal J}) \geq p_{\max}$. Finally,
among the $m+1$ largest jobs of the input, two must be placed on the same machine in an optimal solution.
Hence $OPT({\cal J}) \geq 2P_{m+1}$. For any permutation $J^\sigma$, the value $O^t$ cannot decrease as jobs $J_t$ arrive.
\end{proof}

For any job sequence ${\cal J} = J_1, \ldots, J_n$, let $R ({\cal J}) = \min \{{L\over p_{\max}}, {p_{\max}\over L}\}$. Intuitively, this measures the complexity of $\cal J$. 
\begin{proposition}\label{prop1}
For any ${\cal J} = J_1,\ldots, J_n$, there holds $\mathit{ALG}({\cal J}) \leq \max\{1+R({\cal J}),c\} OPT({\cal J})$.
\end{proposition}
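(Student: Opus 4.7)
The plan is to track the job that achieves the final makespan and analyse it according to the branch of $\mathit{ALG}$ that scheduled it. Formally, let $M^*$ be a machine whose load at the end equals $\mathit{ALG}({\cal J})$, and let $J_{\sigma(t^*)}$ be the last job that $\mathit{ALG}$ placed on $M^*$. Then $\mathit{ALG}({\cal J})$ equals the load of $M^*$ just before time $t^*$ plus $p_{\sigma(t^*)}$. The claim follows from a case distinction on the rule used to schedule this particular job.

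First, if $J_{\sigma(t^*)}$ was placed on $M_i^{t^*-1}$ or $M_{m-h}^{t^*-1}$, then the algorithm explicitly checked that the resulting load is at most $c\cdot O^{t^*}$. By \Cref{prop:opt} we have $O^{t^*}\leq OPT({\cal J})$, so in this case $\mathit{ALG}({\cal J})\leq c\cdot OPT({\cal J})$.

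Otherwise $J_{\sigma(t^*)}$ was placed on the least loaded machine $M_m^{t^*-1}$, either in the steep branch or because both safety checks in the flat branch failed. In either situation the machine receiving the job was the least loaded one at time $t^*$, so its load prior to the assignment is at most the average $L^{t^*-1}\leq L$. Together with $p_{\sigma(t^*)}\leq p_{\max}$ this gives $\mathit{ALG}({\cal J})\leq L+p_{\max}$. Writing $a=\min\{L,p_{\max}\}$ and $b=\max\{L,p_{\max}\}$, we have $L+p_{\max}=a+b=(1+a/b)\,b=(1+R({\cal J}))\,b$, and \Cref{prop:opt} yields $b=\max\{L,p_{\max}\}\leq OPT({\cal J})$. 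Hence $\mathit{ALG}({\cal J})\leq(1+R({\cal J}))\,OPT({\cal J})$.

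Combining the two cases gives $\mathit{ALG}({\cal J})\leq\max\{c,\,1+R({\cal J})\}\,OPT({\cal J})$, as claimed. There is essentially no technical obstacle here; the only step requiring any algebra is recognising that $L+p_{\max}$ equals $(1+R({\cal J}))\cdot\max\{L,p_{\max}\}$, which is what turns the trivial $2$-competitive bound for the least-loaded rule into the refined factor $1+R({\cal J})$.
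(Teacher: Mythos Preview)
Your proof is correct and follows essentially the same approach as the paper: identify the job defining the makespan, observe that if it was placed on $M_i^{t-1}$ or $M_{m-h}^{t-1}$ the safety check bounds the load by $c\cdot O^t\le c\cdot OPT({\cal J})$, and otherwise use that the least loaded machine has load at most $L$ together with the identity $L+p_{\max}=(1+R({\cal J}))\max\{L,p_{\max}\}$. The paper's version organises the case split slightly differently (it first assumes the makespan exceeds $c\cdot OPT({\cal J})$ and deduces the job must have gone to $M_m^{t-1}$), but the substance is identical.
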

\begin{proof}
Let ${\cal J} = J_1,\ldots, J_n$ be an arbitrary job sequence and let $J_t$ be the job that defines $\mathit{ALG}$'s makespan.
If the makespan exceeds $c\cdot OPT({\cal J})$, then it exceeds $c\cdot O^t$. Thus $\mathit{ALG}$ placed $J_t$ on machine $M_m^{t-1}$,
cf.\ lines~4 and 5 of the algorithm.  This machine was a least loaded one, having a load of at most $L$. Hence $\mathit{ALG}({\cal J}) \leq
L+p_t\leq L+p_{\max} \leq {L+p_{\max} \over \max\{L,p_{\max}\}} \cdot OPT({\cal J}) = (1 + R({\cal J})) \cdot OPT({\cal J})$.
\end{proof}
Since $R({\cal J}) \leq 1$ we immediately obtain the following result, which ensures that $\mathit{ALG}$ satisfies the first condition
of a nearly $c$-competitive algorithm, see Definition~\ref{def:comp}.
\begin{corollary}\label{lem:adv}
$\mathit{ALG}$ is $2$-competitive in the adversarial model.
\end{corollary}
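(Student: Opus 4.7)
The plan is to reduce the claim immediately to Proposition~\ref{prop1}, which already bounds $\mathit{ALG}({\cal J})$ by $\max\{1+R({\cal J}),c\}\cdot OPT({\cal J})$. Since the constant $c<1.8478$ is strictly below $2$, it suffices to show that $1+R({\cal J})\le 2$ for every job sequence ${\cal J}$, i.e.\ that $R({\cal J})\le 1$. This is a one-line observation: $R({\cal J})=\min\{L/p_{\max},p_{\max}/L\}$ is the minimum of a positive real $x=L/p_{\max}$ and its reciprocal $1/x$, and for any such pair at least one of $x$ and $1/x$ is at most $1$, so their minimum is at most $1$.

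To write this up, I would first dispose of the degenerate case where either $L=0$ or $p_{\max}=0$; in that situation the input contains no job with positive processing time, so $\mathit{ALG}({\cal J})=0=OPT({\cal J})$ and the $2$-competitiveness is vacuous. In the remaining case $L,p_{\max}>0$, I would cite the elementary inequality $\min\{x,1/x\}\le 1$ to conclude $R({\cal J})\le 1$, then plug into Proposition~\ref{prop1} to obtain $\mathit{ALG}({\cal J})\le\max\{2,c\}\cdot OPT({\cal J})=2\cdot OPT({\cal J})$. There is no real obstacle; the only content lies in Proposition~\ref{prop1}, which has already been proved.
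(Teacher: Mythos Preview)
Your proposal is correct and follows exactly the paper's approach: the paper simply observes that $R({\cal J})\le 1$ and appeals to Proposition~\ref{prop1}. Your handling of the degenerate case $L=0$ or $p_{\max}=0$ is an extra bit of care the paper does not spell out, but the core argument is identical.
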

We next identify a class of {\em plain\/} job sequences that we do not need to consider in the random-order analysis because
$\mathit{ALG}$'s makespan is upper bounded by $c$ times the optimum on these inputs.
\begin{definition}
A job sequence ${\cal J} = J_1,\ldots, J_n$ is called {\em plain\/} if $n\leq m$ or if $R({\cal J}) \leq c-1$. Otherwise it is called
{\em proper}.
\end{definition}
Let ${\cal J} = J_1,\ldots, J_n$ be any job sequence that is processed/scheduled in this order. Observe that if it contains at most $m$ jobs, i.e.\ $n\leq m$, 
and $\mathit{ALG}$ cannot place a job $J_t$ on machines $M_i^{t-1}$ or $M_{m-h}^{t-1}$ because the resulting load would exceed $c\cdot O^t$, 
then the job is placed on an empty machine. Using Proposition~\ref{prop1} we derive the following fact.
\begin{lemma}\label{lem:plain}
For any plain job sequence ${\cal J} = J_1,\ldots, J_n$, there holds $\mathit{ALG}({\cal J}) \leq c \cdot OPT({\cal J})$.
\end{lemma}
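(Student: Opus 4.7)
The plan is to treat the two defining conditions of a plain sequence separately, using Proposition~\ref{prop1} for the first case and the observation noted just before the lemma for the second.

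First suppose $R(\mathcal{J}) \le c-1$. Then $1 + R(\mathcal{J}) \le c$, so $\max\{1+R(\mathcal{J}),c\} = c$ and Proposition~\ref{prop1} gives $\mathit{ALG}(\mathcal{J}) \le c \cdot \mathrm{OPT}(\mathcal{J})$ immediately.

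Now suppose instead that $n \le m$. Let $J_t$ be the job that determines $\mathit{ALG}$'s makespan and, aiming at a contradiction, assume $\mathit{ALG}(\mathcal{J}) > c \cdot \mathrm{OPT}(\mathcal{J})$. Then by Proposition~\ref{prop:opt}, the new load on the machine receiving $J_t$ exceeds $c \cdot O^t$, which by the algorithm's rules (lines~5--7) forces $J_t$ onto the least loaded machine $M_m^{t-1}$. However, because $t \le n \le m$, at most $t-1 \le m-1$ machines carry any load at the moment $J_t$ arrives, so $M_m^{t-1}$ is empty; this is precisely the observation made in the paragraph preceding the lemma. Hence the new load on $M_m^{t-1}$ equals $p_t \le p_{\max} \le \mathrm{OPT}(\mathcal{J})$, contradicting $\mathit{ALG}(\mathcal{J}) > c \cdot \mathrm{OPT}(\mathcal{J}) > \mathrm{OPT}(\mathcal{J})$.

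Both cases yield $\mathit{ALG}(\mathcal{J}) \le c \cdot \mathrm{OPT}(\mathcal{J})$, completing the proof. There is no real obstacle here: the first case is a direct substitution into Proposition~\ref{prop1}, and the second hinges only on the elementary pigeonhole fact that a schedule of fewer than $m$ jobs leaves the least loaded machine empty, so the algorithm's fallback placement cannot itself exceed $\mathrm{OPT}(\mathcal{J})$.
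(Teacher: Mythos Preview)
Your proof is correct and follows essentially the same approach as the paper: the case $R(\mathcal{J})\le c-1$ is handled directly by Proposition~\ref{prop1}, and the case $n\le m$ uses the observation preceding the lemma that the fallback machine $M_m^{t-1}$ is empty. One very minor quibble: the conclusion that a job whose placement exceeds $c\cdot O^t$ must land on $M_m^{t-1}$ also relies on line~3 (the steep case), not only lines~5--7, but this does not affect correctness.
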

If a job sequence ${\cal J}$ is plain (proper), then every permutation of it is. Hence, given Lemma~\ref{lem:plain}, we may concentrate
on proper job sequences in the remainder of the analysis. We finally state a fact that relates to the second condition of a nearly
$c$-competitive algorithm, see again Definition~\ref{def:comp}.
\begin{lemma}\label{lem:loadbasic}
Let ${\cal J} = J_1,\ldots, J_n$ be any job sequence that is scheduled in this order and let $J_t$ be a job that causes $\mathit{ALG}$'s makespan 
to exceed $(c+\varepsilon)OPT({\cal J})$, for some $\epsilon \geq 0$. Then both the load of $\mathit{ALG}$'s least loaded machine at
the time of the assignment as well as $p_t$ exceed $(c-1+\varepsilon)OPT({\cal J})$.
\end{lemma}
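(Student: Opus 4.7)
The plan is to first pin down where $J_t$ is placed, then combine two easy upper bounds -- on the load of the least loaded machine and on $p_t$ -- to extract the two lower bounds claimed.

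First I would argue that $J_t$ is assigned to the least loaded machine $M_m^{t-1}$. By Proposition~\ref{prop:opt} we have $O^t \leq OPT(\mathcal{J})$, so $c\cdot O^t \leq c\cdot OPT(\mathcal{J}) < (c+\varepsilon) OPT(\mathcal{J})$. Since the makespan of $\mathit{ALG}$ exceeds $(c+\varepsilon)OPT(\mathcal{J})$ after placing $J_t$, and all other jobs processed so far contributed at most this much, $J_t$ itself must raise the load of its target machine above $c\cdot O^t$. Inspecting the pseudocode, this rules out the two conditional assignments (to $M_i^{t-1}$ in line~5 or to $M_{m-h}^{t-1}$ in line~6), and in a steep schedule the only option is $M_m^{t-1}$ anyway. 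Hence in every case $J_t$ lands on $M_m^{t-1}$.

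Let $\ell = l_m^{t-1}$ be the load of that machine just before $J_t$ arrives. The new load is $\ell + p_t$, so by assumption
\[
\ell + p_t > (c+\varepsilon)\,OPT(\mathcal{J}).
\]
Since $M_m^{t-1}$ is a least loaded machine, $\ell$ is at most the average load, i.e.\ $\ell \leq L^t \leq L \leq OPT(\mathcal{J})$ by Proposition~\ref{prop:opt}. Substituting into the displayed inequality yields $p_t > (c-1+\varepsilon)\,OPT(\mathcal{J})$. Symmetrically, Proposition~\ref{prop:opt} gives $p_t \leq p_{\max} \leq OPT(\mathcal{J})$, and substituting this bound instead yields $\ell > (c-1+\varepsilon)\,OPT(\mathcal{J})$, which is the other half of the claim.

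There is no real obstacle here: once one observes that the rejection of the $M_i^{t-1}$ and $M_{m-h}^{t-1}$ placements forces $J_t$ onto $M_m^{t-1}$, the rest is a one-line application of the two optimum lower bounds $L \leq OPT(\mathcal{J})$ and $p_{\max} \leq OPT(\mathcal{J})$ from Proposition~\ref{prop:opt}. The only thing to be careful about is the strict versus non-strict inequality, which is handled by the strict inequality in the hypothesis $\mathit{ALG}(\mathcal{J}) > (c+\varepsilon)OPT(\mathcal{J})$.
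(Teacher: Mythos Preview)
Your proof is correct and follows the same approach as the paper: first observe that $J_t$ must land on $M_m^{t-1}$ because the resulting load exceeds $c\cdot O^t$, then bound each of $\ell$ and $p_t$ above by $OPT(\mathcal{J})$ (via $L\le OPT$ and $p_{\max}\le OPT$) to force the other one above $(c-1+\varepsilon)OPT(\mathcal{J})$. The only cosmetic slip is the strict inequality $c\cdot OPT(\mathcal{J}) < (c+\varepsilon)OPT(\mathcal{J})$, which is an equality when $\varepsilon=0$; the argument still goes through since the hypothesis supplies the strict inequality on the resulting load.
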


\begin{proof}
$\mathit{ALG}$ places $J_t$ on machine $M_m^{t-1}$, which is a least loaded machine when the assignment is done. If $l_m^{t-1}$ or
$p_t$ were upper bounded by $(c-1+\varepsilon)OPT({\cal J})$, then the resulting load would be $l_m^{t-1} + p_t \leq
(c-1+\varepsilon)OPT({\cal J}) + \max\{L,p_t\} \leq (c-1+\varepsilon)OPT({\cal J}) + OPT({\cal J}) = (c+\varepsilon) OPT({\cal J})$.
\end{proof}

\subsection{Stable job sequences}
We define the class of stable job sequences. These sequences are robust in that they will admit an adversarial analysis of $\mathit{ALG}$.
Intuitively, the sequences reveal information on the largest jobs when a significant fraction of the total processing volume $\sum_{t=1}^n p_t$
has been scheduled.  More precisely, one gets an estimate on the processing time of the $h$-th largest job in the entire sequence and
encounters a relevant number of the $m+1$ largest jobs. If a job sequence is unstable, large jobs occur towards the very end of the
sequence and can cause a high makespan relative to the optimum one.

We will show that $\mathit{ALG}$ is adversarially $(c+\varepsilon)$-competitive on stable sequences, for any given $\varepsilon>0$. Therefore,
the definition of stable sequences is formulated for a fixed $\varepsilon>0$. Given ${\cal J}$, let ${\cal J}^\sigma = J_{\sigma(1)},\ldots, J_{\sigma(n)}$
be any permutation of the jobs. Furthermore, for every $j\leq n$ and in particular
$j\in \{h,m+1\}$, the {\em set of the $j$ largest jobs\/} is a fixed set of cardinality $j$ such that no job outside this set has a strictly
larger processing time than any job inside the set.
\begin{definition}\label{def:stable}
A job sequence ${\cal J}^\sigma = J_{\sigma(1)},\ldots, J_{\sigma(n)}$ is {\em stable\/} if the following conditions hold.\\[-20pt]
\begin{itemize}
\item There holds $n>m$.\\[-20pt]
\item Once $L^t \geq (c-1){i\over m}L$, there holds $p_{\max}^t \geq P_h$.\\[-20pt]
\item For every $j\geq i$, the sequence ending once we have $L^t \geq ({j\over m} + {\varepsilon \over 2})L$ contains at least $j+h+2$ many of the
$m+1$ largest jobs in ${\cal J}$.\\[-20pt]
\item The sequence ending right before either (a)~$L^t \geq {i\over m}(c-1)\varepsilon L$ holds or (b)~the $h$-th largest job of ${\cal J}$ is scheduled contains
at least $h+1$ many of the $m+1$ largest jobs in ${\cal J}$.\\[-20pt]
\end{itemize}
Otherwise the job sequence is \emph{unstable}.
\end{definition}
Given $\varepsilon >0$ and $m$, let $P_{\varepsilon}(m)$ be the infimum, over all proper job sequences ${\cal J}$, that a
random permutation of ${\cal J}$ is stable, i.e.\
$$P_{\varepsilon}(m) = \inf_{{\cal J} \ {\rm proper}} \bP_{\sigma\sim S_n}[{\cal J}^\sigma \ {\rm is\ stable}].$$
As the main result of this section we will prove that this probability tends to~1 as $m\rightarrow \infty$.
\begin{mainlemma}\label{ml:1}
For every $\varepsilon >0$, there holds $\displaystyle \lim_{m \rightarrow \infty} P_{\varepsilon}(m) =1$.
\end{mainlemma}
The above lemma implies that for any $\varepsilon >0$ there exists an $m(\varepsilon)$ such that, for all $m\geq m(\varepsilon)$ and
all ${\cal J}$, there holds $\bP_{\sigma\sim S_n}[{\cal J}^\sigma \ {\rm is\ stable}] \geq 1-\varepsilon$. In Section~\ref{sec:adv}
we will show that $\mathit{ALG}$ is $(c+\varepsilon)$-competitive on stable job sequences. This implies
$\bP_{\sigma\sim S_n}[\mathit{ALG}({\cal J}^{\sigma})\geq (c+\varepsilon)OPT({\cal J})] \leq \varepsilon$. Given
Lemma~\ref{lem:adv}, we obtain the following corollary.
\begin{corollary}\label{cor:adv}
If $\mathit{ALG}$ is adversarially $(c+\varepsilon)$-competitive on stable sequences, for every $\varepsilon >0$ and $m\geq m(\varepsilon)$
sufficiently large, then it is nearly $c$-competitive.
\end{corollary}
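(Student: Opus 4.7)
The plan is to read off the two clauses of Definition~\ref{def:comp} and discharge them one at a time. The first, a constant adversarial competitive ratio, is already delivered by Corollary~\ref{lem:adv} (with constant $2$), so nothing needs to be done there. The real work is the high-probability clause: given $\varepsilon > 0$, I want to exhibit an $m(\varepsilon)$ such that $\bP_{\sigma\sim S_n}[\mathit{ALG}({\cal J}^\sigma)\geq (c+\varepsilon)OPT({\cal J})]\leq \varepsilon$ holds uniformly in ${\cal J}$.

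First I would split on whether the input sequence ${\cal J}$ is plain or proper. Plainness depends only on the multiset of jobs (through $n$ and $R({\cal J})$) and is therefore preserved by every permutation. So if ${\cal J}$ is plain, Lemma~\ref{lem:plain} applied to each ${\cal J}^\sigma$ gives $\mathit{ALG}({\cal J}^\sigma)\leq c\cdot OPT({\cal J}^\sigma)=c\cdot OPT({\cal J})<(c+\varepsilon)OPT({\cal J})$, so the bad event has probability $0$ and the plain case is discharged without any constraint on $m$.

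For proper ${\cal J}$ I would pair the hypothesis with Main Lemma~\ref{ml:1} at a slightly tightened slack $\varepsilon'=\varepsilon/2$. The hypothesis supplies an $m_1(\varepsilon')$ such that for $m\geq m_1(\varepsilon')$ every stable permutation of every proper ${\cal J}$ satisfies $\mathit{ALG}({\cal J}^\sigma)\leq (c+\varepsilon')OPT({\cal J})<(c+\varepsilon)OPT({\cal J})$. Simultaneously, Main Lemma~\ref{ml:1} supplies an $m_2(\varepsilon')$ such that $P_{\varepsilon'}(m)\geq 1-\varepsilon'$, i.e.\ $\bP_{\sigma\sim S_n}[{\cal J}^\sigma \text{ is stable}]\geq 1-\varepsilon'$, for all $m\geq m_2(\varepsilon')$ and all proper ${\cal J}$. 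Setting $m(\varepsilon)=\max\{m_1(\varepsilon'),m_2(\varepsilon')\}$, the bad event on proper ${\cal J}$ is contained in the event "${\cal J}^\sigma$ is unstable" and therefore has probability at most $\varepsilon'\leq\varepsilon$, as required.

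Since the corollary is a straightforward assembly of pieces already on hand (Lemma~\ref{lem:plain}, Corollary~\ref{lem:adv}, the hypothesis, and Main Lemma~\ref{ml:1}), there is no substantive obstacle. The only point that requires a little care is the mismatch between the non-strict $\leq$ in the definition of $(c+\varepsilon)$-competitiveness on stable sequences and the $\geq$ inside the probability in Definition~\ref{def:comp}; this is precisely what forces me to invoke the hypothesis at the strictly smaller slack $\varepsilon'<\varepsilon$, so that equality at $(c+\varepsilon)OPT({\cal J})$ cannot occur on the stable event.
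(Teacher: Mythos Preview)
Your argument is correct and follows the same route the paper sketches in the paragraph preceding the corollary: the constant adversarial bound comes from Corollary~\ref{lem:adv}, the high-probability clause from combining Main Lemma~\ref{ml:1} with the hypothesis, and Lemma~\ref{lem:plain} disposes of plain inputs. The paper's own justification is terser and does not spell out the plain/proper split or the passage to $\varepsilon'=\varepsilon/2$ to resolve the non-strict/strict inequality mismatch, but your added care on these two points is appropriate and does not depart from the intended approach.
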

In the remainder of this section we describe how to establish Main Lemma~\ref{ml:1}. We need some notation. In Section~\ref{sec:alg} the value $L_j^t$ was defined with respect to a fixed job sequence
that was clear from the context. We adopt the notation $L_j^t[{\cal J}^\sigma]$ to make this dependence visible. We adopt a similar
notation for the variables $L$, $P_j^t$, $P_j$, $p_{\max}^t$ and $p_{\max}$. For an input ${\cal J}$ and $\sigma\in S_n$, we will use the notation
$L_j^t[\sigma] = L_j^t[{\cal J}^\sigma]$. Again, we use a similar notation for the variables $P_j^t$ and $p_{\max}^t$.

At the heart of the proof of Main Lemma~\ref{ml:1} is the Load Lemma. Observe that after $t$ time steps in a random permutation of an input
${\cal J}$, each job has arrived with probability $t/n$. Thus the expected total processing time of the jobs seen so far is
$t/n \cdot \sum_{s=1}^n p_s$. Equivalently, in expectation $L^t$ equals $t/n\cdot L$. The Load Lemma proves that this relation holds with
high probability. We set $t = \varphi n$.
\begin{loadlemma}
Given any $\varepsilon>0$ and $\varphi\in (0,1]$, there exists an $m(\varepsilon,\varphi)$ such that for all $m\geq m(\varepsilon,\varphi)$
and all proper sequences ${\cal J}$, there holds
$$
\bP_{\sigma\sim S_n} \left[\left|{L^{\lfloor \varphi n\rfloor}[{\cal J}^{\sigma}]\over \varphi L[{\cal J}^\sigma]} -1\right| \geq \varepsilon \right] \leq \varepsilon.$$
\end{loadlemma}
\begin{proof}
Let us fix a proper job sequence $\CJ$. We use the shorthand $\hat L[\sigma]= \hat L[\CJ^\sigma] = L^{\lfloor \varphi n\rfloor}[\CJ^\sigma]$ and $L=L[\CJ]$.

Let $\delta=\frac{\varphi \varepsilon}{2}$. We will first treat the case that we have $p_\mathrm{max}[\CJ]=1$ and every job size in $\CJ$ is of the form $(1+\delta)^{-j}$, for some $j\ge 0$. Note that we have in particular $c-1\le L\le \frac{1}{c-1}$ because we are working with a proper sequence. For $j\ge 0$ let $h_j$ denote the number of jobs $J_t$ of size $(1+\delta)^{-j}$ and, given $\sigma\in S_n$, let $h_j^\sigma$ denote the number of such jobs $J_t$ that additionally satisfy $\sigma(t)\leq \lfloor \varphi n\rfloor$, i.e.\ they are are among the $\lfloor \varphi n\rfloor$ first jobs in the sequence $\CJ^\sigma$. We now have
$$L= \frac{1}{m}\sum\limits_{j=0}^\infty (1+\delta)^{-j}h_j \ \ \  \mbox{and}
\ \ \ \hat L[\sigma]= \frac{1}{m}\sum \limits_{j=0}^\infty (1+\delta)^{-j}h_j^\sigma.$$

The random variables $h^\sigma_j$ are hypergeometrically distributed, i.e.\ we sample $\lfloor \varphi n\rfloor$ jobs from the set of all $n$ jobs and count the number of times we get one of the $h_j$ many jobs of processing time $(1+\delta)^{-j}$.
Hence, the random variable $h^\sigma_j$ has mean \[\bE[h^\sigma_j]=\frac{\lfloor \varphi n \rfloor}{n} h_j \le \varphi h_j\] and variance
\[ \textrm{Var}[h^\sigma_j] =\frac{h_j\left(n-h_j\right)\lfloor\varphi n \rfloor \left(n-\lfloor\varphi n \rfloor\right) }{n^2(n-1)}
\le h_j  \le (1+\delta)^j mL \le (1+\delta)^j\frac{m}{c-1}.\]
By Chebyshev's inequality we have
\[\bP\bigg[\left|h^\sigma_j-\varphi h_j\right|\ge (1+\delta)^{3j/4}m^{3/4}\bigg] \le (1+\delta)^{-3j/2}\textrm{Var}[h^\sigma_j]m^{-3/2}
\le (1+\delta)^{-j/2}\frac{m^{-1/2}}{c-1}.\]
In particular, by the Union Bound, with probability
\[P(m) =1-\sum\limits_{j=0}^\infty (1+\delta)^{-j/2}\frac{m^{-1/2}}{c-1}
=1-\frac{m^{-1/2}}{\left(1-\sqrt{1+\delta}\right)(c-1)}
=1-O\left(m^{-1/2}\right)\]
we have for all $j$,
\begin{align*}\left|h^\sigma_j-\varphi h_j\right|< (1+\delta)^{3j/4}m^{3/4}.\end{align*}
We conclude that the following holds:
\begin{align*}
\left|\hat L[\sigma]-\varphi L\right|&= \left|\frac{1}{m} \sum\limits_{j=0}^\infty (1+\delta)^{-j}h_j^\sigma -\frac{\varphi}{m}\sum\limits_{j=0}^\infty (1+\delta)^{-j}h_j \right|\\
&\le\sum\limits_{j=0}^\infty (1+\delta)^{-j}\frac{\left|h_j^\sigma -h_j\cdot \varphi\right|}{m}\\
&<\sum\limits_{j=0}^\infty (1+\delta)^{-j/4}m^{-1/4}\\
&=\frac{m^{-1/4}}{\left(1-(1+\delta)^{-1/4}\right)}.\end{align*}
In particular, with probability $P(m)$, we have
\[\left|\frac{\hat L[\sigma]}{\varphi L}-1\right| =\frac{\left|\hat L[\sigma]-\varphi L \right|}{\varphi L}
\le \frac{m^{-1/4}}{\varphi (c-1)\left(1-(1+\delta)^{-1/4}\right)}
=O\left(m^{-1/4}\right).\]
Hence, if we choose $m$ large enough we can ensure that
\begin{equation}\label{eq.Loadlemma}\bP\left[\left|\frac{\hat L[\sigma]}{\varphi L}-1\right|>\frac{\varepsilon}{2}\right] \le 1-P(m)\le \varepsilon.\end{equation}

So far we have assumed that $p_\mathrm{max}[\CJ]=1$ and every job in $\CJ$ has a processing time of $(1+\delta)^{-j}$, for some $j\ge 0$. Now we drop these assumptions.
Given an arbitrary sequence $\CJ$ with $0<p_\mathrm{max}[\CJ]\neq 1$, let $\lfloor\CJ\rfloor$ denote the sequence obtained from $\CJ$ by first dividing every job processing time
by $p_\mathrm{max}[\CJ]$ and rounding every job size down to the next power of $(1+\delta)^{-1}$. We have proven that inequality~\eqref{eq.Loadlemma} holds for
$\lfloor\CJ\rfloor$. The values $L$ and $\hat L[\sigma]$ only change by a factor lying in the interval $[p_\mathrm{max},(1+\delta)p_\mathrm{max})$ when
passing over from $\lfloor\CJ\rfloor$ to $\CJ$.
This implies that
\begin{align*}\left|\frac{\hat L[\CJ^\sigma]}{\varphi L[\CJ]}-\frac{\hat L[\lfloor \CJ\rfloor^\sigma]}{\varphi L[\lfloor \CJ\rfloor]} \right|
&\le \delta \frac{\hat L[\lfloor \CJ\rfloor^\sigma]}{\varphi L[\CJ]}.
\end{align*}
Since $\hat L[\lfloor \CJ\rfloor ^\sigma]\le L[\CJ]$  we obtain 
\begin{align*}\left|\frac{\hat L[\CJ^\sigma]}{\varphi L[\CJ]}-\frac{\hat L[\lfloor \CJ\rfloor^\sigma]}{\varphi L[\lfloor \CJ\rfloor]} \right|&\le \frac{\delta}{\varphi} = \frac{\varepsilon}{2}.\end{align*}
Combining this with inequality~\eqref{eq.Loadlemma} for $\lfloor\CJ\rfloor$ (and the triangle inequality), we obtain
\[\bP\left[\left|\frac{\hat L[\CJ^\sigma]}{\varphi L[\CJ]}-1\right|>\varepsilon\right]
\le \bP\left[\left|\frac{\hat L[\lfloor\CJ\rfloor^\sigma]}{\varphi L[\lfloor\CJ\rfloor]}-1\right|>\frac{\varepsilon}{2}\right] \le \varepsilon.\]
Thus the lemma follows.
\end{proof}
We note that the Load Lemma does not hold for general sequences. A counterexample is a job sequence in which one job carries all the load, while
all the other jobs have a negligible processing time. The proof of the Load Lemma relies on a lower bound of $R({\cal J})$, which is $c-1$
for proper sequences.

We present two consequences of the Load Lemma that will allow us to prove that stable sequences reveal information on the largest jobs when a
certain processing volume has been scheduled. Consider a proper ${\cal J}$. Given ${\cal J}^\sigma = J_{\sigma(1)},\ldots, J_{\sigma(n)}$ and 
$\varphi>0$, let $N(\varphi)[{\cal J}^\sigma]$ be the number of jobs $J_{\sigma(t)}$ that are among the $m+1$ largest jobs in ${\cal J}$ and
such that $L^t \leq \varphi L$. 
\begin{restatable}{lemma}{llI}\label{le.ll1}
Let $\varepsilon>0$ and $\varphi\in (0,1]$. Then there holds
$$\lim_{m\rightarrow \infty} \inf_{{\cal J} \ {\rm proper}} \bP_{\sigma\sim S_n} \left[N(\varphi+\varepsilon)[{\cal J}^{\sigma}] \geq
\lfloor \varphi m\rfloor +h+2\right]=1.$$
\end{restatable}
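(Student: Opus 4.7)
\emph{Proof sketch.} The plan is to reduce the claim to the Load Lemma together with standard concentration for the hypergeometric distribution. Fix $\varphi\in (0,1]$, $\varepsilon>0$, and a proper input $\CJ$. Choose $\varphi'=\min\{\varphi+\varepsilon/2,\,1\}$, let $t_0=\lfloor \varphi' n\rfloor$, and pick $\varepsilon''>0$ small enough that $\varphi'(1+\varepsilon'')\le \varphi+\varepsilon$. The Load Lemma, applied with parameters $\varepsilon''$ and $\varphi'$, ensures that for all $m$ sufficiently large
\[
\bP_{\sigma\sim S_n}\!\bigl[L^{t_0}[\CJ^\sigma]\le (\varphi+\varepsilon)L\bigr]\ge 1-\varepsilon''.
\]
On this event, \emph{every} job among the $m+1$ largest jobs of $\CJ$ that lands in one of the first $t_0$ positions of $\CJ^\sigma$ is counted by $N(\varphi+\varepsilon)[\CJ^\sigma]$, so it suffices to control the number of such top jobs in those first $t_0$ positions.

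That number, call it $X$, is hypergeometrically distributed with parameters $(n,m+1,t_0)$: mean $\mu=t_0(m+1)/n$ and variance at most $m+1$. Properness of $\CJ$ gives $n>m$, hence $(m+1)/n\le 1$, and thus $\mu\ge \varphi'(m+1)-1\ge \varphi m+(\varepsilon/2)m-1$. Because $h\in o(\sqrt m)$, for all sufficiently large $m$ one has $\mu-(\lfloor \varphi m\rfloor+h+2)\ge (\varepsilon/3)m$, and Chebyshev's inequality yields
\[
\bP\!\bigl[X<\lfloor \varphi m\rfloor+h+2\bigr]\le \bP\!\bigl[|X-\mu|\ge (\varepsilon/3)m\bigr]\le \frac{m+1}{((\varepsilon/3)m)^2}=O(1/m),
\]
with the constant independent of the proper input $\CJ$. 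A union bound between this event and the Load Lemma event then gives $N(\varphi+\varepsilon)[\CJ^\sigma]\ge \lfloor \varphi m\rfloor+h+2$ with probability $1-\varepsilon''-O(1/m)$, which tends to $1$ uniformly in $\CJ$ as $m\to\infty$.

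The only delicate aspect is the calibration of $\varphi'$ and $\varepsilon''$: $\varphi'$ must exceed $\varphi$ by a margin of order $\varepsilon$ so that $\mu$ dominates the threshold $\lfloor \varphi m\rfloor+h+2$ by $\Omega(m)$—enough to absorb both the term $h+2$ (possible since $h\in o(\sqrt m)$) and the $O(\sqrt m)$ Chebyshev fluctuations—while simultaneously $\varphi'(1+\varepsilon'')\le \varphi+\varepsilon$ keeps the Load Lemma's conclusion within the window tolerated by $N(\varphi+\varepsilon)$. The boundary case $\varphi'=1$ is harmless: one simply takes $t_0=n$, so $L^{t_0}=L\le (\varphi+\varepsilon)L$ is automatic and the trivial estimate $X=m+1$ already exceeds $\lfloor \varphi m\rfloor+h+2$ whenever $(1-\varphi)m\ge h+1$, which again follows from $h\in o(\sqrt m)$ for $m$ large. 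No structural information about $\CJ$ beyond properness is used.
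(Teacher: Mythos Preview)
Your argument is essentially the paper's: halve the $\varepsilon$-margin, apply the Load Lemma at $\varphi'=\varphi+\varepsilon/2$ to convert the load condition into a prefix of length $t_0=\lfloor\varphi' n\rfloor$, note that the count $X$ of the $m{+}1$ largest jobs in that prefix is hypergeometric with variance at most $m{+}1$, and finish with Chebyshev and a union bound. One slip to repair: with $\varepsilon''$ held fixed, $1-\varepsilon''-O(1/m)$ tends to $1-\varepsilon''$, not $1$; you need the observation (implicit in the paper's phrase ``approaching~$1$'') that the Load Lemma, applied with arbitrarily small accuracy parameter still satisfying $\varphi'(1+\varepsilon'')\le\varphi+\varepsilon$, actually drives the Load-Lemma failure probability to $0$ as $m\to\infty$, not merely below the fixed~$\varepsilon''$.
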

We will just state the core argument here and leave the rather technical proof to \Cref{app}.
\begin{proof}[Proof sketch] The Load Lemma basically matches load ratios $L^t/L$ with ratios $t/n$ on the time line of job arrivals, up to some
margin of error. We can then infer that at least $\lfloor \varphi m\rfloor +h+1$ of the $m+1$ largest jobs are among the first $(\varphi+\varepsilon)n$ 
jobs in a job sequence ${\cal J}^\sigma$, with a probability that tends to~1 as $m\rightarrow \infty$. In expectation 
$(\varphi+\varepsilon)(m+1)$ of the $m+1$ largest jobs occur in this prefix, which is strictly more than $\lfloor \varphi m\rfloor +h+1$, for $m$ large enough. 
Formally, we show that (a slight variant of) the random variable $N(\varphi +\varepsilon)[{\cal J}^\sigma]$ 
is hypergeometrically distributed and has variance at most $m+1$. Using Chebyshev's inequality we derive Lemma~\ref{le.ll1}.
\end{proof}
\begin{lemma}\label{le.ll2}
Let $\varepsilon>0$ and $\varphi\in (0,1]$. Then there holds
$$\lim_{m\rightarrow \infty} \inf_{{\cal J} \ {\rm proper}} \bP_{\sigma\sim S_n} \left[\forall_{\tilde{\varphi}\geq \varphi} \
N(\tilde{\varphi}+\varepsilon)[{\cal J}^{\sigma}] \geq
\lfloor \tilde{\varphi} m\rfloor +h+2\right]=1.$$
\end{lemma}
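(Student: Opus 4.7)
The plan is to lift the pointwise bound of Lemma~\ref{le.ll1} to the uniform-in-$\tilde\varphi$ bound of Lemma~\ref{le.ll2} via a discretization-and-union-bound argument. The key observation is that the map $\varphi \mapsto N(\varphi)[{\cal J}^\sigma]$ is non-decreasing, since enlarging $\varphi$ only loosens the defining constraint $L^t \leq \varphi L$. This monotonicity, together with the monotonicity of the floor function, reduces the uniform statement to finitely many pointwise ones.

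Fix $\varepsilon > 0$ and $\varphi \in (0, 1]$, and set $\delta := \varepsilon/2$. I would define the grid $\varphi_j := \varphi + j\delta$ for $j = 0, 1, \ldots, K$, where $K := \lceil(1-\varphi)/\delta\rceil$ so that $\varphi_K \geq 1$. Crucially, $K = O(1/\varepsilon)$ is independent of $m$. At each grid point I invoke Lemma~\ref{le.ll1} with ``$\varphi$'' replaced by $\varphi_j + \delta$ and ``$\varepsilon$'' replaced by $\delta$, obtaining
\[
\bP_{\sigma\sim S_n}\!\left[N(\varphi_j + \varepsilon)[{\cal J}^{\sigma}] \geq \lfloor(\varphi_j + \delta)m\rfloor + h + 2\right] \longrightarrow 1 \quad (m \to \infty),
\]
uniformly over proper sequences ${\cal J}$. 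Since $K$ depends only on $\varepsilon$, a single union bound shows that the intersection of these $K+1$ events also has probability tending to $1$, uniformly over proper ${\cal J}$.

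On this joint good event, consider any $\tilde\varphi \in [\varphi, 1]$ and let $j$ be the unique index with $\tilde\varphi \in [\varphi_j, \varphi_{j+1})$. Combining monotonicity of $N$ with $\tilde\varphi + \varepsilon \geq \varphi_j + \varepsilon$ and the floor-function inequality $\lfloor \tilde\varphi m\rfloor \leq \lfloor \varphi_{j+1} m\rfloor = \lfloor(\varphi_j + \delta)m\rfloor$ yields
\[
N(\tilde\varphi + \varepsilon)[{\cal J}^\sigma] \;\geq\; N(\varphi_j + \varepsilon)[{\cal J}^\sigma] \;\geq\; \lfloor(\varphi_j+\delta)m\rfloor + h + 2 \;\geq\; \lfloor \tilde\varphi m\rfloor + h + 2,
\]
which is the desired inequality uniformly in $\tilde\varphi$. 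Values $\tilde\varphi > 1$ do not arise in the intended applications because the load ratios $L^t/L$ are bounded by $1$, and the definition of stable sequences only invokes $N$ at arguments in that range.

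There is no genuine obstacle here; once the monotonicity observation is in place the proof reduces to careful bookkeeping. The only subtlety is choosing $\delta$ strictly smaller than $\varepsilon$ (taking $\delta = \varepsilon/2$ suffices) so that Lemma~\ref{le.ll1} can still be applied with a positive margin $\delta$ at each grid point while retaining enough slack in the $+\varepsilon$ shift to absorb both the grid spacing and the floor-function jumps between adjacent grid points.
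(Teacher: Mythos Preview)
Your proposal is correct and follows essentially the same approach as the paper's own proof: discretize the interval with spacing $\varepsilon/2$, apply Lemma~\ref{le.ll1} at each of the finitely many grid points, take a union bound, and use monotonicity of $N(\cdot)$ together with monotonicity of the floor function to pass from the grid to arbitrary $\tilde\varphi$. The only cosmetic difference is that the paper builds its grid $\Lambda$ downward from $1$ and rounds $\tilde\varphi$ \emph{up} to $u(\tilde\varphi)\in\Lambda$, whereas you build the grid upward from $\varphi$ and round $\tilde\varphi$ \emph{down} to $\varphi_j$; both choices lead to the same two-sided sandwich and the same invocation of Lemma~\ref{le.ll1} with margin $\varepsilon/2$.
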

\begin{proof}
Let us fix any proper sequence $\CJ$ and set \[\Lambda=\left\{1-\frac{\varepsilon}{2}j\mid j\in\BN, \varphi\le 1-\frac{\varepsilon}{2}j \right\}\] which is a finite set whose size only depends on $\varepsilon$ and $\varphi$.
Given $\tilde\varphi\ge\varphi$, let $u(\tilde\varphi)$ be the smallest element in $\Lambda$ greater or equal to $\tilde\varphi$. Then\[\tilde\varphi\le u( \tilde\varphi) \le\tilde\varphi+\frac{\varepsilon}{2}\] and if we have
\[N(\tilde\varphi+\varepsilon)[{\cal J}^\sigma] < \lfloor \tilde\varphi m \rfloor +h+2\]
there holds
\[N\left(u(\tilde\varphi)+\frac{\varepsilon}{2}\right)[{\cal J}^\sigma] < \lfloor \tilde\varphi m \rfloor+h+2\le \lfloor u(\tilde\varphi) m \rfloor+h+2.\]
In particular, in order to prove the lemma it suffices to verify that
\[\lim\limits_{m\rightarrow\infty}\inf_{\CJ\textrm{ proper}}\bP_{\sigma\sim S_n}\left[\forall_{\tilde \varphi \in\Lambda} N\left(\tilde\varphi+\frac{\varepsilon}{2}\right)[\CJ^\sigma] \ge \lfloor \tilde\varphi m \rfloor+h+2\right]=1.\]
The latter is a consequence of applying \Cref{le.ll1} to all $\tilde\varphi\in\Lambda$ and the Union Bound.
\end{proof}

We can now conclude the main lemma of this section:

\begin{proof}[Proof of Main Lemma~\ref{ml:1}:]
A proper job sequence is stable if the following four properties hold.\\[-15pt]
\begin{itemize}
\item Once $L^t \ge (c-1)\frac{i}{m}\cdot L$ we have $p^t_\mathrm{max} \ge P_h$.\\[-20pt]
\item For every $j\ge i$ the sequence ending once we have $L^t \ge\left (\frac{j}{m}+\frac{\varepsilon}{2}\right) L$ contains at least $j+h+2$ of the $m+1$ largest jobs.\\[-20pt]
\item The sequence ending right before $L^t \ge \frac{i}{m}(c-1)\varepsilon L$ holds contains at least $h+1$ of the $m+1$ largest jobs.\\[-20pt]
\item The sequence ending right before the first of the $h$ largest jobs contains at least $h+1$ of the $m+1$ largest jobs.
\end{itemize}
By the Union Bound we may consider each property separately and prove that it holds with a probability that tends to $1$ as $m\rightarrow \infty$.

Let $\varphi=(c-1)\frac{i}{m}$ and choose $\varepsilon >0$. By the Load Lemma, for $m\geq m(\varepsilon,\varphi)$, after $t=\left\lfloor\varphi n\right\rfloor$ jobs
of a proper job sequence ${\cal J}^\sigma$ have been scheduled, there holds $L^t\le(c-1)\frac{i}{m}\cdot L$ with probability at least $1-\varepsilon$.
Observe that $\varphi$ is a fixed problem parameter so that $m(\varepsilon,\varphi)$ is determined by $\varepsilon$.
The probability of any particular job being among the first $t$ jobs in $\CJ^\sigma$ is
$\lfloor \varphi n\rfloor/n$. Thus $p^t_\mathrm{max} \ge P_h$ holds with probability at least
$1-(1- \lfloor \varphi n\rfloor/n)^h$. Since ${\cal J}^\sigma$ is proper, we have $n>m$. Furthermore, $h=h(m) \in \omega(1)$.
Therefore, the probability that the first property holds tends to~1 as $m\rightarrow \infty$.

The second property is a consequence of \Cref{le.ll2} with $\varphi=\frac{i}{m}$. The third property follows from \Cref{le.ll1}.
We need to choose the $\varepsilon$ in the statement of the lemma to be $\frac{i}{m}(c-1)\varepsilon$.
Finally we examine the last property. In ${\cal J}^\sigma$ we focus on the positions of the $m+1$ largest jobs. Consider any of the $h$ largest jobs.
The probability that it is preceded by less than $h+1$ of the $m+1$ largest jobs is $(h+1)/(m+1)$. Thus the probability of the fourth property not to hold
is at most $h(h+1)/(m+1)$. Since $h\in o(\sqrt{m})$, the latter expression tends to~0 as $m\rightarrow \infty$.
\end{proof}

\subsection{An adversarial analysis}\label{sec:adv}
\vspace{-5pt}
In this section we prove the following main result.
\begin{mainlemma}\label{ml:2}
For every $\varepsilon >0$ and $m\geq m(\varepsilon)$ sufficiently large, $\mathit{ALG}$ is adversarially $(c+\varepsilon)$-competitive on
stable job sequences.
\end{mainlemma}
Consider a fixed $\varepsilon>0$. Given Lemma~\ref{lem:adv}, we may assume that $0<\varepsilon < 2-c$. Suppose that there was a stable job sequence $\CJ^\sigma$ 
such that $\mathit{ALG}(\CJ^\sigma) > (c+\varepsilon)OPT(\CJ^\sigma)$. We will derive a contradiction, given that $m$ is large. In order to simplify 
notation, in the following let $\CJ = \CJ^\sigma$ be the stable job sequence violating the performance ratio of 
$c+\varepsilon$. Let $\CJ = J_1,\ldots, J_n$ and $OPT = OPT(\CJ)$.

Let $J_{n'}$ be the first job that causes $\mathit{ALG}$ to have a makespan greater than $(c+\varepsilon)OPT$ and let $b_0=l_m^{n'-1}$ be the load of the least loaded machine $M_{m}^{n'-1}$ right before $J_{n'}$ is scheduled on it. The makespan after $J_{n'}$ is scheduled, called the \textit{critical makespan},
is at most $b_0+p_{n'}\le b_0 +OPT$. In particular $b_0>(c-1+\varepsilon) OPT$ as well as $p_{n'}>(c-1+\varepsilon) OPT$, see Lemma~\ref{lem:loadbasic}.
Let
$$\textstyle \lambda_\mathrm{start}=\frac{c-1}{1+2c(2-c)}\approx 0.5426 \ \ \ {\rm and} \ \ \ \lambda_\mathrm{end}= \frac{1}{2(c-1+\varepsilon)}\approx 0.5898.$$
There holds $\lambda_\mathrm{start}< \lambda_\mathrm{end}$. The critical makespan of $\mathit{ALG}$ is bounded by
$b_0+ OPT< (1+\frac{1}{c-1+\varepsilon})b_0 = (c+\varepsilon)\frac{b_0}{c-1+\varepsilon}=(c+\varepsilon)2\lambda_\mathrm{end}b_0.$
Since $\mathit{ALG}$ does not achieve a performance ratio of $c+\varepsilon$ on $\CJ$ we have
\begin{equation}\label{eq:contr}
P_{m+1}\le OPT /2 <\lambda_\mathrm{end}b_0.
\end{equation}
Our main goal is to derive a contradiction to this inequality.

\paragraph*{The impact of the variable $\boldsymbol{P_h}$:}A new, crucial aspect in the analysis of $\mathit{ALG}$ is $P_h$, the processing time of the $h$-th
largest job in the sequence $\CJ$. Initially, when the processing of $\CJ$ starts, we have no information on $P_h$ and can only infer
$P_{m+1}\geq \lambda_\mathrm{start}b_0$. The second property in the definition of stable job sequences ensures that $p^t_\mathrm{max}\geq P_h$ once the
load ratio $L^t/L$ is sufficiently large. Note that $\mathit{ALG}$ then also works with this estimate because $P_h \leq p^t_\mathrm{max} \leq O^t$.
This will allow us to evaluate the processing time of flatly scheduled jobs. In order prove that $P_{m+1}$ is large, we will relate $P_{m+1}$ and
$P_h$, i.e.\ we will lower bound $P_{m+1}$ in terms of $P_h$ and vice versa. Using the relation we can then conclude $P_{m+1} \geq \lambda_\mathrm{end}b_0$.
In the analysis we repeatedly use the properties of stable job sequences and will explicitly point to it when this is the case.

We next make the relationship between $P_h$ and $P_{m+1}$ precise. Given $0<\lambda$, let $f(\lambda)=2c\lambda-1$ and given $w>0$, let $g(w)=(c (2c-3)-1)w+4-2c
\approx 0.2854\cdot w+0.3044$. We set $g_b(\lambda)=g\left(\frac{\lambda}{b}\right) b$ and $f_b(w)=f\left(\frac{w}{b}\right) b$, for any $b>0$. Then we will
lower bound $P_{m+1}$ by $g_{b_0}(P_h)$ and $P_h$ by $f_{b_0}(P_{m+1})$. We state two technical propositions.

\begin{proposition}\label{pro.1.1}
For $\lambda>\lambda_\mathrm{{start}}$, we have $g(f(\lambda))>\lambda$.
\end{proposition}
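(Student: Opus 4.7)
The plan is to reduce the claim to verifying that an explicit affine function of $\lambda$ is strictly positive on $[0, \infty)$. First I would substitute $f(\lambda) = 2c\lambda - 1$ into $g$ and expand, which gives
\[
g(f(\lambda)) - \lambda \;=\; (4c^3 - 6c^2 - 2c - 1)\,\lambda \;+\; (5 + c - 2c^2).
\]
Next I would use the defining identity $4c^3 - 14c^2 + 16c - 7 = 0$ (that is, $Q[c] = 0$) to eliminate the cubic term: replacing $4c^3$ by $14c^2 - 16c + 7$, the coefficient of $\lambda$ collapses to the purely quadratic expression $8c^2 - 18c + 6$.

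It then suffices to show that both $8c^2 - 18c + 6$ and $5 + c - 2c^2$ are strictly positive, since in that case $g(f(\lambda)) - \lambda$ is an affine function with positive slope and positive intercept, and hence strictly positive for every $\lambda \geq 0$. This in fact proves more than the proposition, which only needs the inequality on $(\lambda_{\mathrm{start}}, \infty)$. To verify the two sign conditions I would first observe that $Q$ is strictly increasing on $(4/3, \infty)$, since $Q'(x) = 4(3x-4)(x-1)$; then, sandwiching $c$ between two explicit rationals by direct evaluation of $Q$ at, say, $1.8477$ and $1.8478$, I would substitute these bounds into the two quadratics to obtain positivity.

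The step I expect to be most delicate is the final positivity check for the slope: the larger root of $8x^2 - 18x + 6$ is approximately $1.843$, only marginally below $c < 1.8478$, so $c$ must be pinned down with enough precision to guarantee that it lies strictly above this root. The intercept $5 + c - 2c^2$ has a comfortably larger margin and poses no difficulty. Once the numerical bounds on $c$ have been fixed via monotonicity of $Q$, the remainder of the argument is a routine computation of a few lines.
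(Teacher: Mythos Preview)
Your approach is correct and essentially the same as the paper's: both compute the affine function $F(\lambda)=g(f(\lambda))-\lambda=(4c^3-6c^2-2c-1)\lambda+(5+c-2c^2)$, observe the slope is positive, and then verify positivity at a single point. The only difference is that the paper checks $F(\lambda_{\mathrm{start}})>0$ numerically, whereas you check $F(0)>0$ (thereby proving the stronger statement that $F>0$ on all of $[0,\infty)$) and use the identity $Q[c]=0$ to reduce the slope to the quadratic $8c^2-18c+6$ before the numerical bound --- a tidy refinement, but not a genuinely different route.
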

\begin{proof}
Consider the function
\begin{align*}F(\lambda)=g(f(\lambda))-\lambda&=(c (2c-3)-1)(2c\lambda-1)+4-2c-\lambda\\
&=(4c^3-6c^2-2c-1)\lambda -2c^2+c+5\\
&\approx 0.05446\cdot \lambda +0.01900.
\end{align*}
The function $F$ is linear and strictly increasing in $\lambda$. Hence for the proposition to hold it suffices to verify that $F(\lambda_\mathrm{start})\ge 0$. We can now compute that
$F(\lambda_\mathrm{start})
\approx 0.04865 >0.$
\end{proof}

\begin{proposition}\label{pro.1.2}
For $0<\varepsilon \le 1$, we have $g(1-\varepsilon)> \lambda_\mathrm{{end}}$.
\end{proposition}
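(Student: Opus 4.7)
The plan is to clear denominators and reduce the claim to a simple polynomial inequality in $\varepsilon$. Since $c > 1$ and $\varepsilon > 0$, the inequality $g(1-\varepsilon) > \lambda_\mathrm{end} = 1/(2(c-1+\varepsilon))$ is equivalent to $2(c-1+\varepsilon)\,g(1-\varepsilon) - 1 > 0$. I would expand $g(1-\varepsilon) = (2c^2 - 5c + 3) - (2c^2 - 3c - 1)\varepsilon$ and multiply out. The value of the left-hand side at $\varepsilon = 0$ is $2(c-1)(2c^2 - 5c + 3) - 1 = 4c^3 - 14c^2 + 16c - 7$, which vanishes by the definition of $c$ as a root of $Q[x]$. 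Hence the left-hand side factors as $\varepsilon \cdot [A - B\varepsilon]$, and it suffices to show $A - B\varepsilon > 0$ on $(0,1]$.

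Next, I would compute $A$ and $B$ explicitly, using the cubic once more to reduce $-2c^3 + 7c^2$ to $8c - 7/2$. A short calculation then yields $A = 2c - 3$ and $B = 2(2c^2 - 3c - 1)$, both of which are positive for $c \approx 1.8478$. Consequently $A - B\varepsilon$ is linear and strictly decreasing in $\varepsilon$, so it attains its minimum on $(0,1]$ at $\varepsilon = 1$. The required inequality therefore reduces to $(2c - 3) - 2(2c^2 - 3c - 1) > 0$, i.e.\ $4c^2 - 8c + 1 < 0$. Since the roots of $4x^2 - 8x + 1$ are $1 \pm \sqrt{3}/2$, this amounts to $c < 1 + \sqrt{3}/2 \approx 1.866$, which follows from the explicit radical formula displayed for $c$ (giving $c < 1.8478$).

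The main (and essentially only) obstacle is noticing that the constants in the definitions of $g$ and $\lambda_\mathrm{end}$ are calibrated so that the two sides of the claim coincide at $\varepsilon = 0$ exactly because $c$ is a root of $Q$. Without that observation one would be chasing a numerical inequality whose two sides agree to three decimals as $\varepsilon \to 0^+$, which is both unilluminating and error-prone. A conceptually cleaner alternative, which I would mention in the write-up, is to observe that $\lambda_\mathrm{end}(\varepsilon) = 1/(2(c-1+\varepsilon))$ is strictly convex in $\varepsilon$ while $g(1-\varepsilon)$ is linear, so their difference $h(\varepsilon)$ is strictly concave on $[0,1]$; combined with $h(0) = 0$ (from the cubic) and $h(1) > 0$ (from $c < 1 + \sqrt{3}/2$), concavity immediately yields $h(\varepsilon) > 0$ for all $\varepsilon \in (0,1]$, bypassing the explicit quadratic expansion altogether.
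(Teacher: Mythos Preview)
Your proof is correct and follows essentially the same approach as the paper's: clear denominators, expand, observe that the constant term is $Q[c]=0$, and then verify that the remaining expression $\varepsilon\cdot\big[(2c-3)-2(2c^2-3c-1)\varepsilon\big]$ is positive on $(0,1]$. Your computation is in fact cleaner than the paper's (whose displayed coefficients contain apparent typos), and your explicit reduction to the condition $c<1+\sqrt{3}/2$ is more rigorous than the paper's purely numerical final check.
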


Note that the following proof determines the choice of our competitive ratio $c$, which was chosen minimal such that $Q[c]=4c^3-14c^2+16c-7\ge 0$.

\begin{proof}
We calculate that
\begin{align*}
g(1-\varepsilon)-\lambda_\mathrm{end} &= (c(2c-3)-1)(1-\varepsilon)+4-2c-\frac{1}{2(c-1+\varepsilon)}\\
&=\frac{2(c-1+\varepsilon)(2c^2-5c+3-(2c^2-3c-1)\varepsilon)-1}{2(c-1+\varepsilon)}\\
&=\frac{4c^3-14c^2+16c-7 +(4-2c)\varepsilon - 2(2c^2-3c-1)\varepsilon^2}{2(c-1+\varepsilon)}.
\end{align*}
Recall that $Q[c]=4c^3-14c^2+16c-7=0$. For $0<\varepsilon \le 1$ we have
\[(4-2c)\varepsilon - (2c^2-3c-1)\varepsilon^2 \approx 0.3044\cdot\varepsilon -0.2854\cdot\varepsilon^2 >0.\]
Thus we see that $g(1-\varepsilon)-\lambda_\mathrm{end}>0$ and can conclude the lemma.
\end{proof}
\subsubsection{Analyzing large jobs towards lower bounding $\boldsymbol{P_h}$ and $\boldsymbol{P_{m+1}}$}
Let $b>(c-1+\varepsilon) OPT$ be a value such that immediately before $J_{n'}$ is scheduled at least $m-h$ machines have a load of at least $b$.
Note that $b=b_0$ satisfies this condition but we will be interested in larger values of $b$ as well. We call a machine \textit{$b$-full} once
its load is at least $b$; we call a job $J$ a \textit{$b$-filling job} if it causes the machine it is scheduled on to become $b$-full.
We number the $b$-filling jobs according to their order of arrival $J^{(1)}, J^{(2)}, \ldots $ and let $t(j)$ denote the time of arrival of the $j$-th filling job~$J^{(j)}$.

Recall that our main goal is to show that $P_{m+1}\ge \lambda_\mathrm{end}b_0$ holds. To this end we will prove that the \textit{$b_0$-filling jobs}
have a processing time of at least $\lambda_\mathrm{end}b_0$. As there are $m$ such jobs, the bound on $P_{m+1}$ follows by observing that $J_{n'}$
arrives after all \textit{$b_0$-filling jobs} are scheduled and that its processing time exceeds $\lambda_\mathrm{end}b_0$ as well. 
In fact, since $OPT\geq b_0$, we have
\begin{equation}\label{eq.lastjob}
p_{n'}> (c-1) OPT > 0.847\cdot OPT > \lambda_\mathrm{end}b_0 \approx 0.5898\cdot b_0.
\end{equation}
We remark that different to previous analyses in the literature we do not solely rely on lower bounding the processing time of filling jobs.
By using the third property of stable job sequences, we can relate load and the size of the $(m+1)$-st largest job at specific points
in the time horizon, cf.\ \Cref{le.techlemma}.

In the following we regard $b$ as fixed and omit it from the terms {\em filling job\/} and {\em full}. Let
$\lambda=\max \{\lambda_\mathrm{start}b,\min\{g_{b}\left(P_h\right),\lambda_\mathrm{end}b\}\}$. We call a job \textit{large} if it has
a processing time of at least $\lambda$. Let $\tilde t=t(m-h)$ be the time when the $(m-h)$-th filling job arrived.
The remainder of this section is devoted to showing the following important Lemma~\ref{le.fill}. Some of the underlying
lemmas, but not all of them, hold if $m\geq m(\varepsilon)$ is sufficiently large. We will make the dependence clear.
\begin{lemma}\label{le.fill}
At least one of the following statements holds:
\begin{itemize}
\itemsep0em
\item All filling jobs are large.
\item If $m\geq m(\varepsilon)$, there holds $P_{m+1}^{\tilde t} \ge \lambda=\max\{\lambda_\mathrm{{start}}b,\min\{g_{b}\left(P_h\right),\lambda_\mathrm{{end}}b\}\}$, i.e.\
there are at least $m+1$ large jobs once the $(m-h)$-th filling job is scheduled.
\end{itemize}
\end{lemma}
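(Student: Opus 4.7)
The plan is to prove Lemma~\ref{le.fill} by contradiction. Suppose some filling job among $J^{(1)},\ldots,J^{(m-h)}$ is not large; I pick $J^{(j^*)}$ to be the first such non-large filling job. Then $J^{(1)},\ldots,J^{(j^*-1)}$ are all large, contributing $j^*-1$ large jobs placed on distinct machines by time $\tilde t$. The goal is to locate $m+2-j^*$ further large jobs among the first $\tilde t$ arrivals, which will give $P_{m+1}^{\tilde t}\ge \lambda$.

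The central step is to examine the instant $t(j^*)$ at which $J^{(j^*)}$ is scheduled. Since the placement of $J^{(j^*)}$ makes its target machine $b$-full while $p_{J^{(j^*)}}<\lambda$, that machine has load at least $b-\lambda$ beforehand. I would split into cases according to which of the three candidate machines $\mathit{ALG}$ uses at time $t(j^*)$. If $J^{(j^*)}$ is put on $M_m^{t(j^*)-1}$ (either because the schedule at time $t(j^*)$ is steep, or because in a flat schedule the first two probes failed), then the least loaded machine has load $\ge b-\lambda$, so every machine does, and the total load $L^{t(j^*)-1}$ is at least $(b-\lambda)$ -- which, using $b>(c-1+\varepsilon)\OPT\ge(c-1+\varepsilon)L$ and $\lambda\le\lambda_{\mathrm{end}} b$, constitutes a large enough fraction of $L$ to trigger the second stability property. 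If instead $J^{(j^*)}$ is placed on $M_{m-h}^{t(j^*)-1}$, then by the algorithm's logic $l_i^{t(j^*)-1}+p_{J^{(j^*)}}>c\cdot O^{t(j^*)}$, giving $l_i^{t(j^*)-1}>cO^{t(j^*)}-\lambda$; hence the $i$ most loaded machines are already very full. The case $M_i^{t(j^*)-1}$ is analogous: the $i$ heaviest machines each carry load $\ge b-\lambda$.

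In each of the three subcases the conclusion is that at time $t(j^*)$ the ratio $L^{t(j^*)}/L$ is large enough to engage the second stability property at some threshold $j/m+\varepsilon/2$ with $j\ge i$, hence at least $j+h+2$ of the $m+1$ largest jobs of $\CJ$ have already arrived by time $t(j^*)\le\tilde t$. Each of these jobs has processing time at least $P_{m+1}$, and here is where the relationship between $\lambda$, $P_h$ and $P_{m+1}$ is used: by the definition $\lambda=\max\{\lambda_{\mathrm{start}}b,\min\{g_b(P_h),\lambda_{\mathrm{end}}b\}\}$, together with the trivial $P_{m+1}\le P_h$ and the first/fourth stability properties controlling when $P_h$ becomes visible, these jobs qualify as large in our sense. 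Adding the $j^*-1$ earlier large filling jobs (which lie on different machines and hence are not double-counted with the newly found ones) yields at least $m+1$ large jobs before time $\tilde t$, i.e.\ $P_{m+1}^{\tilde t}\ge\lambda$.

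The main obstacle I expect is bookkeeping: in each placement case, I have to convert the local load condition (either an all-machines statement or an ``$i$ heaviest machines'' statement) into an explicit lower bound on $L^{t(j^*)}/L$ that is at least $i/m+\varepsilon/2$, so that the second stability property applies and returns enough of the top $m+1$ jobs to compensate for the $m+2-j^*$ needed. This is where the specific choice $i\approx(2c-3)m+h$ and the constants $\lambda_{\mathrm{start}},\lambda_{\mathrm{end}}$ are used in a tight way, and case (iii) (placement on $M_i$) looks the most delicate, presumably pinning down the cubic $Q[c]=4c^3-14c^2+16c-7$ via the function $g$ as in Propositions~\ref{pro.1.1} and~\ref{pro.1.2}. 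A secondary technical nuisance is avoiding counting a filling job both as one of the $j^*-1$ large filling jobs and as one of the top-$(m+1)$ jobs supplied by stability, which I would handle by noting that the $j^*-1$ filling jobs sit on machines disjoint from the target of $J^{(j^*)}$ and that stability gives a surplus of at least $h+2$ jobs, comfortably absorbing any overlap once $m\ge m(\varepsilon)$ is large.
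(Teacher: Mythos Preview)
There is a genuine gap at the heart of your argument. You want to exhibit $m+1$ jobs of size at least $\lambda$ by time $\tilde t$, and you plan to get most of them from the second stability property, which supplies a certain number of the $m+1$ largest jobs of $\CJ$. But those jobs are only guaranteed to have size at least $P_{m+1}$, not at least $\lambda$. Your sentence ``together with the trivial $P_{m+1}\le P_h$ and the first/fourth stability properties \ldots\ these jobs qualify as large'' does not establish $P_{m+1}\ge\lambda$; indeed, $P_{m+1}\ge\lambda$ is precisely the conclusion $P_{m+1}^{\tilde t}\ge\lambda$ you are trying to reach, so the step is circular. Nothing in the definition $\lambda=\max\{\lambda_{\mathrm{start}}b,\min\{g_b(P_h),\lambda_{\mathrm{end}}b\}\}$ or in $P_{m+1}\le P_h$ forces the $(m{+}1)$-st largest job to exceed $\lambda_{\mathrm{start}}b$, let alone $g_b(P_h)$.

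The paper closes this gap by a pigeonhole argument you are missing. It first shows unconditionally (Lemma~\ref{le.latefill}) that every \emph{late} filling job --- filling jobs after index $i$ that are scheduled flatly --- has size exceeding $\max\{\lambda_{\mathrm{start}}b,g_b(P_h)\}\ge\lambda$. The case split is then not on which machine the first non-large filling job lands, but on whether $\overline{L}_s^{t(s)-1}\ge\alpha^{-1}b$, where $s$ is the last index of a steeply scheduled filling job. If this fails, Lemma~\ref{le.earlyfill} shows all early filling jobs are large too, giving the first alternative. If it holds, Lemma~\ref{le.tech.Dts} converts it into $L^{t(s)-1}>(s/m+\varepsilon/2)L$, so stability places at least $s+h+2$ of the top $m+1$ jobs before $J^{(s)}$; hence at most $m-s-h-1$ of them come after. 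But there are $m-h-s$ late filling jobs after $J^{(s)}$, all large, so one of them is \emph{not} among the top $m+1$: this forces $P_{m+1}\ge\lambda$ and only then do the top-$(m{+}1)$ jobs count as large. Your three-case load estimate on the target machine of $J^{(j^*)}$ does not produce this surplus, and your intuition that case~(iii) pins down the cubic $Q[c]$ is off --- the cubic enters via Proposition~\ref{pro.1.2} in the endgame of Main Lemma~\ref{ml:2}, not inside Lemma~\ref{le.fill}.
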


Before we prove the lemma we derive two important implications towards a lower bound of $P_{m+1}$.
\begin{corollary}\label{le.fill2}
We have
$P_{m+1} \ge \lambda=\max\{\lambda_\mathrm{{start}}b_0,\min\{g_{b_0}\left(P_h\right),\lambda_\mathrm{{end}}b_0\}\}.$
\end{corollary}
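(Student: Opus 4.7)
The plan is to apply Lemma~\ref{le.fill} with the choice $b = b_0$. First I would verify the hypothesis: Lemma~\ref{lem:loadbasic} already gives $b_0 > (c-1+\varepsilon)OPT$, and since $b_0 = l_m^{n'-1}$ is by definition the load of the \emph{least} loaded machine at time $n'-1$, in fact all $m$ machines are $b_0$-full at that point, which certainly covers the required $m-h$. Hence the lemma applies, and moreover there are exactly $m$ distinct $b_0$-filling jobs, one per machine.

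Lemma~\ref{le.fill} offers two alternatives and I would dispatch them separately. In the second alternative we have $P_{m+1}^{\tilde t} \ge \lambda$, and the claim is immediate since $P_{m+1}^t$ is monotone non-decreasing in $t$, so $P_{m+1} \ge P_{m+1}^{\tilde t} \ge \lambda$. In the first alternative, all $m$ filling jobs have processing time at least $\lambda$, and I would argue that $J_{n'}$ furnishes an $(m+1)$-st large job. It is distinct from every filling job, because $J_{n'}$ is placed on $M_m^{n'-1}$, a machine that already has load $b_0$ and is therefore already full, so $J_{n'}$ itself does not fill any machine. Furthermore, by inequality~\eqref{eq.lastjob} we have $p_{n'} > \lambda_\mathrm{end} b_0$, and since $\lambda_\mathrm{start} < \lambda_\mathrm{end}$ the definition of $\lambda$ forces $\lambda \le \lambda_\mathrm{end} b_0 < p_{n'}$. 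Thus we exhibit $m+1$ distinct jobs of processing time at least $\lambda$, giving $P_{m+1} \ge \lambda$.

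There is no real obstacle here---the corollary is essentially a direct packaging of Lemma~\ref{le.fill}. My only care would be the bookkeeping that $J_{n'}$ cannot coincide with a $b_0$-filling job and the elementary bound $\lambda \le \lambda_\mathrm{end} b_0$ needed to certify $J_{n'}$ as the extra large job in the first alternative.
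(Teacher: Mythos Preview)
Your proof is correct and follows exactly the approach of the paper's own (very terse) proof: apply Lemma~\ref{le.fill} with $b=b_0$, and in the first alternative supplement the $m$ large filling jobs with $J_{n'}$ via inequality~\eqref{eq.lastjob}. Your additional care in checking that $J_{n'}$ is not itself a filling job and that $\lambda \le \lambda_\mathrm{end} b_0$ fills in details the paper leaves implicit.
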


\begin{proof}
Apply the previous lemma, taking into account that $b\geq b_0$, and use that there are $m$ many $b_0$-filling jobs followed by $J_{n'}$. The latter has size at least $\lambda$ by inequality~\eqref{eq.lastjob}.
\end{proof}

We also want to lower bound the processing time of the $(m+1)$-st largest job at time $\tilde t$. However, at that time only $m-h$ filling jobs
have arrived. The next lemma ensures that, if additionally $P_{h}$ is not too large, this is not a problem.
\begin{corollary}\label{le.fill3}
If $P_h \le (1-\varepsilon)b$ and $m\geq m(\varepsilon)$, the second statement in \Cref{le.fill} holds, i.e.\
$P_{m+1}^{\tilde t} \ge \lambda=\max\{\lambda_\mathrm{{start}}b,\min\{g_{b}\left(P_h\right),\lambda_\mathrm{{end}}b\}\}.$
\end{corollary}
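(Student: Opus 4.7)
My plan is to invoke Lemma \ref{le.fill} as a dichotomy and argue by contradiction: if its second alternative $P_{m+1}^{\tilde t}\ge\lambda$ already holds we are done, so assume every filling job is large yet $P_{m+1}^{\tilde t}<\lambda$, and use the extra hypothesis $P_h\le(1-\varepsilon)b$ to reach a contradiction.

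The first move is to lower bound the average load at time $\tilde t$. The $m-h$ machines that are $b$-full at time $\tilde t$ contribute $L^{\tilde t}\ge\frac{m-h}{m}b$, and combined with $b>(c-1+\varepsilon)\,OPT\ge(c-1+\varepsilon)L$ and $h\in o(\sqrt{m})$ this yields $L^{\tilde t}/L\ge c-1+\varepsilon/2$ for $m\ge m(\varepsilon)$. I would then apply the second property of a stable sequence with the largest integer $j$ satisfying $j/m+\varepsilon/2\le L^{\tilde t}/L$, so $j\approx (c-1)m$, to conclude that the prefix $J_1,\ldots,J_{\tilde t}$ contains at least $j+h+2$ of the $m+1$ largest jobs of $\CJ$. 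The next ingredient is to show that these top-$(m+1)$ jobs themselves are large, i.e.\ have size $\ge\lambda$. For this I would apply Corollary \ref{le.fill2} with $b_0\le b$ to obtain $P_{m+1}\ge \lambda_{b_0}$, and then upgrade the bound to $P_{m+1}\ge\lambda$ using the monotonicity of the ingredients defining $\lambda$ in $b$ (in particular $g_b(P_h)-g_{b_0}(P_h)=(4-2c)(b-b_0)\ge 0$) together with Proposition \ref{pro.1.2} and the hypothesis $P_h\le(1-\varepsilon)b$, which together pin down which branch of $\lambda=\max\{\lambda_\mathrm{start}b,\min\{g_b(P_h),\lambda_\mathrm{end}b\}\}$ is active.

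The final step combines the $\ge (c-1)m+h+2$ large top-$(m+1)$ jobs in the prefix with the $m-h$ large filling jobs in the prefix via inclusion-exclusion to force strictly more than $m$ distinct large jobs in the prefix, contradicting $P_{m+1}^{\tilde t}<\lambda$. The main obstacle is precisely this overlap analysis: a priori every one of the $m-h$ filling jobs could coincide with a top-$(m+1)$ job on the $m-h$ $b$-full machines, so that the union has only $(c-1)m+O(h)<m+1$ elements, which is short of the target. Bridging the gap is the delicate part of the proof, and I expect it to require exploiting $\mathit{ALG}$'s flat-case placement rule — which dispatches jobs among $M_i^{t-1}$, $M_{m-h}^{t-1}$, and $M_m^{t-1}$ rather than piling them onto a single machine class — so that top-$(m+1)$ jobs must land on machines outside the $m-h$ full ones; an alternative route would be to sharpen the bound on $L^{\tilde t}/L$ by accounting for the load on the $h$ non-full machines under the constraint $P_h\le(1-\varepsilon)b$, thereby pushing the usable value of $j$ in property~2 closer to $m-h-1$ and guaranteeing enough top-$(m+1)$ jobs in the prefix on their own.
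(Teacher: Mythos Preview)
Your approach has a real gap at the overlap step, and neither of your proposed remedies will close it. Using the second property of stable sequences you can extract at most about $(c-1)m+h+2$ of the top-$(m+1)$ jobs in the prefix $[1,\tilde t]$, because $L^{\tilde t}/L\ge\frac{m-h}{m}\cdot\frac{b}{L}$ and $b$ is only guaranteed to exceed $(c-1+\varepsilon)L$; the $h$ non-full machines contribute at most $O(hb/(mL))=o(1)$ to this ratio, so you cannot push the usable $j$ anywhere near $m-h-1$. And nothing in $\mathit{ALG}$'s placement rule prevents a top-$(m+1)$ job from itself being a filling job, so the overlap can swallow all $(c-1)m+h+2$ of your top jobs and leave you with only the $m-h$ filling jobs. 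There is also a problem already in your step~4: Corollary~\ref{le.fill2} gives $P_{m+1}\ge\lambda_{b_0}$, and your own monotonicity computation shows $\lambda=\lambda_b\ge\lambda_{b_0}$, which is the wrong direction for upgrading to $P_{m+1}\ge\lambda$.

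The paper's argument is entirely different and sidesteps the overlap issue by construction. It invokes the \emph{fourth} property of stable sequences, not the second, to locate merely $h+1$ of the top-$(m+1)$ jobs that arrive \emph{strictly before the first filling job}; these are then automatically disjoint from all $m-h$ filling jobs, yielding $m+1$ jobs by time~$\tilde t$. The hypothesis $P_h\le(1-\varepsilon)b$ is precisely what makes property~4 fire: if the first filling job $J=J_t$ is among the $h$ largest jobs, clause~(b) applies directly; otherwise $p_t\le P_h\le(1-\varepsilon)b$, so the machine $J$ filled carried load at least $b-p_t\ge\varepsilon b>\varepsilon(c-1)\,OPT$ beforehand, which forces $L^{t-1}\ge\frac{i}{m}\varepsilon(c-1)\,OPT\ge\frac{i}{m}(c-1)\varepsilon L$ and triggers clause~(a). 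In short, the role of $P_h\le(1-\varepsilon)b$ is not to pin down a branch of~$\lambda$ but to guarantee that the first filling job already sits on a non-trivially loaded machine.
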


The proof of the lemma makes use of the fourth property of stable job sequences. In particular we would not expect such a result to hold in the adversarial model.

\begin{proof}
We will show that the first statement in \Cref{le.fill} implies the second one if $P_h \le (1-\varepsilon)b$ holds.  In order to conclude the second statement it suffices to verify
that at least $m+1$ jobs of processing time $\lambda$ have arrived until time~$\tilde t$. By the first statement we know that there were $m-h$ large filling jobs coming before time $\tilde t$.
Hence it is enough to verify that $h+1$ large jobs arrive (strictly) before the first filling job $J$.

To show that there are $h+1$ jobs with a processing time of at least $P_{m+1}$ before the first filling job $J$, we use the last property of stable job sequences.
If $J$ is among the $h$ largest jobs, we are done immediately by the condition. Else $J$ had size at most $P_h\le (1-\varepsilon)b$. Assume $J=J_t$ was scheduled on the machine $M_j^{t-1}$, for $j\in\{i,m-h,m\}$, and let $l=l_j^{t-1}$ be its load before $J$ was scheduled. Because $J$ is a filling job we have \[l\ge b-P_h \ge \varepsilon b \ge \varepsilon(c-1)OPT.\] In particular, before $J$ was scheduled, the average load at that time was at least \[\frac{jl}{m}\ge \frac{il}{m}\ge\varepsilon \frac{i}{m}(c-1)OPT.\] Again,
by the last property of stable job sequences, at least $h+1$ jobs of processing time at least~$P_{m+1}$ were scheduled before this was the case.
\end{proof}

We introduce late and early filling jobs. We need a certain condition to hold, see \Cref{le.techlemma}, in order to show that the early filling jobs
are large. We show that if this condition is not met, the fact that the given job sequence is stable ensures that $P_m^{\tilde t} \ge \lambda$.

Let $s$ be chosen maximal such that the $s$-th filling job is scheduled steeply. If $s\le i$, then set $s=i+1$ instead. We call all filling jobs $J^{(j)}$
with $j>i$ that are scheduled flatly \textit{late filling jobs}. All other filling jobs are called \textit{early filling jobs}. In particular the
job $J^{(s+1)}$ and the filling jobs afterwards are late filling jobs. The following proposition implies that the fillings jobs after $J^{(m-h)}$,
if they exist, are all late, i.e.\ scheduled flatly.
\begin{proposition}\label{pro.endflat}
We have $s \le m-h$ if $m\geq m(\varepsilon)$.
\end{proposition}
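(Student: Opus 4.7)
The plan is to argue by contradiction: suppose $s > m-h$ for some sufficiently large $m$, so the $s$-th filling job $J^{(s)}$ is scheduled steeply at some time $t:=t(s)$. Since $i = \lceil (2c-3)m \rceil + h \approx 0.6956\,m$ and $h \in o(\sqrt{m})$, we have $m-h > i$ once $m$ is large, so $s > m-h$ automatically rules out the degenerate clause $s = i+1$ of the definition. By the defining property of $b$, at least $m-h$ filling events occur strictly before $J_{n'}$, and filling events at or after $J_{n'}$ play no role in the subsequent analysis; I will therefore assume $t \leq n'$.

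The core calculation is then short. Immediately before $J^{(s)}$ is placed, $s-1 \geq m-h$ machines already carry load at least $b$, so $l_{m-h}^{t-1} \geq b$. In particular the $m-h-i$ machines $M_{i+1}^{t-1},\dots,M_{m-h}^{t-1}$ are all full, which yields
\[
L_{i+1}^{t-1} \;=\; \frac{1}{m-i}\sum_{r=i+1}^{m} l_r^{t-1} \;\geq\; \frac{m-h-i}{m-i}\,b \;=\; (1-o(1))\,b,
\]
where the $o(1)$ is with respect to $m\to\infty$ and uses $h \in o(m)$. Steepness of the schedule at time $t$ then forces $l_k^{t-1} \geq \alpha\,L_{i+1}^{t-1} \geq \alpha(1-o(1))\,b$. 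On the other hand, since $J_{n'}$ is by definition the first job whose placement violates the $(c+\varepsilon)$-ratio, the makespan at every time before $n'$ is at most $(c+\varepsilon)\OPT$, hence $l_k^{t-1} \leq l_1^{t-1} \leq (c+\varepsilon)\OPT$.

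Combining the two bounds and substituting $b > (c-1+\varepsilon)\OPT$ yields $\alpha(1-o(1))(c-1+\varepsilon) < c+\varepsilon$. The contradiction hinges on the strict inequality $\alpha(c-1) > c$. Using $\alpha = 2(c-1)/(2c-3)$, this reduces to $2(c-1)^2 > c(2c-3)$, i.e.\ to $c < 2$, which holds since $c < 1.8478$. Thus $\alpha(c-1+\varepsilon) - (c+\varepsilon) \geq \alpha(c-1) - c = (2-c)/(2c-3) > 0$ is a strictly positive constant, and this gap survives multiplication by any $1-o(1)$ factor once $m$ is sufficiently large, delivering the desired contradiction.

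The main obstacle I anticipate is the bookkeeping needed to justify $t \leq n'$: the parameter $s$ is defined over \emph{all} filling jobs, so a priori $t(s)$ could lie beyond $n'$, in which case the makespan bound $l_1^{t-1} \leq (c+\varepsilon)\OPT$ would fail. The resolution is that the choice of $b$ guarantees at least $m-h$ filling events before $J_{n'}$, and later arguments (notably the proof of Lemma~\ref{le.fill}) only refer to filling events up to time $n'$; one may therefore tacitly restrict $s$ to these events without affecting the subsequent proof.
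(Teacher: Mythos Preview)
Your argument is correct and takes a genuinely different route from the paper. You bound the single-machine load $l_k^{t-1}$ from above by the current makespan $(c+\varepsilon)\OPT$ and from below by $\alpha(1-o(1))\,b > \alpha(1-o(1))(c-1+\varepsilon)\OPT$, reducing the contradiction to the clean inequality $\alpha(c-1)>c\Leftrightarrow c<2$. The paper instead aggregates over machines: from the same lower bound on $l_k^{t-1}$ (so the first $k$ machines carry load at least $\alpha(c-1)\OPT$ each) together with the $m-h-k$ further full machines of load at least $(c-1)\OPT$, it concludes $L^{t-1}>\OPT$, contradicting $L^{t-1}\le L\le\OPT$.

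The trade-off is exactly the one you flag. Your makespan bound $l_1^{t-1}\le(c+\varepsilon)\OPT$ is only available for $t\le n'$. For $b=b_0$ this is automatic (all $m$ $b_0$-filling events precede $J_{n'}$), but the section applies the proposition for a general $b$ where filling events beyond index $m-h$ could in principle occur after $J_{n'}$. Your resolution---restricting $s$ to filling events before $J_{n'}$---is valid, because every downstream use of $s$ (the late/early split, $\overline{L}_s^{t(s)-1}$, and the count $m-h-s$ in Lemma~\ref{le.techlemma}) references only the filling jobs $J^{(1)},\ldots,J^{(m-h)}$, all of which arrive before $J_{n'}$ by the defining property of $b$; and Corollaries~\ref{le.fill2}--\ref{le.fill3} likewise never need filling events past $J_{n'}$. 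But this verification is left implicit in your write-up. The paper's total-load argument sidesteps the issue entirely, since $L^{t-1}\le\OPT$ holds at every time, at the cost of a slightly longer computation.
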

\begin{proof}[Proof of \Cref{pro.endflat}]
Let $\tilde h < h$ and $t=t(m-\tilde h)$ be the time the $(m-\tilde h)$-th filling job $J$ arrived. We need to see that $J$ was scheduled flatly. Assume that was not the case. We know that for $j\le m-\tilde h$ we have $l_j^{t-1}\ge b> (c-1+\varepsilon)OPT$. In particular we have
\[ L_{i+1}^{t-1} =\frac{1}{m-i}\sum\limits_{j=i+1}^m l_{j}^{t-1} > \frac{m-i-h-1}{m-i} (c-1+\varepsilon)OPT
\ge (c-1)OPT.\]
For the last inequality we need to choose $m$ large enough. If the schedule was steep at time $t$, then we had for every $j\le k$ \[l_j^{t-1}
\ge l_k^{t-1} \ge \alpha (c-1)OPT = \frac{2(c-1)^2}{2c-3}OPT.\]
But then the average load at time $t-1$ would be:
\begin{align*}L^{t-1}&=\frac{1}{m}\sum\limits_{j=1}^m l_j^{t-1}
\\& >\frac{k\frac{2(c-1)^2}{2c-3}+(m-h-k)(c-1)}{m}OPT
\\& \ge\frac{((4c-7)m+2h)\frac{2(c-1)^2}{2c-3}+(m-(4c-7)m-h)(c-1)}{m}OPT
\\&\approx 1.3247 \cdot OPT.\end{align*}
For the second inequality we need to observe that we have $k\ge (4c-7)m+2h$ and that the previous term decreases if we decrease $k$. One also can check that the second last term is minimized if $h=0$.

But now we have shown $L^{t-1}>OPT$, which is a contradiction. Hence the schedule could not have been steep at time $t-1$.
\end{proof}

We need a technical lemma. For any time $t$, let $\overline{L}_s^t = {1\over m-h-s+1}\sum_{j=s}^{m-h} l_j^t$ be the average load on 
the machines numbered $s$ to $m-h$. 
\begin{restatable}{lemma}{letechDts}\label{le.tech.Dts}
If $\overline{L}_s^{t(s)-1}\ge \alpha^{-1} b$ holds and $m\geq m(\varepsilon)$, we have $L^{t(s)-1}>\left(\frac{s}{m}+\frac{\varepsilon}{2}\right)\cdot L$.
\end{restatable}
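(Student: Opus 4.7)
The plan is to lower bound $mL^{t(s)-1}$ using the structure at time $t(s)-1$, then translate via $b>(c-1+\varepsilon)\OPT\ge(c-1+\varepsilon)L$ into the desired estimate on $L^{t(s)-1}/L$. At time $t(s)-1$ exactly $s-1$ filling jobs have been scheduled, and since $s\ge i+1$ the sorted ordering gives $l_j^{t(s)-1}\ge b$ for $j=1,\ldots,s-1$. Combining this with the hypothesis $\sum_{j=s}^{m-h}l_j^{t(s)-1}\ge(m-h-s+1)b/\alpha$ yields $(m-i)L_{i+1}^{t(s)-1}\ge(s-1-i)b+(m-h-s+1)b/\alpha$, and adding $\sum_{j=1}^{i} l_j^{t(s)-1}\ge ib$ produces the direct estimate
\[
mL^{t(s)-1}\;\ge\; A(s)\,b,\qquad A(s)=(s-1)+\tfrac{m-h-s+1}{\alpha}.
\]

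When $s>i$, the maximality in the definition of $s$ forces the $s$-th filling job to be scheduled steeply, so $l_k^{t(s)-1}\ge\alpha L_{i+1}^{t(s)-1}$. Strengthening the sum over $j\le k$ from $\sum_{j=1}^{k} l_j^{t(s)-1}\ge kb$ to $\sum_{j=1}^{k} l_j^{t(s)-1}\ge k\alpha L_{i+1}^{t(s)-1}$, while keeping $\sum_{j=k+1}^{i} l_j^{t(s)-1}\ge (i-k)b=(m-i)b$ via $b$-fullness, gives the steep estimate
\[
mL^{t(s)-1}\;\ge\; D(s)\,b,\qquad D(s)=(m-i)+\Bigl(\tfrac{k\alpha}{m-i}+1\Bigr)\Bigl[(s-1-i)+\tfrac{m-h-s+1}{\alpha}\Bigr].
\]
A short computation shows $A=D$ exactly at $s^{*}=i+1+h/(\alpha-1)=i+1+h(2c-3)$, with $A$ dominating for $s<s^{*}$ and $D$ for $s>s^{*}$.

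Substituting $b>(c-1+\varepsilon)L$ reduces the target to $\max(A(s),D(s))\cdot(c-1+\varepsilon)>s+\varepsilon m/2$. The identity $\tfrac{1}{2(c-1)}+\tfrac{1}{\alpha}=1$---immediate from $\alpha=\tfrac{2(c-1)}{2c-3}$---makes $A(s)(c-1)-s$ affine in $s$ with slope $-\tfrac12$ and $D(s)(c-1)-s$ affine with slope $\tfrac12\bigl(\tfrac{k\alpha}{m-i}+1\bigr)-1>0$; hence the worst case over $s\in[i+1,m-h]$ is $s^{*}$, where a short calculation gives $A(s^{*})(c-1)-s^{*}=-h(4c-5)/2-1$ and $A(s^{*})-m/2=(m(3c-5)+h)/(2(c-1))$. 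The inequality at $s^{*}$ then becomes $\varepsilon\,(m(3c-5)+h)/(2(c-1))>1+h(4c-5)/2$, and since $(3c-5)/(2(c-1))>0$ (equivalent to $c>5/3$, satisfied by $c\approx 1.848$), the left-hand side is $\Theta(\varepsilon m)$; with $h\in o(\sqrt m)$ this holds for all $m\ge m(\varepsilon)$. The main obstacle is precisely the crossover point $s^{*}$: away from it one of the two estimates has ample slack, but near $s^{*}$ both are nearly tight, and the proof rests on the small positive constant $(3c-5)/(2(c-1))\approx 0.32$ multiplying $\varepsilon$---exactly the fine matching of $c$ and $\alpha$ that any smaller competitive ratio would destroy.
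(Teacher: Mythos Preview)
Your approach is correct and reaches the same conclusion, though organized differently from the paper. Both you and the paper derive the same underlying load estimate---your $D(s)$ coincides exactly with the paper's bound once both are expanded---but the paper proceeds by a single direct computation: it substitutes $b>(c-1+\varepsilon)\OPT$, replaces the highlighted occurrences of $k$ and $i$ by their leading-order values $(4c-7)m$ and $(2c-3)m$, absorbs all $h$-terms into an explicit $O(h/m)$ error, and then checks that the resulting affine expression in $s$ dominates $s/m+\varepsilon/2$ with the coefficient $\tfrac{1}{2}\bigl(\tfrac{4c-7}{2(2-c)}\alpha+1\bigr)\approx 1.0666$ in front of $(s-i-1)/m$. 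Your route instead introduces the auxiliary bound $A(s)$, locates the crossover $s^*=i+1+h(2c-3)$, and reduces everything to the single scalar inequality $\varepsilon\,\tfrac{(3c-5)m+h}{2(c-1)}>1+\tfrac{h(4c-5)}{2}$. This is arguably cleaner: it makes the worst case explicit and handles the boundary $s=i+1$ automatically via $A$, whereas the paper treats $s=i+1$ by a separate one-line observation ($l_k^t\ge l_i^t\ge b$).

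Two small points. First, your sentence ``When $s>i$, the maximality in the definition of $s$ forces the $s$-th filling job to be scheduled steeply'' is not quite right: $s>i$ always holds, and steepness is guaranteed only when $s$ was \emph{not} reset to $i+1$ by the clause. Your argument survives this, since for $s=i+1<s^*$ you rely only on $A$, and $F(i+1)>F(s^*)$; but the sentence should be tightened. Second, your closing remark overstates the tightness here: this lemma only needs $c>5/3$ (for $A(s^*)>m/2$) and roughly $c>1.79$ (for the slope of $D(s)(c-1)-s$ to be positive), both comfortably below $c\approx 1.848$. As the paper notes just after stating the lemma, what is actually tight in \Cref{le.tech.Dts} is the choice of $i$, not of $c$; the defining equation $Q[c]=0$ enters elsewhere (Proposition~\ref{pro.1.2}).
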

This lemma comes down to a mere computation. While being simple at its core, we have to account for various small error terms. These arise in three ways. Some are inherent to the properties of stable sequences. Others arise from the rounding involved in the definition of certain numbers, $i$ in particular. Finally, the small number $h$ introduces such an error. While all these errors turn out to be negligible, rigorously showing so is technical. We thus leave the proof to \Cref{app}. The reader highly familiar with other works on online makespan minimization may have wondered about our different choice of the value $i$. It should be noted that $i$ is chosen maximal, such that \Cref{le.tech.Dts} holds true.
\begin{lemma}\label{le.techlemma}
If the late filling jobs are large, $\overline{L}_s^{t(s)-1}\ge \alpha^{-1}b$ and $m\geq m(\varepsilon)$, we have $P_{m+1}^{\tilde t}\ge \lambda$.
\end{lemma}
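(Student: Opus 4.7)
The plan is to count jobs of processing time at least $\lambda$ that must have arrived by time $\tilde t$, and to show that at least $m+1$ of them exist. I would chain \Cref{le.tech.Dts} with the second property of stable sequences, and then use the definition of the fixed set of the $m+1$ largest jobs to close an apparent gap.

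First, by construction $s\ge i+1>i$, so the hypothesis $\overline{L}_s^{t(s)-1}\ge \alpha^{-1}b$ together with \Cref{le.tech.Dts} (for $m\ge m(\varepsilon)$) yields
\[
L^{t(s)-1} > \left(\tfrac{s}{m}+\tfrac{\varepsilon}{2}\right) L.
\]
Feeding $j=s$ into the second condition of \Cref{def:stable}, I conclude that the prefix $J_{\sigma(1)},\dots,J_{\sigma(t(s)-1)}$ already contains at least $s+h+2$ jobs from the fixed set of the $m+1$ largest jobs of $\CJ$; I would call these the \emph{heavy} jobs. Each heavy job has processing time at least $P_{m+1}$.

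Next, I would exploit the structure of the late filling jobs. By the definition of $s$ and by \Cref{pro.endflat} ($s\le m-h$), the filling jobs $J^{(s+1)},J^{(s+2)},\dots,J^{(m-h)}$ are all late filling jobs: they arrive strictly after $t(s)-1$ and no later than $\tilde t=t(m-h)$, they are disjoint from the heavy jobs identified above, and by hypothesis each has processing time at least~$\lambda$. Combining the two batches gives at least $(s+h+2)+(m-h-s)=m+2$ distinct jobs in the prefix ending at $\tilde t$, each of size at least $\min\{P_{m+1},\lambda\}$.

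The step I expect to be the main obstacle is to upgrade this to all $m+2$ jobs having size at least~$\lambda$. I would argue by contradiction: suppose $P_{m+1}<\lambda$. Then every job of processing time at least $\lambda$ has size strictly exceeding $P_{m+1}$, and must therefore belong to any fixed set of the $m+1$ largest jobs of $\CJ$. In particular, the $m-h-s$ late filling jobs would all themselves be heavy, and together with the $s+h+2$ heavy jobs from the first step they would exhibit at least $m+2$ distinct members of a set of cardinality $m+1$, which is impossible. Hence $P_{m+1}\ge\lambda$, all $m+2$ identified jobs have size at least~$\lambda$, and $P_{m+1}^{\tilde t}\ge\lambda$ follows at once.
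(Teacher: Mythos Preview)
Your argument is correct and mirrors the paper's proof almost exactly: you invoke \Cref{le.tech.Dts} to get the load bound, apply the stable-sequence property with $j=s$ to collect $s+h+2$ of the $m+1$ largest jobs in the prefix before $t(s)$, add the $m-h-s$ late filling jobs $J^{(s+1)},\dots,J^{(m-h)}$, and use the resulting overcount to force $P_{m+1}\ge\lambda$ and then $P_{m+1}^{\tilde t}\ge\lambda$. The paper does the same counting, phrasing the key step as ``some late filling job is not among the $m+1$ largest'' rather than your equivalent contrapositive; the two are logically identical.

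One cosmetic point: what you call ``the second condition of \Cref{def:stable}'' is the third bullet of that definition (the one indexed by $j\ge i$); the paper consistently refers to it as the third property. Your use of it is correct, only the label is off.
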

\begin{proof}
Assume that the conditions of the lemma hold. By \Cref{le.tech.Dts} we have $L^{t(s)-1}>\left(\frac{s}{m}+\frac{\varepsilon}{2}\right)\cdot L.$
By the third property of stable sequences, at most $m+1-(s+h+2)=m-s-h-1$ of the largest $m+1$ jobs appear in the sequence starting
after time $t(s)-1$. However, this sequence contains $m-h-s$ late filling jobs. Thus there exists a
late filling job that is not among the $m+1$ largest jobs. As it has a processing time of at least $\lambda$, by the assumption of the lemma, 
$P_{m+1}\ge \lambda$ holds.

Now consider the $m+1$ largest jobs of the entire sequence that arrive before $J^{(s)}$ as well as the jobs $J^{(s+1)},\ldots, J^{(m-h)}$. 
There are at least $s+h+2$ of the former 
and $m-h-s$ of the latter. Thus we have found a set of at 
least $m+1$ jobs arriving before (or at) time $\tilde t=t(m-h)$. Moreover,
we argued that all these jobs have a processing time of at least $\lambda$. Hence $P_{m+1}^{\tilde t}\ge \lambda$ holds true.
\end{proof}

We are ready to evaluate the processing time of filling jobs to prove \Cref{le.fill}, which we will do in the following two lemmas.
\begin{lemma}\label{le.latefill}
The processing time of any late filling jobs strictly exceeds $\max\{\lambda_\mathrm{{start}}b,g_b(P_h)\}$.
\end{lemma}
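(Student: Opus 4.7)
The plan is to exploit three structural observations about the late filling job $J = J^{(j)}$ scheduled flatly at time $t = t(j)$ with $j > i$. First, since $j-1 \geq i$ machines are already $b$-full, $l_i^{t-1} \geq b$, and the flat condition $l_k^{t-1} < \alpha L_{i+1}^{t-1}$ combined with $l_k^{t-1} \geq l_i^{t-1}$ forces $L_{i+1}^{t-1} > b/\alpha$. Second, the algorithm rejected $M_i^{t-1}$ (otherwise $J$ would not reach $M_{m-h}^{t-1}$ or $M_m^{t-1}$), so $p_J > c\,O^t - l_i^{t-1}$. Third, the filling property gives $p_J \geq b - l_{m-h}^{t-1}$: directly when $J$ is placed on $M_{m-h}^{t-1}$ (Case A), and via $l_m^{t-1} \leq l_{m-h}^{t-1}$ when $J$ is placed on $M_m^{t-1}$ (Case B). Finally, $L^{t-1} \geq (i/m)\,b > (c-1)(i/m)\,L$ on proper sequences, so the second property of stable sequences yields $p_{\max}^t \geq P_h$ and hence $O^t \geq P_h$.

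To prove $p_J > \lambda_{\mathrm{start}}\,b$, split on $l_{m-h}^{t-1}$. If $l_{m-h}^{t-1} < (1-\lambda_{\mathrm{start}})\,b$, the filling bound gives $p_J \geq b - l_{m-h}^{t-1} > \lambda_{\mathrm{start}}\,b$ directly. Otherwise, $l_r^{t-1} \geq l_{m-h}^{t-1} \geq (1-\lambda_{\mathrm{start}})\,b$ for every $i < r \leq m-h$, so $(m-i)\,L_{i+1}^{t-1} \geq (m-h-i)(1-\lambda_{\mathrm{start}})\,b$, whence
\[ L^{t-1} \;\geq\; \tfrac{i}{m}\,l_i^{t-1} + \bigl(1 - \tfrac{i}{m} - \tfrac{h}{m}\bigr)(1-\lambda_{\mathrm{start}})\,b. \]
Substituting this into $p_J > c\,L^{t-1} - l_i^{t-1}$ expresses the right-hand side as a linear function of $l_i^{t-1}$ with slope $c\,i/m - 1 > 0$, so its minimum over $l_i^{t-1} \geq b$ is attained at $l_i^{t-1} = b$ and equals $b\,\lambda_{\mathrm{start}}$ in the limit $m \to \infty$, by the defining identity $\lambda_{\mathrm{start}} = (c-1)/(1 + c(4-2c))$. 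The finite-$m$ correction in this bracket is $+h/m\cdot c(2\lambda_{\mathrm{start}}-1) + O(1/m) > 0$ for $m \geq m(\varepsilon)$, delivering the strict inequality $p_J > \lambda_{\mathrm{start}}\,b$.

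The bound $p_J > g_b(P_h)$ follows the same template, now split at $l_{m-h}^{t-1} = b - g_b(P_h)$. In the low sub-case the filling bound finishes the proof. In the high sub-case the analogous propagation gives $L^{t-1} \geq b - (4-2c)\,g_b(P_h) - o(b)$, and I exploit that $O^t \geq \max\{L^t, P_h\}$ so that the rejection bound supplies \emph{both} $p_J > c\,L^t - l_i^{t-1}$ and $p_J > c\,P_h - l_i^{t-1}$. Using $l_i^{t-1} \geq b$, a direct computation shows the first exceeds $g_b(P_h)$ precisely when $g_b(P_h) < \lambda_{\mathrm{start}}\,b$, and the second exceeds $g_b(P_h)$ precisely when $P_h > (5-2c)\,b/(1 + c(4-2c))$. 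The two threshold conditions on $P_h$ coincide as an algebraic identity in $c$ (both sides reduce to $-4c^3 + 16c^2 - 13c - 5$), so for every possible value of $P_h$ at least one of the two rejection bounds delivers $p_J > g_b(P_h)$.

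The main obstacle is the bookkeeping of small error terms: the $h/m$ and $O(1/m)$ corrections in $i/m = (2c-3) + h/m + O(1/m)$, the gap between $L^t$ and $L^{t-1} = L^t - p_J/m$, and the factor $(m-h-i)/(m-i)$ in the load propagation. Each of these is $1 - o(1)$, so for $m \geq m(\varepsilon)$ sufficiently large they are absorbed using the slack provided by the strict inequalities in the flat condition ($l_k^{t-1} < \alpha L_{i+1}^{t-1}$) and the rejection clause ($l_i^{t-1} + p_J > c\,O^t$). A secondary consideration is that the sub-case analysis on $l_{m-h}^{t-1}$ works uniformly across placements on $M_{m-h}^{t-1}$ and $M_m^{t-1}$; in the latter, the additional rejection of $M_{m-h}^{t-1}$ supplies the stronger constraint $p_J > c\,O^t - l_{m-h}^{t-1}$, which never weakens the argument and handles the remaining corner case $l_{m-h}^{t-1} \geq b$ via $p_J > (c-1)\,l_{m-h}^{t-1} \geq (c-1)\,b > \lambda_{\mathrm{start}}\,b$.
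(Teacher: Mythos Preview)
Your setup and the $\lambda_{\mathrm{start}}b$ bound are essentially the paper's argument: split on $l_{m-h}^{t-1}$, and in the high sub-case plug $l_r^{t-1}\ge l_{m-h}^{t-1}$ into a lower bound on $L^{t-1}$, then use $p_J>cL^t-l_i^{t-1}$ and minimize over $l_i^{t-1}\ge b$. Your finite-$m$ bookkeeping here is correct (the $h/m\cdot c(2\lambda_{\mathrm{start}}-1)$ correction is indeed nonnegative since $\lambda_{\mathrm{start}}>\tfrac12$), and the observation $L_{i+1}^{t-1}>b/\alpha$ is harmless but unused.

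The $g_b(P_h)$ argument, however, has a genuine gap. You use the two rejection bounds
\[
p_J>cL^t-l_i^{t-1}\quad\text{and}\quad p_J>cP_h-l_i^{t-1},
\]
and then evaluate both ``using $l_i^{t-1}\ge b$''. For the first bound that is legitimate: after propagating the load it is \emph{increasing} in $l_i^{t-1}$, so its minimum over $l_i^{t-1}\ge b$ is at $l_i^{t-1}=b$. But the second bound $cP_h-l_i^{t-1}$ is \emph{decreasing} in $l_i^{t-1}$; plugging in $l_i^{t-1}=b$ gives only an \emph{upper} bound, not a lower bound. So from $l_i^{t-1}\ge b$ you cannot conclude $p_J>cP_h-b$, and your threshold-coincidence argument at the single point $l_i^{t-1}=b$ does not cover the regime where the actual $l_i^{t-1}$ is large. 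Concretely: when $g_b(P_h)>\lambda_{\mathrm{start}}b$ and $l_i^{t-1}>cP_h-g_b(P_h)$, neither of your two inequalities, as you have stated them, yields $p_J>g_b(P_h)$.

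The paper closes this exactly by splitting on $l_i^{t-1}$ rather than on $P_h$: if $l_i^{t-1}\le cP_h-g_b(P_h)$ the $P_h$-rejection gives $p_J>cP_h-l_i^{t-1}\ge g_b(P_h)$ directly; otherwise one feeds $l_i^{t-1}\ge cP_h-g_b(P_h)$ into the (increasing) load-based bound and checks the identity
\[
c\bigl[(2c-3)(cP_h-g_b(P_h))+(4-2c)(b-g_b(P_h))\bigr]-(cP_h-g_b(P_h))-g_b(P_h)=0,
\]
which is precisely the definition of $g_b$. Your observation that the two thresholds coincide at $l_i^{t-1}=b$ is a special case of this identity (it corresponds to $cP_h-g_b(P_h)=b$), but it does not imply the identity for all $P_h$, which is what you need.
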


\begin{proof}
Let $j\ge i+1$ such that $J^{(j)}$ was scheduled flatly. Set $t=t(j)-1$ and $l=l_i^t$. Because at least $i$ machines were full, we have have
$L^t\ge b\cdot {i\over m} \ge (c-1){i\over m} OPT \geq (c-1){i\over m}L$. Hence by
Definition~\ref{def:stable} we have $p^{t}_\mathrm{max}\ge P_h$.

Let $\tilde\lambda=\max\{\lambda_\mathrm{start}b,g_b(P_h)\}$. We need to show that $J^{(j)}$ has a processing time strictly greater than $\tilde\lambda$. If we have $l_{m-h}^t<b-\tilde\lambda$,
then this was the case because $J^{(j)}$ increased the load of some machine from a value smaller than $b-\tilde\lambda$ to $b$. Hence let us assume that we have $l_{m-h}^t\ge b-\tilde\lambda$.
In particular we have
\begin{align*}
L^t \ge \frac{(j-1)l+(m-j-h+1)(b-\tilde\lambda)}{m}.
\end{align*}
By the definition of a late filling job, $J^{(j)}$ was scheduled flatly. In particular, it would have been scheduled on machine $M_i^t$ (which was not the
case) if any of the following two inequalities did not hold:
\begin{itemize}
\item $p_t + l > cp^{t}_\mathrm{max} \ge cP_h$
\item $p_t + l > cL^t$
\end{itemize}
If $l\le cP_h-\tilde\lambda$ held true, we get $p_t>\tilde\lambda$ from the first inequality. Thus we only need to treat the case that $l> cP_h-\tilde\lambda$ held true.
We also know that we have $l\ge b$, because the $i$-th machine is full. Hence we may assume that
\[l\ge \max\{b,cP_h-\tilde\lambda\}.\]
In order to derive the lemma we need to prove that $p_t-\tilde\lambda>0$ holds. Using the second inequality we get
\[ p_t -\tilde\lambda > cL^t -l-\tilde\lambda  \ge c \frac{(j-1)l+(m-j-h+1)(b-\tilde\lambda)}{m}-l -\tilde\lambda.\]
Using that $(2c-3)m+h<i<j-1$ and $b-\tilde\lambda<b\le l$ hold, the previous term does not increase if we replace $j-1$ by $(2c-3)m+h$. The resulting term is
\begin{align*}p_t -\tilde\lambda> c\frac{((2c-3)m+h)l+(m-(2c-3)m-2h+1)(b-\tilde\lambda)}{m}-l-\tilde\lambda.\end{align*}
Now let us observe that we have $l\ge b \geq 2(b-\lambda_\mathrm{start} b) \ge 2(b-\tilde\lambda)$. Hence the previous term is minimized if we set $h=0$. We get
\begin{align*}
p_t -\tilde\lambda> c\big[(2c-3)l+(1-(2c-3))(b-\tilde\lambda)\big]-l-\lambda.\end{align*}
As $c(2c-3)-1\approx 0.2584>0$ the above term does not increase if we replace $l$ by either value: $b$ or $cP_h-\tilde\lambda$.

If we have $\tilde\lambda=\lambda_\mathrm{start}b$, we choose $l=b$ and get
\begin{align*}
p_t -\lambda_\mathrm{start}b&>c\big[(2c-3)b+(1-(2c-3))(b-\lambda_\mathrm{start}b)\big]-b-\lambda_\mathrm{start}b\\
&= (c-1)b-(1+2c(2-c))\lambda_\mathrm{start}b\\
&= (c-1)b-(c-1)b\\
&=0.
\end{align*}
The third equality uses the definition of $\lambda_\mathrm{start}$. The lemma follows if $\tilde\lambda=\lambda_\mathrm{start}b$.

Otherwise, if $\tilde\lambda=g_b(P_h)$, we choose $l=cP_h-\tilde\lambda$ and get
\begin{align*}
p_t -\tilde\lambda&>c\big[(2c-3)(cP_h-\tilde\lambda)+(1-(2c-3))(b-\tilde\lambda)\big]-(cP_h-\tilde\lambda)-\tilde\lambda\\
&=(c^2(2c-3)-c)P_h+(c(4-2c))b-cg_b(P_h)\\
&=0.
\end{align*}
Here the last equality follows from the definition of $g_b$. The lemma follows in the case $\tilde\lambda=g_b(P_h)$.
\end{proof}
\begin{lemma}\label{le.earlyfill}
If $\overline{L}_s^{t(s)-1}< \alpha^{-1}b$ holds, the early filling jobs have a processing time of at least $\lambda_\mathrm{{end}}b$.
\end{lemma}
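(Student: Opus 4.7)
The plan is to show that for every early filling job $J^{(j)}$, the pre-load on the machine it lands on is strictly less than $\alpha^{-1}b$. Because $J^{(j)}$ fills its machine, this forces $p_j>(1-\alpha^{-1})b=\tfrac{1}{2(c-1)}b$, and a short calculation gives $\tfrac{1}{2(c-1)}>\tfrac{1}{2(c-1+\varepsilon)}=\lambda_\mathrm{end}$, so the desired bound follows.

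I would first establish two monotonicity facts valid for all $t\le t(s)-1$: (i) the minimum load $l_m^t$ is non-decreasing in $t$, which is immediate, and (ii) the middle-band average $\overline{L}_s^t$ is non-decreasing in $t$. For (ii) the plan is to track how the addition of a single job of size $p$ redistributes the total load across the three bands $[1,s-1]$, $[s,m-h]$, $[m-h+1,m]$; a short case analysis on the rank $r_0$ of the receiving machine and on whether the augmented load pushes the machine across a band boundary shows that the top-$(s-1)$ sum and the bottom-$h$ sum are each monotone and jointly absorb at most $p$ of the added load, so the middle sum is non-decreasing.

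Combined with the hypothesis, these yield, for every $j\le s$,
\[l_m^{t(j)-1}\le l_m^{t(s)-1}\le l_{m-h}^{t(s)-1}\le \overline{L}_s^{t(s)-1}<\alpha^{-1}b\]
and
\[l_{m-h}^{t(j)-1}\le \overline{L}_s^{t(j)-1}\le \overline{L}_s^{t(s)-1}<\alpha^{-1}b.\]
This immediately handles three of the four possible placement options for early filling jobs: the machine $M_m^{t(j)-1}$ (either because $J^{(j)}$ is steeply scheduled, or because $J^{(j)}$ is scheduled flatly and falls through to the third branch) and the machine $M_{m-h}^{t(j)-1}$ (flat middle branch). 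In each of these cases the pre-load is strictly below $\alpha^{-1}b$, and the required bound on $p_j$ is inherited from the filling condition.

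The remaining case, and the main obstacle, is an early filling job $J^{(j)}$ placed flatly on $M_i^{t(j)-1}$, which by the definition of \emph{early} forces $j\le i$. Here the pre-load $l_i^{t(j)-1}$ sits above the middle band and is not directly controlled by the hypothesis. My plan is to exploit the flat inequality $l_i^{t(j)-1}\le l_k^{t(j)-1}<\alpha L_{i+1}^{t(j)-1}$ and split on whether $j\le k$ or $k<j\le i$. When $j\le k$, all positions from $k$ through $m$ are non-full at time $t(j)-1$, and one can unpack $L_{i+1}^{t(j)-1}$ over the three bands and push the bound below $\alpha^{-1}b$ using the monotonicity of $\overline{L}_s^t$. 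When $k<j$, position $k$ is already full, so $l_k^{t(j)-1}\ge b$ and flatness forces $L_{i+1}^{t(j)-1}>\alpha^{-1}b$; propagating this through the monotonicity of $L_{i+1}^t$ to time $t(s)-1$ and upper-bounding $L_{i+1}^{t(s)-1}$ via $\overline{L}_s^{t(s)-1}<\alpha^{-1}b$ on the middle and bottom bands produces an outright contradiction when $s=i+1$. The hardest sub-case is $s>i+1$, where the full machines at positions $[i+1,s-1]$ at time $t(s)-1$ absorb the slack produced by the band split; closing this sub-case will require the algebraic identity $(m-i)+\alpha i=2m+O(h)$ that is built into the calibration of $\alpha$ and $i$, so that the flat condition collapses into the sharper bound $l_i^{t(j)-1}<\tfrac{\alpha L^{t(j)-1}}{2}+o_m(b)$ and can be pitted against the middle-band bound propagated to time $t(s)-1$.
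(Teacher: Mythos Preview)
Your treatment of placements on $M_m$ and $M_{m-h}$ is fine and in fact a bit slicker than the paper's. The problem is the $M_i$ case, where the flat inequality plus your band decomposition cannot give what you need. From flatness you get $l_i^{t(j)-1}<\alpha L_{i+1}^{t(j)-1}$, and the strongest bound your bands produce is $L_{i+1}^{t(j)-1}<\alpha^{-1}b$ (each term in the $[i+1,s-1]$ band is at most $l_i$ itself, or at most $b$ if you use non-fullness; either way the multiplier $\alpha>1$ kills you). So you end up with $l_i^{t(j)-1}<b$, not $l_i^{t(j)-1}<\alpha^{-1}b$, and hence no usable lower bound on the job size. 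Your proposed ``algebraic identity'' cannot rescue this, because the obstruction is structural: with only the horizontal monotonicity of $\overline{L}_s^t$ you have no way to control the contribution of ranks $[i+1,s-1]$ tightly enough.

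What the paper does instead is prove a \emph{diagonal} monotonicity (its Lemma~\ref{rem.Dvalsincrease}): $L_{i+1}^{t(i+1)-1}\le L_{i+2}^{t(i+2)-1}\le\cdots\le L_s^{t(s)-1}$, which together with the hypothesis yields $L_{i+1}^{t(j)-1}\le L_{i+1}^{t(i+1)-1}\le L_s^{t(s)-1}<\alpha^{-1}b$ for every $j\le i$. This immediately makes your whole $k<j\le i$ sub-case vacuous: since $l_k^{t(j)-1}\ge b>\alpha\cdot\alpha^{-1}b>\alpha L_{i+1}^{t(j)-1}$, the schedule is \emph{steep}, so $J^{(j)}$ goes on $M_m$ and never on $M_i$. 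No contradiction argument, no hard $s>i+1$ sub-case. For $j\le k$ the paper then exploits exactly this forced steepness of $J^{(k+1)},\ldots,J^{(i)}$ to telescope $l_i^{t(k)-1}\le l_{i+1}^{t(k+1)-1}\le\cdots\le l_m^{t(i)-1}<\alpha^{-1}b$, bounding the $M_i$ pre-load for all $j\le k$ in one stroke. The point you are missing is that the steep filling jobs in the range $(k,i]$ each land on the least loaded machine and thereby shift ranks, which is what makes both the diagonal monotonicity and the telescoping chain work; your purely time-wise monotonicity of $\overline{L}_s^t$ does not see this.
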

Before proving \Cref{le.earlyfill} let us observe the following, strengthening its condition.

\begin{lemma}\label{rem.Dvalsincrease}
We have
\[L_{i+1}^{t(i+1)-1}\le L_{i+2}^{t(i+2)-1}\le \ldots  L_s^{t(s)-1}.\]
\end{lemma}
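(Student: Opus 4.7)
The plan is to prove the chain by induction, showing $L_j^{t(j)-1} \le L_{j+1}^{t(j+1)-1}$ for each $j \in \{i+1,\ldots,s-1\}$. At time $t(j)-1$ there are exactly $j-1$ full machines; since full means load $\ge b$ while non-full means load $<b$, these coincide with the top $j-1$ positions in the sorted order. Hence $L_j^{t(j)-1}$ is precisely the average load over the non-full set $S_j$ (of size $m-j+1$), and analogously $L_{j+1}^{t(j+1)-1}$ is the average over $S_{j+1}$ (of size $m-j$) at time $t(j+1)-1$. Between the two times, the only newly-full machine is the one receiving $J^{(j)}$, call it $M^*$, so $S_{j+1} = S_j \setminus \{M^*\}$, while loads on the remaining machines in $S_{j+1}$ can only increase.

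Let $a = l(M^*, t(j)-1) < b$. Summing loads and using the monotonicity of loads in time yields
\[
(m-j)\,L_{j+1}^{t(j+1)-1} \;\ge\; \sum_{M\in S_{j+1}} l(M,\,t(j)-1) \;=\; (m-j+1)\,L_j^{t(j)-1} - a.
\]
A short rearrangement shows that the desired inequality $L_{j+1}^{t(j+1)-1} \ge L_j^{t(j)-1}$ is equivalent to the single claim $a \le L_j^{t(j)-1}$. Now the algorithm places $J^{(j)}$ on one of $M_i^{t(j)-1},\,M_{m-h}^{t(j)-1},\,M_m^{t(j)-1}$; but $j \ge i+1$ forces $M_i^{t(j)-1}$ to already be full, so a job placed there cannot be a filling event. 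Hence $M^* \in \{M_{m-h}^{t(j)-1},\,M_m^{t(j)-1}\}$. If $M^* = M_m^{t(j)-1}$ (the steep-case choice, and also the flat-case fallback when both higher checks fail), then $a = l_m^{t(j)-1}$ is the global minimum load and the reduction claim is immediate.

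The remaining case is $M^* = M_{m-h}^{t(j)-1}$ under a flat schedule; here $a = l_{m-h}^{t(j)-1}$ is only the $(h+1)$-st smallest load and can in principle exceed the average. This case is the main obstacle I anticipate. I would handle it by exploiting two facts: $h=o(\sqrt m)$ is negligible compared to $|S_j|=\Theta(m)$, so $l_{m-h}^{t(j)-1}$ lies extremely close to the global minimum, and the flatness condition $l_k^{t(j)-1} < \alpha\,L_{i+1}^{t(j)-1}$ restricts the spread of the non-full loads enough that $l_{m-h}^{t(j)-1}$ cannot exceed $L_j^{t(j)-1}$ by any significant amount. If this direct comparison still leaves a small gap, it can be absorbed via the refined identity $(m-j)L_{j+1}^{t(j+1)-1} = (m-j+1)L_j^{t(j)-1} - a + \Delta_B$, where $\Delta_B \ge 0$ captures the total load deposited on $S_{j+1}$ by the non-filling jobs scheduled during the window $(t(j),\,t(j+1)-1]$; this slack provably suffices because any job placed on $S_{j+1}$ during that window is itself a relatively large job (the algorithm prefers higher-load candidates first).
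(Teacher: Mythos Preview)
Your reduction to the single claim $a \le L_j^{t(j)-1}$ is correct, and the case $M^*=M_m^{t(j)-1}$ is handled cleanly. The gap is the remaining case $M^*=M_{m-h}^{t(j)-1}$, which you yourself flag as the main obstacle but do not actually close. None of your proposed fixes works. First, $h=o(\sqrt m)$ gives no bound whatsoever on $l_{m-h}^{t(j)-1}-l_m^{t(j)-1}$: the $h$ smallest loads can all be close to~$0$ while $l_{m-h}^{t(j)-1}$ sits just below $b$, and then $l_{m-h}^{t(j)-1}$ exceeds $L_j^{t(j)-1}$ by an amount of order $hb/(m-j+1)$ that is not absorbed anywhere. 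Second, the flatness condition $l_k^{t(j)-1}<\alpha L_{i+1}^{t(j)-1}$ is an \emph{upper} bound on $l_k$, not a lower bound on $L_j$ relative to $l_{m-h}$, and it does not restrict the bottom $h$ loads at all. Third, the $\Delta_B$ slack is pure speculation: nothing forces any job to land on $S_{j+1}$ during the window $(t(j),\,t(j+1)-1]$, and $t(j+1)=t(j)+1$ (hence $\Delta_B=0$) is perfectly possible.

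The paper's proof bypasses this case entirely. It argues that for $i+1\le j<s$ the filling job $J^{(j)}$ is scheduled \emph{steeply}, hence necessarily on the least-loaded machine $M_m^{t(j)-1}$. With $M^*=M_m^{t(j)-1}$ always, removing the minimum element from an average can only raise it, so $L_j^{t(j)-1}\le L_{j+1}^{t(j)}$, and then $L_{j+1}^{t(j)}\le L_{j+1}^{t(j+1)-1}$ by monotonicity of loads in time. The idea you are missing is to use the definition of $s$ together with the specific index range $[i+1,s]$ to rule out the flat-$M_{m-h}$ placement altogether, rather than attempting to bound it.
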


\begin{proof}
Let $i+1 \le j<s$. It suffices to verify that
\[L_j^{t(j)-1}\le L_{j+1}^{t(j)} \le L_{j+1}^{t(j+1)-1}.\]
The second inequality is obvious because for every $r$ the loads $l_r^t$ can only increase as $t$ increases. For the first inequality we note that by definition the job $J^{(j)}$ was scheduled steeply and hence on a least loaded machine. This machine became full. Thus it is not among the $m-j$ least loaded machines at time $t(j)$. In particular $L_{j+1}^{t(j)}$, the average over the $m-j$ smallest loads at time $t(j)$, is also the average of the $m-j+1$ smallest loads excluding the smallest load at time $t(j)-1$. Therefore it cannot be less than $L_j^{t(j)-1}$.
\end{proof}

\begin{proof}[Proof of \Cref{le.earlyfill}]
Let $i<j\le s$ such that $J^{(j)}$ was an early filling job. By \Cref{rem.Dvalsincrease}  we have $L_j^{t(j)-1}\le L_s^{t(s)-1}< \alpha^{-1}b=b-\frac{b}{2(c-1)}< b-\lambda_\mathrm{end}b$. By definition $J^{(j)}$ was scheduled on a least loaded machine $M^{t(j)-1}_m$ which had load less than $L_j^{t(j)-1}< b-\lambda_\mathrm{end}b$ before
and at least $b$ afterwards because it became full. In particular $J^{(j)}$ had size $\lambda_\mathrm{end}b$.

For $k<j\leq i$ the job $J^{(j)}$ is scheduled steeply because we have by \Cref{rem.Dvalsincrease}
\[ l_k^{t(j)-1} \ge b > \alpha L_{s}^{t(s)-1}
\ge\alpha L_{i+1}^{t(i+1)-1}
\ge \alpha L_{i+1}^{t(j)-1}.\]
Thus for $k<j\leq i$ the job $J^{(j)}$ is scheduled on the least loaded machine $M_{m}^{t(j)-1}$, whose load $l_{m}^{t(j)-1}$ is bounded by
\[l_{m}^{t(j)-1} \le L_{i+1}^{t(j)-1} \le L_{s}^{t(s)-1}<\alpha^{-1}b=b-\frac{b}{2(c-1)}< b-\lambda_\mathrm{end}b.\]
Hence the job $J^{(j)}$ had a size of at least $\lambda_\mathrm{end}b$.
We also observe that we have \[l_i^{t(k)-1} \le l_{i+1}^{t(k+1)-1} \le \ldots \le l_{i+(m-i)}^{t(k+(m-i))-1}=l_m^{t(i)-1}<b-\lambda_\mathrm{end}b.\] In particular for $1\le j \le k$ any filling job $J^{(j)}$ filled a machine with a load of at most $\max\{l_m^{t(k)} ,l_i^{t(k)}\}=l_i^{t(k)}<b-\lambda_\mathrm{end}b$.
Hence it had a size of at least $\lambda_\mathrm{end}b$.
\end{proof}

We now conclude the main lemma of this subsection, \Cref{le.fill}.

\begin{proof}[Proof of \Cref{le.fill}]
By \Cref{le.latefill}, all late filling jobs are large. We distinguish two cases depending on whether or not $\overline{L}_s^{t(s)-1}< \alpha^{-1}b$ holds.
If it does, all filling jobs are large by \Cref{le.earlyfill} and the first statement in \Cref{le.fill} holds.
Otherwise, the second statement in \Cref{le.fill} holds by \Cref{le.techlemma}.
\end{proof}

\subsubsection{Lower bounding $\boldsymbol{P_h}$ and $\boldsymbol{P_{m+1}}$}
In this section we establish the following relations on $P_h$ and $P_{m+1}$.
\begin{lemma}\label{le.P_h}
There holds  $P_h>(1-\varepsilon)b_0$ or $P_{m+1}\ge \lambda_\mathrm{{end}}b_0$ if $m\geq m(\varepsilon)$.
\end{lemma}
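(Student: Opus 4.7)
The plan is to prove the contrapositive: assume $P_h\le(1-\varepsilon)b_0$, and derive $P_{m+1}\ge\lambda_{\mathrm{end}}b_0$. This suffices for the lemma because, by monotonicity of $g$ (whose leading coefficient $c(2c-3)-1$ is positive) and Proposition \ref{pro.1.2}, a value $P_h>(1-\varepsilon)b_0$ would in any case force $g_{b_0}(P_h)>g(1-\varepsilon)\,b_0>\lambda_{\mathrm{end}}b_0$, so Corollary \ref{le.fill2} would itself yield $P_{m+1}\ge\lambda_{\mathrm{end}}b_0$ on the other branch of the disjunction.

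First I would invoke Corollary \ref{le.fill2} to record the baseline $P_{m+1}\ge\max\{\lambda_{\mathrm{start}}b_0,\min\{g_{b_0}(P_h),\lambda_{\mathrm{end}}b_0\}\}$. By Proposition \ref{pro.1.2}, there is a threshold $w^\ast(\varepsilon)\,b_0$, defined by $g(w^\ast)=\lambda_{\mathrm{end}}$ and satisfying $w^\ast(\varepsilon)<1-\varepsilon$, such that the conclusion is immediate whenever $P_h\ge w^\ast(\varepsilon)\,b_0$. So the only nontrivial case left is $P_h<w^\ast(\varepsilon)\,b_0$.

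For that case the plan is to apply Corollary \ref{le.fill3} with the largest admissible parameter $b=b^\ast:=l^{n'-1}_{m-h}$, the load of the $(h{+}1)$-st least loaded machine immediately before $J_{n'}$; exactly $m-h$ machines are then $b^\ast$-full, and $P_h\le(1-\varepsilon)b_0\le(1-\varepsilon)b^\ast$ verifies the hypothesis of Corollary \ref{le.fill3}. The corollary yields $P_{m+1}\ge\max\{\lambda_{\mathrm{start}}b^\ast,\min\{g_{b^\ast}(P_h),\lambda_{\mathrm{end}}b^\ast\}\}$. If $b^\ast\ge(\lambda_{\mathrm{end}}/\lambda_{\mathrm{start}})\,b_0$, the term $\lambda_{\mathrm{start}}b^\ast$ alone already gives $P_{m+1}\ge\lambda_{\mathrm{end}}b_0$; if instead $g_{b^\ast}(P_h)\ge\lambda_{\mathrm{end}}b^\ast$, the min contributes $\lambda_{\mathrm{end}}b^\ast\ge\lambda_{\mathrm{end}}b_0$.

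The main obstacle is the residual regime where $b^\ast$ is only marginally larger than $b_0$ and $P_h<w^\ast(\varepsilon)\,b_0$. To rule it out, the plan is to combine the bound $P_{m+1}\ge g_{b_0}(P_h)$ from Corollary \ref{le.fill2} with a matching dual inequality $P_h\ge f_{b_0}(P_{m+1})$. Introducing the dimensionless quantities $x=P_h/b_0$ and $y=P_{m+1}/b_0$, the two bounds read $g(x)\le y$ and $x\ge f(y)$; by monotonicity of $g$ this gives $y\ge g(x)\ge g(f(y))$, which contradicts $g(f(y))>y$ (Proposition \ref{pro.1.1}, applicable since $y\ge\lambda_{\mathrm{start}}$ already holds from the baseline). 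The hard step is therefore establishing $P_h\ge f_{b_0}(P_{m+1})$ in this residual regime — I expect this to come from an inspection of the sizes of the top jobs that must have filled the $m-h$ most loaded machines under the constraint $P_h\le(1-\varepsilon)b_0$, analogous to the reasoning in Lemma \ref{le.latefill} and crucially leveraging the fourth property of stable job sequences in the same spirit as the proof of Corollary \ref{le.fill3}. This complementary load bound is the technical crux of the argument.
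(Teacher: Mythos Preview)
Your overall strategy---derive a contradiction from the pair of inequalities $P_{m+1}\ge g_{b}(P_h)$ and $P_h\ge f_{b}(P_{m+1})$ via Proposition~\ref{pro.1.1}---is exactly the paper's. The gap is in the ``hard step'': the inequality $P_h\ge f_{b_0}(P_{m+1})$ you want is \emph{false} in the residual regime. Indeed, you already have $P_{m+1}\ge\lambda_{\mathrm{start}}b_0$, and since $f(\lambda_{\mathrm{start}})=2c\lambda_{\mathrm{start}}-1>1$, the right-hand side satisfies $f_{b_0}(P_{m+1})\ge f(\lambda_{\mathrm{start}})\,b_0>b_0>(1-\varepsilon)b_0\ge P_h$. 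So no argument ``analogous to Lemma~\ref{le.latefill}'' can produce this bound; it is simply not true at scale~$b_0$ with the final~$P_{m+1}$.

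What the paper does instead is work at a \emph{different scale and time}. Pick $J=J_t$ to be the smallest among the last $h$ $b_0$-filling jobs; then $p_t\le P_h$, and by Proposition~\ref{pro.endflat} this job is scheduled flatly on the least loaded machine. Set $b_1:=l_{m-h}^{t-1}\ge b_0$. Lemma~\ref{le.2} now gives the $f$-bound directly and for free: $p_t>f_{b_1}(P_{m+1}^{t})$. Then apply Corollary~\ref{le.fill3} at scale $b_1$ (not $b_0$ or your $b^\ast$): with $\tilde t$ the time the $(m-h)$-th $b_1$-filling job arrives (so $\tilde t\le t$), either $P_{m+1}^{\tilde t}\ge\lambda_{\mathrm{end}}b_1\ge\lambda_{\mathrm{end}}b_0$ and you are done, or $P_{m+1}^{\tilde t}\ge g_{b_1}(P_h)$. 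In the latter case, chaining through $P_{m+1}^{\tilde t}\le P_{m+1}^{t}$ gives
\[
P_{m+1}^{\tilde t}\ \ge\ g_{b_1}(P_h)\ \ge\ g_{b_1}(p_t)\ >\ g_{b_1}\bigl(f_{b_1}(P_{m+1}^{\tilde t})\bigr)\ >\ P_{m+1}^{\tilde t},
\]
a contradiction. The entire detour through $w^\ast$, through $b^\ast=l_{m-h}^{n'-1}$, and through the case $b^\ast\ge(\lambda_{\mathrm{end}}/\lambda_{\mathrm{start}})b_0$ is unnecessary; the paper's proof is a single short paragraph once you see that both the $g$- and $f$-bounds must be taken at the \emph{same} auxiliary scale $b_1$ tied to the specific job~$J$, and with the time-indexed quantity $P_{m+1}^{\tilde t}$ rather than the final $P_{m+1}$.
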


For the proof we need a way to lower bound the processing time of a job $J_t$ depending on $P_{m+1}^t$:
\begin{lemma}\label{le.2}
Let $J_t$ be any job scheduled flatly on the least loaded machine and let $b=l_{m-h}^{t-1}$ be the load of the $(h+1)$-th least
loaded machine. Then $J_t$ has a processing time of at least $f_b(P_{m+1}^{t})$.
\end{lemma}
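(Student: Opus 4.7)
The proof should be short and direct. The key observation is unpacking the definition of $f_b$: since $f(\lambda)=2c\lambda-1$, we have $f_b(w)=f(w/b)\,b = 2cw - b$. So the claim reduces to showing $p_t \geq 2c\,P_{m+1}^{t} - b$, where $b = l_{m-h}^{t-1}$.

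The plan is to read off this inequality directly from the algorithm's branching logic. Since $J_t$ is scheduled flatly on the least loaded machine $M_m^{t-1}$, the algorithm must have tried and rejected the intermediate machine $M_{m-h}^{t-1}$; that is, the test in line~6 of the pseudocode failed, giving
\[
l_{m-h}^{t-1} + p_t \;>\; c\cdot O^t.
\]
Now invoke Proposition~\ref{prop:opt} (or rather its $t$-prefix version built into the definition of $O^t$): one of the three lower bounds comprising $O^t$ is $2P_{m+1}^t$, so $O^t \geq 2P_{m+1}^t$. Substituting $b = l_{m-h}^{t-1}$ yields
\[
b + p_t \;>\; c\cdot O^t \;\geq\; 2c\,P_{m+1}^{t},
\]
and rearranging gives $p_t > 2c\,P_{m+1}^{t} - b = f_b(P_{m+1}^{t})$, which is exactly the claim.

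There is no real obstacle here; the only subtlety is recognizing that being ``scheduled flatly on the least loaded machine'' means the algorithm executed the \textbf{else} branch of the pseudocode, which requires \emph{both} intermediate tests to fail. The test on $M_{m-h}^{t-1}$ is the relevant one because its load is precisely $b$ by assumption, and combining its failure with the $2P_{m+1}^t$ lower bound on $O^t$ immediately produces the stated bound.
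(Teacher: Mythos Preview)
Your proof is correct and matches the paper's approach exactly: both exploit the failed placement test on $M_{m-h}^{t-1}$ to obtain $b+p_t>c\,O^t$, and then invoke $O^t\ge 2P_{m+1}^t$ to conclude $p_t>2cP_{m+1}^t-b=f_b(P_{m+1}^t)$.
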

\begin{proof}
From the fact that $J_t$ was not scheduled on the $(h+1)$-th least loaded machine $M_{m-h}^t$ we derive that
$p_t>c\cdot O^{t}-b\ge c\cdot P_{m+1}^{t}-b=f_b(P_{m+1}^{t})$
holds.
\end{proof}

\begin{proof}[Proof of \Cref{le.P_h}]
Assume for a contradiction  that we had $P_h \le (1-\varepsilon)b_0$. Let $J=J_t$ be the smallest among the $h$ last $b_0$-filling jobs. Then $J$ has
a processing time $p\le P_h$. We want to derive a contradiction to that.
Let $b_1=l_{m-h}^{t-1}$ be the load of the $(m-h)$-th machine right before $J$ was scheduled. Because this machine was $b_0$-full at that time we know
that $b_1\ge b_0>(c-1+\varepsilon) OPT$ holds and it makes sense to consider $b_1$-filling jobs. Let $\tilde t$ be the time the $(m-h)$-th $b_1$-filling job arrived. By \Cref{le.fill} we have $P_{m+1}^{\tilde t}\ge \lambda=\max\{\lambda_\mathrm{start}b_1,\min\{g_{b_1}\left(P_h\right),\lambda_\mathrm{end}b_1\}\}.$

If we have $\lambda=\lambda_\mathrm{end}b_1\ge \lambda_\mathrm{end}b_0$ we have already proven $P_{m+1}\ge \lambda_\mathrm{end}b_0$ and the lemma follows. So we are left to treat the case that we have $P_{m+1}^{\tilde t}\ge \lambda=\max\{\lambda_\mathrm{start}b_1,g_{b_1}\left(P_h\right)\}.$

Now we can derive the following contradiction:
\[P_{m+1}^{\tilde t} \ge g_{b_1}\left(P_h\right) \ge g_{b_1}\left(p\right)\ge g_{b_1}\left(f_{b_1}\left(P_{m+1}^{\tilde t}\right)\right)=g\left(f\left(\frac{P_{m+1}^{\tilde t}}{b_1}\right)\right)b_1>P_{m+1}^{\tilde t}.\]
For the second inequality, we use the monotonicity of $g_{b_1}(-)$. The third inequality follows from \Cref{le.2} and the last one from \Cref{pro.1.1}.
\end{proof}

\subsubsection{Establishing Main Lemma~\ref{ml:2}:}
Let $m\geq m(\varepsilon)$ be sufficiently large. The machine number~$m(\varepsilon)$ is determined by the proofs of \Cref{pro.endflat}
and \Cref{le.tech.Dts}, and then carries over to the subsequent lemmas. Let us assume for a contradiction sake that there was a stable sequence $\CJ$ such that
$\mathit{ALG}(\CJ) > (c+\varepsilon) OPT(\CJ)$. As argued in the beginning of Section~\ref{sec:adv}, see (\ref{eq:contr}), it suffices to show that
$P_{m+1}\ge \lambda_\mathrm{end}b_0$. If this was not the case, we would have $P_h\ge (1-\varepsilon)b_0$ by \Cref{le.P_h}.
In particular by Proposition~\ref{pro.1.2} we had $g_{b_0}\left(P_h\right)=g(1-\varepsilon)b_0>\lambda_\mathrm{end}b_0.$
But now \Cref{le.fill2} shows that
$P_{m+1}\ge \max \{\lambda_\mathrm{start}b_0,\min\{g_{b_0}\left(P_h\right),\lambda_\mathrm{end}b_0\}\}= \lambda_\mathrm{end}b_0$.

We conclude, by Corollary~\ref{cor:adv}, that $\mathit{ALG}$ is nearly $c$-competitive.

\section{Lower bounds}
We present lower bounds on the competitive ratio of any deterministic
online algorithm in the random-order model. Theorem~\ref{te.lb.2}
implies that if a deterministic online algorithm is $c$-competitive with
high probability as $m\rightarrow \infty$, then $c\geq 3/2$.

\begin{theorem}\label{te.lb.1}
Let $A$ be a deterministic online algorithm that is $c$-competitive in
the random-order model. Then $c\geq 4/3$ if $m\geq 8$.
\end{theorem}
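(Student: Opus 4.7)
The plan is to show: for every deterministic online algorithm $A$, there is a job set $\CJ$ on $m \geq 8$ machines with $\bE_{\sigma\sim S_n}[A(\CJ^\sigma)] \geq (4/3)\,\OPT(\CJ)$, so no $c$-competitive algorithm can do better than $c = 4/3$. By a Yao-style averaging argument it is in fact enough to fix a small distribution $\mathcal{D}$ over inputs and show that every deterministic $A$ achieves expected competitive ratio at least $4/3$ over $\mathcal{D}$ combined with a uniformly random permutation; one instance in the support of $\mathcal{D}$ then witnesses the bound in the statement.

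I would design the input as a carefully tuned mixture of many ``ordinary'' unit-size jobs together with a small number of ``dominant'' jobs whose processing times are chosen so that the offline optimum is clean (e.g.\ $\OPT = 3$): the dominant jobs can be isolated on their own machines while the unit jobs are nearly evenly distributed on the remaining machines. The exact sizes are calibrated so that each member of the distribution looks identical during an initial prefix of unit jobs, so that any deterministic $A$ is forced to commit to a single load profile before seeing any dominant job.

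Given such a commitment, I would argue by dichotomy over the strategies $A$ can adopt. If $A$ tends to keep the machines balanced during the unit-job prefix, then with constant probability (in the random permutation) one or more dominant jobs arrive towards the end of the sequence and must be placed atop a loaded machine, yielding makespan at least $(4/3)\,\OPT$. If instead $A$ tends to keep some machines lightly loaded in anticipation of large jobs, then with constant probability no dominant job appears during the initial phase; the unit jobs must therefore concentrate on the ``active'' machines and the makespan again exceeds $(4/3)\,\OPT$. In each regime I would quantify the shortfall so that the weighted average against the random permutation and the instance distribution is at least $4/3$ times $\OPT$. The lower bound $m \geq 8$ is what makes both regime probabilities simultaneously large enough for the numerical constant to come out to $4/3$.

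The main obstacle will be ruling out \emph{every} deterministic online strategy, not just natural ones such as \emph{Greedy}. Unlike in the adversarial model, we cannot react to $A$'s choices on the fly, so the argument must rely on the indistinguishability of the instance family during the common prefix together with an averaging step over random permutations. Producing a clean constant of $4/3$ (rather than a smaller value) requires a careful calibration of the sizes and counts in $\mathcal{D}$: the two regime ratios must meet exactly at the target, which is the delicate numerical step and is the reason the statement requires $m \geq 8$.
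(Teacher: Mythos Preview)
Your proposal is a sketch of a plausible strategy, but it is not yet a proof, and the route you outline is both more complicated and less precise than what is actually needed.

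The genuine gap is that your dichotomy ``$A$ tends to balance'' versus ``$A$ reserves some machines'' is never made quantitative. A deterministic $A$ can blend these behaviors in arbitrary ways, and since you never fix a concrete instance (or distribution), there is nothing to compute against. You also never verify that the two regime probabilities and the two regime ratios combine to give exactly $4/3$; you simply assert that ``careful calibration'' will make it so. Without specifying the job sizes, counts, and probabilities, there is no way to check this, and in particular no way to see why $m\geq 8$ is the right threshold. Finally, the Yao-style averaging over a distribution $\mathcal{D}$ of inputs is unnecessary overhead here: the random-order model already supplies randomness via $\sigma$, and the theorem only asks you to produce, for each $A$, \emph{one} job set $\CJ$ with $\bE_\sigma[A(\CJ^\sigma)]\geq \tfrac{4}{3}\OPT(\CJ)$.

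The paper's argument makes your dichotomy precise with a single integer parameter. Define $r(A,m)$ as the largest $r$ such that on $rm$ identical unit jobs $A$ always plays a least-loaded machine. If $r(A,m)\le 2$, then the sequence of $(r(A,m)+1)m$ unit jobs already forces a competitive ratio of at least $\tfrac{r(A,m)+2}{r(A,m)+1}\ge \tfrac{4}{3}$ (this is \Cref{le.reg}, and no permutation randomness is needed since all permutations are the same). If $r(A,m)\ge 3$, then the single instance $\CJ$ consisting of $4m-4$ unit jobs and one job of size $4$ works: $\OPT(\CJ)=4$, and since $A$ is Greedy-like for the first $3m$ unit jobs, the least-loaded machine has load $\lfloor i/m\rfloor$ when the big job arrives after $i$ unit jobs. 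Averaging over the uniformly random position of the big job gives
\[
A^{\rom}(\CJ)\ \ge\ \frac{m}{4m-3}(0+1+2)+\frac{m-3}{4m-3}\cdot 3+4\ =\ \frac{6m-9}{4m-3}+4\ >\ \frac{16}{3}
\]
for $m\ge 8$. So one explicit input per case suffices, and the constant $4/3$ together with the threshold $m\ge 8$ fall out of a two-line computation rather than a calibration step.
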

\begin{theorem}\label{te.lb.2}
Let $A$ be a deterministic online algorithm that is nearly
$c$-competitive. Then $c\geq 3/2$.
\end{theorem}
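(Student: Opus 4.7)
The plan is to derive a contradiction by exhibiting, for every $\delta>0$ and arbitrarily large $m$, a job instance $\CJ=\CJ(m)$ such that for every deterministic online algorithm $A$,
\[
\bP_{\sigma\sim S_n}\bigl[A(\CJ^\sigma)\ge (3/2-\delta)\,\OPT(\CJ)\bigr] > \delta.
\]
Such an instance contradicts nearly $c$-competitiveness for $c<3/2-2\delta$, since in Definition~\ref{def:comp} we could take $\varepsilon=\delta$ and would need $c+\varepsilon\ge 3/2-\delta$; letting $\delta\to 0$ yields $c\ge 3/2$.

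I would use a two-scale instance: $\CJ$ consists of $m$ jobs of size $1$ together with $k=\Theta(m)$ jobs of size $2$, with $k$ tuned (for instance $k=m/2$) so that $\OPT(\CJ)=2$ via a balanced offline pairing. The argument rests on two observations. First, any $A$ that is nearly $c$-competitive with $c<3/2$ must, on the degenerate sub-instance $\CJ'$ consisting of the $m$ unit jobs alone (where $\OPT(\CJ')=1$), achieve makespan at most $c+\varepsilon<2$; by integrality this forces $A$ to place the $m$ unit jobs on $m$ distinct machines. This pins down $A$'s deterministic behaviour on any prefix consisting exclusively of unit jobs. Second, by standard hypergeometric tail bounds in the spirit of the Load Lemma, one identifies a set of random orderings of $\CJ$, of probability bounded below by a universal constant $\varepsilon_0>0$ independent of $m$, on which the algorithm is trapped: each of a positive fraction of the size-$2$ jobs lands on a machine already carrying a unit job, producing load $3=(3/2)\OPT(\CJ)$ and hence ratio $\ge 3/2$.

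The hardest step will be transferring the behavioural constraint from the all-unit sub-instance to the actual random-ordered mixed instance $\CJ$, since the clean event that the first $m$ positions of $\CJ^\sigma$ are all unit jobs has probability only $1/\binom{2m}{m}$, which is exponentially small. I expect to proceed via a dichotomy: either $A$ places unit jobs on distinct machines even in mixed prefixes (in which case the bad-event argument above applies whenever a constant fraction of size-$2$ jobs arrive in the second half of the sequence, an event of constant probability by hypergeometric concentration), or $A$ clusters unit jobs on a proper subset of machines (in which case a symmetric random-order event, where the size-$2$ jobs are front-loaded and units tail-end, yields a cluster-induced load on the unit-only suffix violating $c<2$ on that sub-instance). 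Making this dichotomy quantitative and uniform across all deterministic algorithms, while carefully controlling the slack $\varepsilon$ permitted by near competitiveness, is the crux of the proof and echoes the balanced-versus-steep case distinction of Section~\ref{sec:adv}; I anticipate that fine-tuning the ratio $k/m$, or introducing a third job scale, may be required to make both branches simultaneously yield a constant-probability bad event.
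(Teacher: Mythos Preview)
Your proposal overcomplicates the construction and, as a result, gets stuck exactly where you anticipate. The paper's proof avoids your difficulty entirely by using a \emph{single} large job rather than $\Theta(m)$ of them.

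Concretely, the paper takes $\CJ$ to consist of $2m-2$ unit jobs and one job of size~$2$, so $\OPT(\CJ)=2$. Your first observation (that $A$ must spread $m$ unit jobs onto $m$ distinct machines, else it fails already on the unit-only instance) is exactly the case split $r(A,m)=0$ versus $r(A,m)\ge 1$ in the paper. Given $r(A,m)\ge 1$, the bad event is simply that the large job sits in position $>m$; since there is only one large job among $2m-1$ jobs, this event has probability $(m-1)/(2m-1)\to 1/2$. On that event the first $m$ jobs are all units, so every machine has load~$1$ when the large job arrives, forcing makespan $\ge 3=(3/2)\OPT$. Done.

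Your instance with $k=\Theta(m)$ large jobs destroys precisely this: the event ``all large jobs land after position $m$'' has exponentially small probability, as you note. Your proposed dichotomy to repair this is not a proof sketch but a hope. In your Case~1 (``$A$ places units on distinct machines even in mixed prefixes'') you never say where $A$ places the size-$2$ jobs, so you cannot conclude any size-$2$ job lands on a unit-loaded machine; and in Case~2 the phrase ``violating $c<2$ on that sub-instance'' is not meaningful, since near $c$-competitiveness is a guarantee on full inputs, not on suffixes of other inputs. The fix is not to refine the dichotomy or add a third scale; it is to drop back to one large job, which makes the probabilistic part trivial.
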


A basic family of inputs are job sequences that consist of jobs having
an identical processing time of, say,~1. We first analyze them and then
use the insight to derive our lower bounds. Let $m\geq 2$ be arbitrary.
For any deterministic online algorithm $A$, let $r(A,m)$ be the maximum
number
in $\BN\cup\{\infty\}$ such that $A$ handles a sequence consisting of
$r(A,m)\cdot m$ jobs with an identical processing time of~$1$ by scheduling
each job on a least loaded machine.

\begin{lemma}\label{le.reg}
Let $m\geq 2$ be arbitrary. For every deterministic online algorithm
$A$, there exists a job sequence ${\cal J}$ such that
$A^{\rm rom}({\cal J}) \geq (1+\frac{1}{r(A,m)+1}) OPT({\cal J})$. We
use the convention that $\frac{1}{\infty+1}=0$.
\end{lemma}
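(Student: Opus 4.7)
My plan is to use an input consisting of identical unit jobs. If $r(A,m) = \infty$ the asserted inequality reduces to $A^{\rm rom}({\cal J}) \ge OPT({\cal J})$, which is trivial, so I henceforth assume $r := r(A,m) < \infty$ and take ${\cal J}$ to be the sequence of $(r+1)m$ unit jobs. Because all jobs share the same processing time and $A$ is deterministic and online, every permutation $\sigma \in S_n$ produces an indistinguishable execution, so $A^{\rm rom}({\cal J}) = A({\cal J})$. Distributing the jobs evenly across the $m$ machines gives $OPT({\cal J}) = r+1$, so it suffices to prove $A({\cal J}) \ge r+2$.

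By the definition of $r$, the algorithm places the first $rm$ jobs of any unit-job sequence on least loaded machines, and since the jobs have size $1$ this yields a perfectly balanced schedule with load $r$ on every one of the $m$ machines. Invoking the maximality of $r$ on the length-$(r+1)m$ sequence, there is a first time $t^\star \le (r+1)m$ at which $A$ places its job on a non-least-loaded machine; necessarily $t^\star \ge rm+2$, because just before the $(rm+1)$-st placement every machine is still a least loaded one. For each $t \in \{rm+1,\dots,t^\star-1\}$ the job is placed on a machine of load exactly $r$ (the only possible load of a least loaded machine at that moment), so immediately before time $t^\star$ the machine loads take only the values $r$ and $r+1$, with at least one machine of each load present. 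Hence a non-least-loaded machine at time $t^\star$ has load exactly $r+1$, and the unit job placed on it raises its load to $r+2$, giving $A({\cal J}) \ge r+2$.

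Combining these estimates yields $A^{\rm rom}({\cal J}) = A({\cal J}) \ge r+2 = \bigl(1+\tfrac{1}{r+1}\bigr)(r+1) = \bigl(1+\tfrac{1}{r+1}\bigr) OPT({\cal J})$, as required. The only step with any subtlety is the load bookkeeping just before $t^\star$: because $t^\star$ is the \emph{first} deviation from the greedy rule, every intermediate placement has to land on a least loaded machine of current load $r$, which confines all loads to $\{r,r+1\}$ up to time $t^\star$; without this control one could envision $A$ building a steeper schedule that absorbs the non-greedy placement without pushing the makespan to $r+2$, but such a schedule would contradict the minimality of $t^\star$. I do not anticipate any further obstacle.
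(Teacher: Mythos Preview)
Your proof is correct and follows essentially the same approach as the paper: both use the sequence of $(r+1)m$ unit jobs, observe that all permutations are indistinguishable, and exploit the maximality of $r$ to force a placement on a non-least-loaded machine among the last $m$ jobs. Your version is in fact more careful than the paper's, since you explicitly verify that just before the first deviation $t^\star$ all loads lie in $\{r,r+1\}$ with both values present, whereas the paper simply asserts that the resulting makespan is $r+2$.
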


\begin{proof}
For $r(A,m)=\infty$ there is nothing to show. For $r(A)<\infty$,
consider the sequence ${\cal J}$ consisting of $(r(A,m)+1)\cdot m$
identical jobs, each
having a processing time of~$1$. It suffices to analyze the algorithm
adversarially as all permutations of the job sequence are identical.
After having handled the first $r(A,m)\cdot m$ jobs, the algorithm $A$
has a schedule in which every machine has load of $r(A,m)$. By the
maximality
of $r(A,m)$, the algorithm $A$ schedules one of the following $m$ jobs on a machine
that is not a least loaded one. The resulting makespan is $r(A,m)+2$.
The lemma
follows since the optimal makespan is $r(A,m)+1$.
\end{proof}

\begin{proof}[Proof of \Cref{te.lb.1}]
Let $m\geq 8$ be arbitrary. Consider any deterministic online algorithm
$A$. If $r(A,m)\le 2$, then, by \Cref{le.reg}, there exists a sequence
${\cal J}$
such that $A^{\rm rom}({\cal J}) \geq {4\over 3} \cdot OPT({\cal J})$.
Therefore, we may assume that $r(A,m)\ge 3$. Consider the input sequence
$\CJ$ consisting of $4m-4$ identical small jobs of processing time~$1$ and
one large job of processing time~$4$. Obviously $\OPT(\CJ)=4$.

Let $i$ be the number of small jobs preceding the large job in
$\CJ^\sigma$. The random variable $i$ takes any (integer) value between
$0$ and $4m-4$
with probability $\frac{1}{4m-3}$. Since $r(A,m)\ge 3$ the least loaded
machine has load of at least $l=\left\lfloor \frac{i}{m}\right\rfloor$ when
the large job arrives. Thus $A(\CJ^\sigma)\ge l+4$. The load $l$ takes
the values $0$, $1$ and $2$ with probability $\frac{m}{4m-3}$ and the value
$3$ with probability $\frac{m-3}{4m-3}$. Hence the expected makespan of
algorithm $A$ is at least
\[A^\mathrm{rom}(\CJ)\ge\frac{m}{4m-3}\cdot(0+1+2) + \frac{m-3}{4m-3}
\cdot 3 +4=\frac{6m-9}{4m-3} +4  > \frac{16}{3}=\frac{4}{3}\OPT(\CJ).\]
For the last inequality we use that $m\ge 8$.
\end{proof}

\begin{proof}[Proof of \Cref{te.lb.2}]
Let $m\geq 2$ be arbitrary and let $A$ be any deterministic online
algorithm. If $r(A,m)=0$, then consider the sequence ${\cal J}$
consisting of $m$ jobs with a processing time of~1 each. On every
permutation of ${\cal J}$ algorithm $A$ has a makespan of~2, while
the optimum makespan is~1. If $r(A,m)\geq 1$, then consider the sequence
${\cal J}$ consisting of $2m-2$ small jobs having a processing time
of~$1$ and one large job with a processing time of~$2$. Obviously
$OPT(\CJ)=2$. If the permuted sequence starts with $m$ small jobs, the
least
loaded machine has load $1$ once the large job arrives. Under such
permutations $A(\CJ^\sigma)\ge3=\frac{3}{2}\cdot \OPT(\CJ)$ holds true.
The probability of this happening is $\frac{m-1}{2m-1}$. The probability
approaches~$\frac{1}{2}$ and in particular does not vanish,
for $m\rightarrow\infty$. Thus, if $A$ is nearly $c$-competitive, then
$c\geq 3/2$.
\end{proof}

\bibliographystyle{plain}
\let\oldbibliography\thebibliography
\renewcommand{\thebibliography}[1]{%
  \oldbibliography{#1}%
  \setlength{\itemsep}{0pt plus .3pt}
  \setlength{\parsep}{0pt plus .3pt}
   \setlength{\parskip}{0pt plus .3pt}
}

\bibliography{icalp20}


\section*{Appendix}\label{app}
\llI*

\begin{proof}
Fix any proper job sequence $\CJ$. For any ${\cal J}^\sigma$, let $N(\varphi+\varepsilon)[\sigma] = N(\varphi+\varepsilon)[{\cal J}^\sigma]$.
Furthermore, let $\tilde N \left(\varphi+\frac{\varepsilon}{2}\right)[\sigma]$ denote the number of the $m+1$ largest jobs of ${\cal J}$
that appear among the first $\left\lfloor\left(\varphi+\frac{\varepsilon}{2}\right)n\right\rfloor$ jobs in $\CJ^\sigma$.
Then we derive by the inclusion-exclusion principle:
\begin{align*}&\bP_{\sigma\sim S_n}\left[ N(\varphi+\varepsilon)[\sigma] \ge \lfloor \varphi m\rfloor+h+2 \right]\\
\ge &\bP_{\sigma\sim S_n}\left[ \tilde N\left(\varphi+\frac{\varepsilon}{2}\right)[\sigma] \ge \lfloor \varphi m\rfloor+h+2 \textrm{ and } L^{\left\lfloor \left(\varphi+\frac{\varepsilon}{2}\right) n\right\rfloor}[\sigma]< (\varphi+\varepsilon)L\right]\\
\ge &\bP_{\sigma\sim S_n}\left[ \tilde N\left(\varphi+\frac{\varepsilon}{2}\right)[\sigma] \ge \lfloor \varphi m\rfloor+h+2\right]
+\bP_{\sigma\sim S_n}\left[ L^{\left\lfloor \left(\varphi+\frac{\varepsilon}{2}\right) n\right\rfloor}[\sigma]< (\varphi+\varepsilon)L\right]-1.\end{align*}
By the Load Lemma the second summand can be lower bounded for every proper sequence $\CJ$ by a term approaching $1$ as $m\rightarrow\infty$.
Hence it suffices to verify that this is also possible for the term
\[\bP_{\sigma\sim S_n}\left[ \tilde N\left(\varphi+\frac{\varepsilon}{2}\right)[\sigma] \ge \lfloor \varphi m\rfloor+h+2\right].\]
We will upper bound the probability of the opposite event by a term approaching $0$ for $m\rightarrow\infty$. The random variable $\tilde N\left(\varphi+\frac{\varepsilon}{2}\right)[\sigma]$ is hypergeometrically distributed and therefore has expected value
\[E = \frac{\left\lfloor\left(\varphi+\frac{\varepsilon}{2}\right)n\right\rfloor}{n}(m+1) \ge \left(\varphi+\frac{2}{5}\varepsilon\right)(m+1) .\]
Recall that for proper sequences $n>m$ holds. For the second inequality we require $m$ and hence in also~$n$ to be large enough such that $\frac{1}{n} \le \frac{\varepsilon}{10}$ holds. Again, the variable $\tilde N\left(\varphi+\frac{\varepsilon}{2}\right)[\sigma]$ is hypergeometrically distributed and its variance is thus
\[V=\frac{\left\lfloor\left(\varphi+\frac{\varepsilon}{2}\right)n\right\rfloor\left(n-\left\lfloor\left(\varphi+\frac{\varepsilon}{2}\right)n\right\rfloor\right)(m+1)(n-m-1)}{n^2(n-1)}\le m+1.\]
Note that we have for $m$ large enough:
\[\lfloor \varphi m\rfloor+h+2\le \left(1+\frac{1}{5}\varepsilon\right)\varphi (m+1) \le E - \frac{\varepsilon\varphi \sqrt{m+1}}{5}\sqrt V .\]
Hence, using Chebyshev's inequality, we have
\begin{align*}&\bP_{\sigma\sim S_n}\left[\tilde N\left(\varphi+\frac{\varepsilon}{2}\right)[\sigma] < \lfloor \varphi m\rfloor+h+2\right]\\
\le &\bP_{\sigma\sim S_n}\left[E- \tilde N\left(\varphi+\frac{\varepsilon}{2}\right)[\sigma] > \frac{\varepsilon\varphi \sqrt{m+1}}{5}\sqrt V \right]
\\ \le &\frac{25}{\varepsilon^2\varphi^2(m+1)}\end{align*}
and this term vanishes as $m\rightarrow\infty$.
\end{proof}

\letechDts*

\begin{proof}
Let $t=t(s)-1$.
We have $l_j^{t}\ge b$ for $j\le s-1$ as the first $s-1$ machines are full. Considering the load on the machines numbered up to $m-h$ we obtain
\begin{align*}L_{i+1}^{t}&\ge\frac{\sum_{j=i+1}^{s-1}l_j^{t}+(m-h-s+1)\overline{L}_s^{t}}{m-i}\\
&\ge \frac{(s-i-1)b+(m-h-s+1)\alpha^{-1}b}{m-i}\\
&\ge \alpha^{-1}b+\frac{s-i-1}{m-i}(1-\alpha^{-1})b - {h\over m-i} \alpha^{-1}b\\
&= \alpha^{-1}b+\frac{s-i-1}{m-i}\frac{b}{2(c-1)} - {h\over m-i} \alpha^{-1}b.
\end{align*}
If $s>i+1$, the schedule was steep at time $t=t(s)-1$ and hence
\[
l_k^{t} \ge \alpha L_{i+1}^{t} 
> b + \frac{s-i-1}{m-i}\frac{\alpha  \cdot b}{2(c-1)} - {h\cdot b\over m-i}.
\]
Since $l_k^{t}\ge l_i^{t} \ge b$, the previous inequality holds for $s=i+1$, too, no matter whether $J^{(s)}=J^{(i+1)}$ was scheduled flatly or steeply. We hence get, for all $s\ge i+1$,
\begin{align*}
L^t &\ge \frac{kl_k^t+(s-k-1)l_{s-1}^t+(m-h-s+1)\overline{L}^t_s}{m}\\
&> \frac{k\left(b + \frac{s-i-1}{m-i} \cdot \frac{\alpha \cdot b}{2(c-1)}-{h\cdot b\over m-i}\right)+(s-k-1)b+(m-h-s+1)\left(b-\frac{b}{2(c-1)}\right)}{m}\\
&= \left(1+\frac{1}{2(c-1)}\left(\frac{k}{m}\frac{s-i-1}{m-i} \cdot\alpha-\frac{m-s+1}{m}\right)\right) b - 
\left({k\cdot h\over m(m-i)}+ {h\cdot \alpha^{-1}\over m}\right) b.
\end{align*}
In the above difference, we first examine the first term, which is minimized if $s=i+1$. With this setting it is still lower bounded
by \[\left(1-\frac{m-i}{2(c-1)m}\right)b > \left(1-\frac{2(2-c)}{2(c-1)}\right)b\approx 0.8205 \cdot b >\frac{3b}{4}.\]
In the second term of the above difference ${k \over m-i} = {2i-m\over m-i}$ is increasing in $i$, where $i \leq (2c-3)m+h+1$. 
We choose $m$ large enough such that 
$${k\over m-i} \leq {(4c-7)m+2(h+1) \over 2(2-c)m-(h+1)} \leq 1.5.$$ 
There holds $\alpha^{-1} < 0.5$. Thus the second term in the difference is upper bounded by ${2hb\over m}$.

Recall that $b>(c-1+\varepsilon)OPT$. Furthermore, $0<\varepsilon <2-c$ such that $c-1+\varepsilon <1$. Therefore, we obtain
\begin{align*}
L^t &> \left(c-1+\frac{1}{2}\left(\frac{\textbf{k}}{m}\frac{s-i-1}{m-\textbf{i}} \cdot\alpha-\frac{m-\textbf{i}}{m}+\frac{s-i+1}{m}\right)
+\frac{3}{4}\varepsilon -{2h\over m}\right)OPT.
\end{align*}
In the previous term we intentionally highlighted three variables. It is easy to check that if we decrease these variables, the term decreases, too. We do this by setting $\textbf{k}=(4c-7)m$ and $\textbf{i}=(2c-3)m$ (while ignoring the non-highlighted occurrences of $i$). We also assume that $m$ is large enough such that $\frac{\varepsilon}{4}\ge \frac{3h+2}{m}$. Then the previous lower bound on $L^t$ can be brought to the following form:

\begin{align*}
L^t &> \left(2c-3 + \frac{1}{2}\left(\frac{4c-7}{2(2-c)}\cdot\alpha+1\right)\frac{s-i-1}{m}+\frac{h+2}{m}+\frac{\varepsilon}{2}\right)OPT\end{align*}
Using that $\frac{i+1}{m}< 2c-3+\frac{h+2}{m}$ and evaluating the term in front of $\frac{s-i-1}{m}$ we get
\[ L^t >\left(\frac{i+1}{m}+1.0666\cdot\frac{s-i-1}{m}+\frac{\varepsilon}{2}\right)OPT
> \left(\frac{s}{m}+\frac{\varepsilon}{2}\right)OPT. \]
The lemma follows by noting that $OPT \ge L$.
\end{proof}

\end{document}